\newcommand{\SU}{\mathop{\rm SU}}
\newcommand{\Sl}{\mathop{\rm {}SL} }
\newtheorem{theorem}{Theorem}[section]
\newtheorem{lemma}[theorem]{Lemma}
\newtheorem{corol}[theorem]{Corollary}
\newtheorem{prop}[theorem]{Proposition}
\theoremstyle{definition}
\newtheorem{definition}[theorem]{Definition}
\theoremstyle{remark}
\newtheorem{remark}[theorem]{Remark}
\newenvironment{notation and conventions}{\textbf{Notation and conventions.}}{ }
\DeclareFontFamily{U}{rsf}{} \DeclareFontShape{U}{rsf}{m}{n}{ <5> <6> rsfs5 <7> <8> <9> rsfs7 <10-> rsfs10}{}
\DeclareMathAlphabet\Scr{U}{rsf}{m}{n}
\definecolor{pink}{rgb}{1,0,1}
\DeclareSymbolFont{extraup}{U}{zavm}{m}{n}
\DeclareMathSymbol{\varheartsuit}{\mathalpha}{extraup}{86}
\DeclareMathSymbol{\vardiamondsuit}{\mathalpha}{extraup}{87}
\begin{document}

\begin{titlepage}
\begin{center}
\baselineskip=10pt{\LARGE
 $D_5$ elliptic fibrations: non-Kodaira fibers and new orientifold limits of F-theory\\
}
\vspace{2 cm}
{\large  Mboyo Esole$^{\spadesuit,
 \varheartsuit}$ , James Fullwood$^\clubsuit$ and
 Shing-Tung  Yau$^{\spadesuit}$
  } \\
\vspace{.6 cm}
${}^\spadesuit$Department of Mathematics, \ Harvard University, Cambridge, MA 02138, U.S.A.\\
${}^\varheartsuit$Jefferson Physical Laboratory, Harvard University, Cambridge, MA 02138, U.S.A.\\
${}^\clubsuit$Mathematics Department, Florida State University, Tallahassee, FL 32306, U.S.A.\\

\end{center}

\vspace{1cm}
\begin{center}

{\bf Absract}
\vspace{.3 cm}
\end{center}

{\small

A $D_5$ elliptic fibration is a fibration whose generic fiber is modeled by the complete intersection of two quadric surfaces in $\mathbb{P}^3$. They  provide simple examples of elliptic fibrations  admitting a rich spectrum of singular fibers (not all on the list of Kodaira) without introducing singularities in the total space of the fibration and therefore avoiding a discussion of their resolutions. We study systematically the fiber geometry of such fibrations using Segre symbols and compute several topological invariants. 

We present for the first time  Sen's (orientifold) limits for $D_5$ elliptic  fibrations. These orientifolds limit  describe different weak coupling limits of F-theory to type IIB string theory giving  a system of three brane-image-brane pairs in presence of a $\mathbb{Z}_2$ orientifold. The orientifold theory is mathematically described by the double cover the  base of the elliptic fibration. 
Such orientifold theories are characterized by a transition from a semi-stable singular fiber to an unstable one. In this paper, we describe the first example of a weak coupling limit in F-theory characterized by a transition to a non-Kodaira (and non-ADE) fiber. 
Inspired by string dualities, we obtain non-trivial topological relations connecting the elliptic fibration and the different loci that appear in its weak coupling limit. Mathematically, these are very surprising relations relating the total Chern class of the $D_5$ elliptic fibration and those of different loci that naturally appear in the weak coupling limit. We work in arbitrary dimension and are result don't assume  the Calabi-Yau condition.

\vfill

$^\spadesuit$Email:\   {\tt    esole at  math.harvard.edu
,  yau  at math.harvard.edu}
\par
$^\clubsuit$Email:\   {\tt jfullwoo at math.fsu.edu}
}

\end{titlepage}
\addtocounter{page}{1}
 \tableofcontents{}
\newpage

\section{Introduction and summary}

\subsection{F-theory and type IIB string theory}
 Calabi-Yau varieties were first introduced in compactification of string theory to  geometrically engineer ${\mathcal  N}=1$ supersymmetry in four spacetime dimension\cite{CHSW, SW}.  The best understood configurations are perturbative in nature and have a constant value of the axio-dilaton field. 
The axio-dilaton field is  a  complex scalar particle $\tau=C_0+ i e^{-\phi}$ ($i^2=-1$) where the axion $C_0$ is the Ramond-Ramond zero-form of the $D(-1)$-brane (the D-instanton) while the dilaton $\phi$ determines the string coupling $g_s$ via its exponential $g_s=e^{\phi}$. Due to the positivity of the string coupling, the axio-dilaton resides exclusively in the complex upper half-plane.  In type IIB string theory, the S-duality group is the modular group $\Sl(2,\mathbb{Z})$ under which the axio-dilaton field $\tau$ transforms  as the period modulus of an elliptic curve  $\mathbb{C}/ (\mathbb{Z}+\tau\mathbb{Z})$.
{
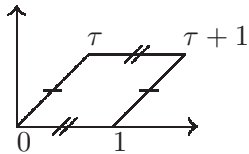
\begin{figure}[thb]
\begin{center}
\setlength{\unitlength}{1.6 mm}
\begin{picture}(15,9)(0,0)
\put(0,0){
\qbezier(0,0)(4,0)(15,0)
\qbezier(0,0)(0,4)(0,10)
\qbezier(0,0)(3,3)(6,6)
\qbezier(6,6)(10,6)(14,6)
\qbezier(8,0)(11,3)(14,6)
\put(15,0){\qbezier(0,0)(-.5,.2)(-.7,.7)
\qbezier(0,0)(-.5,-.2)(-.7,-.7)
}
\multiput(3,3)(8,0){2}{\qbezier(-.7,0)(0,0)(.7,0)}
\multiput(3.7,0)(6,6){2}{\qbezier(-.7,-.7)(0,0)(.7,.7)}
\multiput(4.3,0)(6,6){2}{\qbezier(-.7,-.7)(0,0)(.7,.7)}
\put(5,6.8){ \text{\small $\tau$}}
\put(13,6.8){ \text{\small $\tau+1$}}
\put(8,-2){\text{\small $1$}}
\put(0,-2){\text{\small $0$}}
\put(0,10){\qbezier(0,0)(.2,-.5)(.7,-.7)\qbezier(0,0)(-.2,-.5)(-.7,-.7)}
}
\end{picture}
\caption{A torus seen as the quotient $\mathbb{C}/ (\mathbb{Z}+\tau \mathbb{Z})$.}
\end{center}
\end{figure}
}

F-theory \cite{Vafa:1996xn,Morrison:1996na,Morrison:1996pp,Bershadsky:1996nh} provides a description of compactifications of type IIB string theory on  non-Calabi-Yau varieties $B$ endowed with  a varying axio-dilaton field. 
The power of F-theory is that it elegantly encapsules  non-perturbative aspects of  type IIB string theory compactified on a space $B$ using the mathematics of elliptic fibrations over $B$ to describe the variation of the axio-dilaton field and the action of S-duality. 
As such, type IIB string theory compactified on a space  $B$ with a varying axio-dilaton  is geometrically engineered in  F-theory  by an elliptically fibered space  $\varphi:Y\rightarrow B$. 
When the  base of the fibration is  of
complex dimension $d$, it corresponds to a compactification  to a $(10-2d)$ real dimensional space-time $\mathscr{M}_{10-2d}$. 
The most common cases studied in the literature are compactifications  to six and four spacetime dimensions and they are described respectively in F-theory by elliptic threefolds and fourfolds. 

The non-vanishing first Chern class of the compact space $B$ is balanced by the presence of $(p,q)$ 7-branes\footnote{ A $(p,q)$ 7-brane is a dynamical brane extended in seven space dimensions and characterized by the fact that  $(p,q)$ strings (bounds states of $p$ fundamental strings and $q$ D1 branes with $p$ and $q$ relatively prime integer numbers) can end on it. A $(1,0)$ 7-brane is the usual D7-brane of perturbative string theory while the other $(p,q)$ 7-branes are non-perturbative solitonic branes that can  be obtained from a D7-brane  by the action of S-duality, which in type IIB is the modular group $\Sl(2,\mathbb{Z})$.} wrapping non-trivial divisors of $B$ so that  supersymmetry is preserved after the compactification. The  presence of $(p,q)$ 7-branes induces    non-trivial $\Sl(2,\mathbb{Z})$ monodromies of the axio-dilaton field for which 7-branes are magnetic sources. 
Although the compactification space $B$ seen by type IIB  is not  Calabi-Yau, the total space $Y$ of the elliptic fibration $\varphi:Y\rightarrow B$ is required to be  Calabi-Yau \cite{Vafa:1996xn}. This is most naturally seen using the M-theory picture of F-theory\footnote{ M-theory  compactified on an elliptic fibration $\varphi:Y\rightarrow B$ to a spacetime $\mathscr{M}_{9-2d}$ is dual to type IIB compactified on the base $B$ of the elliptic fibration to a spacetime $S^1 \times\mathscr{M}_{9-2d}$ with non-trivial three-form field strength on $B\times S^1$. The radius of the circle $S^1$ being inversely proportional to the area of the elliptic fiber.  As  we take the limit of zero area, we end up with type IIB string theory on  $B\times \mathscr{M}_{10-2d}$.  }. From the type IIB perspective, we would also like the fibration to  admit a section $s:B\rightarrow Y$ so that the compactification  space $B$  is unambiguously  identified within the elliptic fibration itself:
$$
\xymatrix{
T^2\ \ar[r] &Y\ar[d]^{\varphi}\\
& B \ar@/^/[u]^{s  }
}
$$
The existence of a section is not necessary from the point of view of M-theory. For a review of F-theory, we refer to \cite{Denef:2008wq}. 
The singular fibers of the elliptic fibration play an essential role in the dictionary between physics and geometry \cite{Vafa:2009se}. For example, one can use elliptic fibrations to geometrically engineer sophisticated gauge theories with matter representations and Yukawa couplings all specified by the geometry of the elliptic fibration \cite{Vafa:1996xn, Morrison:1996na, Bershadsky:1996nh, GrassiMorrison2}.  

\subsubsection*{F-theory and the mathematics of elliptic fibrations}

From a mathematical point of view, F-theory provides a fresh perspective on the geometry of elliptic fibrations with a rich inflow of new problems, conjectures and perspective inspired by  physics. However, these questions can be attacked purely mathematically and open new ways to think of elliptic fibrations connecting it to representation theory and other area of mathematics. For example, the duality between F-theory and the Heterotic string has motivated the study of principle holomorphic $G$-bundles over elliptic fibrations by Freedmann-Morgan-Witten \cite{FMW1, FMW2}. Since the work of Kodaira on elliptic surfaces \cite{Kodaira}, it is well appreciated that ADE-like Dynkin diagrams appear as singular fibers of an elliptic fibration over codimension-one loci in the base. F-theory associates to these ADE diagrams  specific gauge theories living on branes wrapped around the location of the singular fibers in the base \cite{Vafa:1996xn,Morrison:1996na,Bershadsky:1996nh}. Non-simply-laced Lie groups also appear naturally once we consider the role of monodromies and distinguish between split and non-split singular fibers \cite{Bershadsky:1996nh}. When the base of the fibration is higher dimensional, matter representations are naturally associated with certain loci in codimension-two in the base over which singular fibers enhance \cite{GrassiMorrison2}. 
An analysis on the condition for anomaly cancellations of the gauge theories described in F-theory leads to surprising relations involving representations of the gauge group and the topology of the Chow ring of the elliptic fibration \cite{GrassiMorrison2}. 

The description of phenomenological models in F-theory, like for example the $\SU(5)$ Grand Unified model \cite{Vafa:2009se}, has motivated the study of elliptic fibrations that admit a discriminant locus with wild singularities and a rich structure of singular fibers that enhance to each other as we consider higher codimension loci in the discriminant locus \cite{EY}. Such enhancements  often violate  the conditions that will typically be required by mathematicians studying elliptic fibrations \cite{Nakayama.Global, Miranda, Szydlo}. For  example, the discriminant locus of the elliptic fibration is usually not supposed to be a divisor with normal crossing \cite{EY} and non-flat fibrations can lead to very interesting physics like for example the presence of massless stringy objects \cite{Codim3}.  
With the appearence of non-Kodaira fibers in elliptic fibration over a higher dimensional base \cite{EY}, the dictionary between singular fibers and physics have to made more precised \cite{MorrisonTaylor,Marsano:2011hv, GrassiMorrison2}. Under the more general conditions considered in physics, there is not yet a classification of the possible singular fibers of a higher dimensional  elliptic fibration. See  \cite{EY}, for more information.

The connection between F-theory and its type IIB weak coupling limit  uncovers interesting geometric relations involving the elliptic fibrations and a double cover of its base.  Sen has shown that the weak coupling limit of F-theory can be an orientifold theory\cite{Sen.Orientifold}. Sen's construction  is mathematically described by  certain degenerations of the elliptic fibration organized by  transitions from semi-stable to unstable singular fibers \cite{AE2}.  The presence of charged objects in a compact space leads to cancellation relations in physics  known as tadpole conditions. These tadpole conditions are sophisticated version of the familiar Gauss theorem in electromagnetism that ensures that the total charge in a compact space is zero. Using dualities between F-theory and type IIB string theory,  tadpole relations will induce  non-trivial relations  between the topological invariants of different  varieties that appear in the description of Sen's weak coupling limit \cite{CDE,AE1,AE2}. This has motivated the introduction of a new Euler characteristic inspired by string dualities to deal with some of the singularities\cite{CDE,AE1,AE2}.

\subsection{Synopsys}

We would like to explore the physics of the weak coupling limit of F-theory in presence of non-Kodaira fibers. One easy way to do it without dealing with resolution of singularities to generate non-Kodaira fibers is to start with certain families of elliptic fibrations that naturally admit such fibers. In this way we will be able to provide the first example of a weak coupling limit of F-theory involving non-Kodaira fibers. 

\subsubsection*{$D_5$ elliptic fibrations}
In this article, we  continue the work started in \cite{AE2} and explore different aspects of elliptic fibrations beyond the realm of Weierstrass models. Non-Weierstrass models provide new ways of describing the strong coupling limit of certain non-trivial type IIB orientifold compactifications with brane-image-brane pairs by embedding them in F-theory. We consider elliptic fibrations whose generic fiber is an elliptic curve modeled by the complete intersection of two quadric surfaces in $\mathbb{P}^3$. Such fibrations are referred to in the physics literature as {\em $D_5$ elliptic fibrations} \cite{KLRY,Andreas:1999ty,KMT,CHL}. An equivalent description of the  generic fiber of a $D_5$ elliptic fibration is to see it as the  base locus of  a {\em pencil of quadrics} in $\mathbb{P}^3$. This little change of perspective provides powerful tools to describe the singular fibers of $D_5$ elliptic fibrations, since pencils of quadrics are  naturally classified by {\em  Segre symbols} as we will review in section \ref{Section.Segre}.  We therefore classify  all the singular fibers of a smooth $D_5$ elliptic fibration using Segre symbols. Singular fibers of an elliptic surface were classified by Kodaira and are described by Kodaira symbols.
In the context of $D_5$ elliptic fibrations, Segre symbols provide a finer description of the singular fibers than symbols of Kodaira since they detect the  degree of each of the components of a given singular fiber. In the study of $D_5$ elliptic fibrations, the geometric objects at play  are very classical: quadric surfaces, conics, twisted cubics and elliptic curves. It follows that the study of $D_5$ elliptic fibrations is reduced to a promenade in the garden of 19th century Italian school of algebraic geometry where we can pick up all the ingredients we need.

For $D_5$ elliptic fibrations, non-Kodaira singular fibers appear innocently  without introducing singularities in the total space and so avoiding any resolutions of singularities. We will explore the physical relevance of these non-Kodaira singular fibers from the point of view of the weak coupling limit of F-theory\cite{Sen.Orientifold, AE2}. We will analyze some degenerations of these fibrations and deduce non-trivial topological relations. The degeneration we obtain describes a theory of an orientifold  with three brane-image-brane pairs, two of which are in the same homology class as the orientifold. The cancellation of the D3 tadpole provides a non-trivial relations between Euler characteristic of the elliptic fibration and the Euler characteristic of the divisors corresponding to the orientifold and the brane-image-brane pairs. We will prove that the same relation holds at the level of the total Chern class of these loci.  We will see that the non-Kodaira fibers indicate a certain regime in which the orientifold and the two brane-image-brane pairs that are in its  homology class coincide.

One might think that F-theory leads only to mathematical results for  Calabi-Yau elliptic  fourfolds and threefolds since these are the usually varieties in which F-theory is relevant physically. Strikingly enough, many of the insights gained on the structure of elliptic fibrations coming from F-theory are true  without any assumptions on the dimension of the base and without assuming the Calabi-Yau condition \cite{FMW1, AE1, AE2, EY, GrassiMorrison2}, providing yet another example of why string theory is a source of inspiration for geometers. Therefore, although our considerations are  inspired by F-theory,    we will not restrict ourself to Calabi-Yau elliptic fourfolds or threefolds but will work with elliptic fibrations over a base of arbitrary dimension and without assuming the Calabi-Yau condition. We will provide rigourous mathematical  proof to all the  geometric and  topological statements inspired from F-theory.

\subsubsection*{An historical note on pencils of quadrics and Segre symbols}
The classification of pencils of quadrics is indeed a classic among the classics with contributions from several great mathematicians:
everything we need was elegantly presented in Segre's thesis on quadrics \cite{Segre}, he  introduced the modern notation (Segre symbols) in his classification of collineations and emphasised the geometrical ideas behind the classification;   the main algebraic concepts  (elementary divisors,normal forms) were developed in the context of the theory of determinants by   Weierstrass and  other members of the Berlin school  ( Kronecker, Frobenius); several of their  results  were obtained earlier by  Sylvester but in a less general and systematic way;  Sylvester classified nonsingular pencils of conics and quadric surfaces.  The modern reference on the classification of pencils of quadrics  is chapter XIII of the second volume of the book by Hodge and Pedoe \cite{HodgePedoe}.  More recently, Dimca has obtained a geometric interpretation of the classification of quadrics based on the geometry of determinantal varieties and their singularities \cite{Dimca}.

\subsection{Weierstrass models in F-theory}

Following its early founding papers \cite{Vafa:1996xn,Morrison:1996na,Bershadsky:1996nh},  in F-theory, elliptic fibrations are
traditionally studied  using Weierstrass models, i.e., a hypersurface in a $\mathbb{P}^2$-bundle over the Type-IIB base $B$ which in its reduced form is defined as the zero-scheme associated with the locus

\[
y^2z=x^3+fxz^2+gz^3,
\]
where $f$ and $g$ are sections of appropriate tensor powers of a line bundle $\mathscr{L}$ on $B$.
A smooth Weierstrass model admits a unique section and only two types of singular fibers: nodal curves (Kodaira fiber of type $I_1$) and cuspidal curves  (Kodaira fiber of type $II$).
It follows that in the world of Weierstrass models, singularities in the total space of the fibration  must be introduced to allow more interesting singular fibers to appear. Such singularities are necessary for example to describe non-Abelian gauge groups.
Any elliptic fibration endowed with a smooth section is birationally
equivalent to a (possibly) singular Weierstrass model \cite{Formulaire},
which might give the impression that one does not need to leave the world of  Weierstrass models when working on F-theory since
the section identifies the space $B$ as seen by type IIB string theory. However, there is much value in exploring non-Weierstrass
models in F-theory since the physics of F-theory is not invariant under birational transformations.
Not only that, but the M-theory approach to F-theory doesn't even require a section, so even elliptic fibrations
without a sections are physically relevant.
F-theory with  discrete fluxes and/or torsion can be naturally introduced by considering other models of elliptic curves than Weierstrass models\cite{KMT,CHL}. This usually requires a non-trivial Mordell-Weil group \cite{AspinwallMorrison} and can also be analyzed by considering special Weierstrass models, but their expressions are usually complicated.
For Weierstrass models, singular fibers over codimension-one loci in the base can be described using Tate's algorithm without
performing a systematic desingularization. The resulting fibers are those classified by Kodaira for singular fibers of an elliptic surface. Singular fibers above higher codimension loci are not
necessarily on Kodaira's list \cite{Miranda,Szydlo,MorrisonTaylor,EY,CCVG} and can even have components that jump in dimension \cite{Miranda,Codim3}.
The resolution of singularities of a Weierstrass model is not unique and different resolutions of the same singular Weierstrass model can have different types of singular fibers in higher codimension in the base\cite{Miranda,Szydlo,EY}.
Recently, this was shown to occur even for popular physically relevant models such as the $\SU(5)$ GUT \cite{EY}.
Considering other models of elliptic fibrations other than Weierstrass models allow the convenience of a rich spectrum of singular fibers without introducing singularities in the total space of the elliptic fibration \cite{AE2}. In this way, we can have F-theory descriptions of certain non-Abelian gauge theories without introducing any singularities into the total space of the fibration. The spectrum of singular fibers can be determined without any ambiguity. As explained in \cite{AE2}, elliptic fibrations not in Weierstrass form naturally admit weak coupling limits as well (analogous to Sen's orientifold limit of Weierstrass models), and provide descriptions of systems of seven-branes admitting a type IIB weakly coupled regime which consisting of super-symmetric brane-image-brane configurations that would be challenging to describe  in the traditional Weierstrass model approach to F-theory\cite{AE2}.

\begin{table}[hbt]
\begin{center}
\begin{tabular}{|c|c|c| c | l  |c|c|}
\hline
{\small Type }&  ${\rm ord}(F)$ & ${\rm ord}(G)$ & ${\rm ord}(\Delta)$

                                                                                          & $j$  &

                                                                                          {\footnotesize Monodromy}                & Fiber \\
                               \hline
$I_0$ & $\geq 0$  & $\geq 0$ &  {\small $0$} &
{\small
$\mathbb{R}$
}
 & $\mathrm{I}_2$ & Smooth torus \\
\hline
$I_1$ & $0$ & $0$ &  {\small $1$}  & $\infty$ &
$
\begin{pmatrix}
1& 1\\
0 & 1
\end{pmatrix}
$

 &
{  \setlength{\unitlength}{1 mm}
\begin{picture}(10,10)(-16,-4)
\put(-5,0){
\qbezier(-5,0)(4,6)(5,0)
\qbezier(-5,0)(4,-6)(5,0)
}
\put(-10,0){\qbezier(0,0)(-1,-1)(-3,-3)
\qbezier(0,0)(-1,1)(-3,3)
\put(-26,0){\footnotesize (Nodal curve)}
}
\end{picture}}

\\
\hline
$I_n$ &{\small $0$} &  {\small $0$} & {\small $n>1$ }  & $\infty$ &
{\small $
\begin{pmatrix}
1& n\\
0 & 1
\end{pmatrix}
$
}
 &
{  \setlength{\unitlength}{.9 mm}
\begin{picture}(35,13)(-5,-2)
\put(-6,6){\circle{4}}
\put(-1,1){\line(-1,1){4}}
\put(0,0){\circle{4}}
\put(2.2,0){\line(1,0){2}}
\put(6.4,0){\circle{4}}
\put(8,0){\line(1,0){2}}
\put(12,0){\circle{4}}
\put(14,0){\line(1,0){1}}
\put(16,0){\line(1,0){1}}
\put(18,0){\line(1,0){1}}
\put(20,0){\line(1,0){1}}
\put(22,0){\circle{4}}
\put(23,2){\line(1,1){4}}
\put(28,7){\circle{4}}
\put(-6.6,5.3){\tiny{$1$}}
\put(-.5,-.5){\tiny{$1$}}
\put(5.9,-.5){\tiny{$1$}}
\put(12,-.5){\tiny{$1$}}
\put(21.4,-.5){\tiny{$1$}}
\put(-.5,6.6){\tiny{$1$}}
\put(5.9,6.6){\tiny{$1$}}
\put(12,6.6){\tiny{$1$}}
\put(27.4,6.6){\tiny{$1$}}

\put(0,7){\circle{4}}
\put(2.2,7){\line(1,0){2}}
\put(6.4,7){\circle{4}}
\put(8,7){\line(1,0){2}}
\put(12,7){\circle{4}}
\multiput(14,7)(2,0){6}{\line(1,0){1}}
\put(-4,7){\line(1,0){2}}
\end{picture}}
 \\
\hline
$II$ & $\geq 1$ &  $1$ & $2$   &$0$&
{\small $
\begin{pmatrix}
1& 1\\
-1 & 0
\end{pmatrix}
$

}
&
{  \setlength{\unitlength}{1 mm}
\begin{picture}(10,10)(-10,-4)
\put(0,0){\qbezier(0,0)(2.7,.7)(5,5)
\qbezier(0,0)(2.7,-.7)(5,-5)
\put(-30,-.7){\footnotesize {Cuspidial curve}   }
}
\end{picture}}
 \\
\hline
$III$ & {\small$1$ } & {\small $\geq 2$ }  & {\small  $3$} &  {\small $1$} &
{\small $
\begin{pmatrix}
0 & 1\\
-1 & 0
\end{pmatrix}
$
}
  &
{  \setlength{\unitlength}{.7 mm}
\begin{picture}(10,10)(-2,-2.5)
\put(0,0){
\qbezier(0,5)(4.242,4.242)(5,0)
\qbezier(0,-5)(4.242,-4.242)(5,0)}
\put(10,0){\qbezier(0,5)(-4.242,4.242)(-5,0)
\qbezier(0,-5)(-4.242,-4.242)(-5,0)
}
\end{picture}}
 \\
\hline
$IV$ &  {\small $\geq 2$} & {\small $2$} & {\small $4$}  &    {\small $0$} &
{\small $
\begin{pmatrix}
0 & 1\\
-1 & -1
\end{pmatrix}
$
}
 &
{  \setlength{\unitlength}{1 mm}
\begin{picture}(10,10)(-5,-2)
\put(0,0){\qbezier(-2.5,-4.33)(0,0)(2.5,4.33)
\qbezier(-2.5,4.33)(0,0)(2.5,-4.33)
\qbezier(-5,0)(0,0)(5,0)
}
\end{picture}}
\\
\hline
$I^*_n$ & {\small $2$} & {\small $\geq 3$ } & {\small $n+6$} & {\small  $\infty$} &
{\small $
\begin{pmatrix}
-1& -b\\
0 &- 1
\end{pmatrix}
$
}
  &
{  \setlength{\unitlength}{.9 mm}
\begin{picture}(35,13)(-5,-2)
\put(-6,6){\circle{4}}
\put(-6,-6){\circle{4}}
\put(-1,-1){\line(-1,-1){4}}
\put(-1,1){\line(-1,1){4}}
\put(0,0){\circle{4}}
\put(2.2,0){\line(1,0){2}}
\put(6.4,0){\circle{4}}
\put(8,0){\line(1,0){2}}
\put(12,0){\circle{4}}
\put(14,0){\line(1,0){1}}
\put(16,0){\line(1,0){1}}
\put(18,0){\line(1,0){1}}
\put(20,0){\line(1,0){1}}
\put(22,0){\circle{4}}
\put(23,-1){\line(1,-1){4}}
\put(23,2){\line(1,1){4}}
\put(28,7){\circle{4}}
\put(28,-6){\circle{4}}
\put(-6.6,5.3){\tiny{$1$}}
\put(-6.6,-6.4){\tiny{$1$}}
\put(-.5,-.5){\tiny{$2$}}
\put(5.9,-.5){\tiny{$2$}}
\put(12,-.5){\tiny{$2$}}
\put(21.4,-.5){\tiny{$2$}}
\put(27.4,6.6){\tiny{$1$}}
\put(27.4,-6.6){\tiny{$1$}}
\end{picture}}
\\*
 \cline{2-4}  &  {\small $\geq 2$} &{\small  $3$} &{\small  $n+6$} &  
& &  \\
\hline
$IV^*$ & {\small  $\begin{matrix}\\   \geq 3\\  \\  \end{matrix}$} &{\small  $4$ }& {\small $8$ }&    {\small $0$
}&
{\small $
\begin{pmatrix}
-1& -1\\
1 & 0
\end{pmatrix}
$
}

&
\setlength{\unitlength}{1 mm}
\begin{picture}(25,16)(2,-10)
\put(0,0){\circle{4}}
\put(2.2,0){\line(1,0){2}}
\put(6.4,0){\circle{4}}
\put(8.6,0){\line(1,0){2}}
\put(12.6,0){\circle{4}}
\put(14.6,0){\line(1,0){2}}
\put(18.8,0){\circle{4}}
\put(21,0){\line(1,0){2}}
\put(25.2,0){\circle{4}}
\put(12.6,-6.2){\circle{4}}
\put(12.6,-12.4){\circle{4}}
\put(12.6,-8.4){\line(0,-1){2}}
\put(12.6,-2){\line(0,-1){2}}
\put(12.2,-13.4){\tiny{$1$}}
\put(24.7,-.5){\tiny{$1$}}
\put(12.2,-.5){\tiny{$3$}}
\put(-.5,-.5){\tiny{$1$}}
\put(5.9,-.5){\tiny{$2$}}
\put(18.3,-.5){\tiny{$2$}}
\put(12.2,-6.4){\tiny{$2$}}
\end{picture}

 \\
\hline
$III^*$ &
{\small  $\begin{matrix}\\   3\\   \\  \end{matrix}$ }& {\small  $\geq 5$} & {\small $9$}&  {\small $1$}
& {\small  $
\begin{pmatrix}
0& -1\\
1 & 0
\end{pmatrix}
$}
 &
\setlength{\unitlength}{1 mm}
\begin{picture}(35,10)(4,-4)
\put(0,0){\circle{4}}
\put(2.2,0){\line(1,0){2}}
\put(6.4,0){\circle{4}}
\put(8.6,0){\line(1,0){2}}
\put(12.6,0){\circle{4}}
\put(14.6,0){\line(1,0){2}}
\put(18.8,0){\circle{4}}
\put(21,0){\line(1,0){2}}
\put(25,0){\circle{4}}
\put(31.4,0){\circle{4}}
\put(37.8,0){\circle{4}}
\put(27.2,0){\line(1,0){2}}
\put(33.6,0){\line(1,0){2}}
\put(18.8,-6.2){\circle{4}}
\put(18.8,-2){\line(0,-1){2}}
\put(24.7,-.5){\tiny{$3$}}
\put(12.2,-.5){\tiny{$3$}}
\put(-.5,-.5){\tiny{$1$}}
\put(5.9,-.5){\tiny{$2$}}
\put(18.3,-.5){\tiny{$4$}}
\put(31,-.5){\tiny{$2$}}
\put(37,-.5){\tiny{$1$}}
\put(18.3,-7){\tiny{$2$}}
\end{picture}
\\
\hline
$II^*$ & {\small  $\begin{matrix}\\  \geq 4\\  \\    \end{matrix}$} & {\small  $5$} & {\small  $10$} & {\small  $0$ }&
{\small
$
\begin{pmatrix}
0& -1\\
1 & 1
\end{pmatrix}
$
}
&
\setlength{\unitlength}{.9 mm}
\begin{picture}(50,12)(-2,-6)
\put(0,0){\circle{4}}
\put(2.2,0){\line(1,0){2}}
\put(6.4,0){\circle{4}}
\put(8.6,0){\line(1,0){2}}
\put(12.6,0){\circle{4}}
\put(14.6,0){\line(1,0){2}}
\put(18.8,0){\circle{4}}
\put(21,0){\line(1,0){2}}
\put(25,0){\circle{4}}
\put(31.4,0){\circle{4}}
\put(37.8,0){\circle{4}}
\put(44,0){\circle{4}}
\put(27.2,0){\line(1,0){2}}
\put(33.6,0){\line(1,0){2}}
\put(40,0){\line(1,0){2}}
\put(31.4,-6.2){\circle{4}}
\put(31.4,-2){\line(0,-1){2}}
\put(24.7,-.5){\tiny{$5$}}
\put(12.2,-.5){\tiny{$3$}}
\put(-.5,-.5){\tiny{$1$}}
\put(5.9,-.5){\tiny{$2$}}
\put(18.3,-.5){\tiny{$4$}}
\put(31,-.5){\tiny{$6$}}
\put(37,-.5){\tiny{$4$}}
\put(43.5,-.5){\tiny{$2$}}
\put(31,-7){\tiny{$3$}}
\end{picture}

\\
\hline
\end{tabular}
\end{center}
\caption{ {  \footnotesize{\bf Kodaira Classification of  singular fibers of an elliptic fibration.}
The fiber of type $I^*_0$ is special among  its family $I^*_n$ because its $j$-invariant can take any value  in $\mathbb{C}$. The $j$-invariant of a   fiber of type $I_n$ or $I^*_n$ ($n>0$) has a pole of order $n$. }
}\label{table.KodairaTate}
\end{table}

\subsection{Other models of elliptic fibrations}
Since the physics of F-theory is not invariant under birational transformations, we would like to broaden our horizons and explore the landscape of F-theory beyond that of Weierstrass models. Our starting point is to consider the following families of elliptic curves \cite{Husemoller}:
\begin{align}\nonumber
\begin{tabular}{rll}
Weierstrass  cubic :& $zy^2=x^3+ f x z + g z^3$ & in $\quad\mathbb{P}^2$ \\
Legendre cubic: & $z y^2=x(x-z)(x-f z)$ & in $\quad\mathbb{P}^2$ \\
Jacobi quartic:&$y^2= x^4+ f x^2 z^2+  z^4 $  &  in $\quad\mathbb{P}^2_{1,2,1}$ \\
Hesse cubic: &$y^3+x^3-z^3- d  x y z =0$ & in $\quad\mathbb{P}^2$ \\
Jacobi intersection: & $ x^2-y^2- z^2=w^2- x^2 - d z^2=0$ &  in  $\quad\mathbb{P}^3$
\end{tabular}
\end{align}
where  $[x:y:z]$  (resp. $[x:y:z:w]$) are projective coordinates of a (weighted) $\mathbb{P}^2$ (resp. $\mathbb{P}^3$).
The coefficients ($f,g,d)$ in the equations above are scalars which we can interpret as sections of a line bundle over a point.
We then construct elliptic fibrations by promoting the coefficients of a particular family of elliptic curves to sections of line bundles over a base variety $B$.
We then consider the following normal forms of elliptic fibrations associated with the families of elliptic curves listed above \cite{KMT,CHL,KLRY,AE2}:
\begin{align}\nonumber
\begin{tabular}{llll}
$E_8$: & $y^2z=x^3+ f x z + g z^3$ & in & $\mathbb{P}[\mathscr{O}\oplus \mathscr{L}^2\oplus \mathscr{L}^3]$ \\
$E_7$: &$y^2= x^4+ f x^2 z^2+ g x z^3 + e z^4\quad $& in & $ \mathbb{P}_{1,1,2}[\mathscr{O}\oplus \mathscr{L}\oplus \mathscr{L}^2]$ \\
$E_6$:&$y^3+x^3=z^3+ d x y z +e x z^2 +f  y z^2 +g z^3$& in & $  \mathbb{P}[\mathscr{O}\oplus \mathscr{L}\oplus \mathscr{L}]$ \\
$D_5$: &$x^2-y^2- z( az+cw)=w^2-  x^2  -z (d z+e x+f y) =0 $ & in & $\mathbb{P}[\mathscr{O}\oplus \mathscr{L}\oplus \mathscr{L}\oplus \mathscr{L}]$
\end{tabular}
\end{align}
The $E_8$ family is the usual Weierstrass model.  A more general form of the Weierstrass model (the Tate form), will have the  fibration obtained from the Legendre family as a specialization.  The $E_7, E_6$ and $D_5$ elliptic fibrations are respectively obtained from generalizations of the Jacobi quartic,  the Hesse cubic and the Jacobi intersection form.
By promoting the scalar coefficients to sections of line bundles over a positive dimensional base variety $B$, we allow more  ``room" for \emph{singular} fibers to appear, and a richer
geometry naturally emerges. The $E_7$, $E_6$, and $D_5$ fibrations are all birationally equivalent to a singular Weierstrass model and the corresponding birational map is an isomorphism away from the locus of singular fibers.
Each model differs by the number of rational sections and the type of singular fibers it admits.
This $E_n$ nomenclature follows \cite{KMV,KLRY,AE2} and is based on an  analogy with del Pezzo surfaces\footnote{A del Pezzo surface of degree $d$ admits  $(-1)$-curves that define a root lattice of type $E_{9-d}$. A del Pezzo surface  of degree $d$ can be embedding in a projective space $\mathbb{P}^d$ as a surface of degree $d$. An hyperplane will cut such a del Pezzo surface along  an elliptic curve expressed as a degree $d$ curve.  A cone over an elliptic curve of type $E_n$ will have an elliptic  singularity of type $\tilde{E}_n$. A del Pezzo surface of degree $1,2$ and $3$ can be expressed as an hypersurface in a weighted projective $\mathbb{P}^3$ while a del Pezzo surface of degree $4$ can be expressed as a complete intersection of two quadric hypersurfaces in $\mathbb{P}^4$. The intersection with a
hyperplane gives the model discussed above.    The $E_8$ family is the usual Weierstrass model and the  $D_5$ family corresponds to $E_5=D_5$. }.
All these fibrations have been analyzed in \cite{AE2} with the exception of the  $D_5(\simeq E_5)$ elliptic fibration. By direct inspection of the results of \cite{AE2}, we observe the following:
\begin{prop}[Fiber geometry of $E_8$, $E_7$ and $E_6$ elliptic fibrations]
A general $E_{9-n}$ ($n=1,2,3$) elliptic fibration admits   $n$ sections and its spectrum of singular fibers contains  $2n$ different singular fibers, which are all the  Kodaira fibers composed of at most  $n$ irreducible rational curves.
\end{prop}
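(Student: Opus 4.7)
The plan is to verify the statement by direct analysis of each of the three normal forms for $n=1,2,3$. For each $E_{9-n}$ model, I would first identify the rational sections by restricting the defining equation to the divisor $\{z=0\}$: this yields the unique point $[0:1:0]$ for the $E_8$ model (one section), the two points $y=\pm x^2$ for the $E_7$ Jacobi quartic (two sections), and the three points on $\{x^3+y^3=0\}\subset\mathbb{P}^1$ corresponding to the cube roots of $-1$ for the $E_6$ Hesse-type cubic (three sections). In each case the resulting global sections are disjoint for generic coefficients, so the fibration admits exactly $n$ sections as claimed.

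Second, I would compute the discriminant and check that over its generic (codimension-one) locus the fibers are irreducible, hence of Kodaira type $I_1$ (nodal) or $II$ (cuspidal). The enhancement of these fibers in higher codimension then produces the reducible types. For $E_8$, the plane cubic $y^2z=x^3+fxz^2+gz^3$ cannot decompose further without singularising the total space, so no reducible fibers arise. For $E_7$, the quartic $y^2=x^4+fx^2z^2+gxz^3+ez^4$ in $\mathbb{P}_{1,1,2}$ splits either as two distinct conics meeting at two points (giving $I_2$) or tangentially at one point (giving $III$). For $E_6$, the plane cubic decomposes either as a line plus an irreducible conic ($I_2$ or $III$ depending on transversality) or as three lines in general position ($I_3$) or concurrent at a point ($IV$).

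Exhaustiveness must be checked in two directions. From above, since the $n$ sections meet every fiber in distinct points, they must lie in distinct irreducible components of any reducible fiber; hence a fiber of this $E_{9-n}$ fibration can have at most $n$ rational components. Combined with the smoothness of the total space for generic coefficients, which forbids the Kodaira types $I_n^{*}$, $II^{*}$, $III^{*}$, $IV^{*}$ (whose appearance demands singularities of the total space) and also forbids $I_m$ or $IV$-like fibers with more than $n$ components, this rules out every Kodaira fiber outside the asserted list of $2n$ types.

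The main obstacle is the converse direction: verifying that each of the $2n$ candidate Kodaira fibers is \emph{actually} realised somewhere in the base of a generic fibration of sufficiently large dimension. For each candidate I would write down the algebraic conditions on the sections $f,g,\dots$ of the appropriate powers of $\mathscr{L}$ that force the corresponding decomposition of the fiber, and verify by a parameter count that these loci are of the expected codimension and non-empty for a generic choice of $\mathscr{L}$ over a sufficiently high-dimensional base $B$. As the authors indicate, all of this can be extracted by direct inspection of the explicit fiber stratifications already carried out in \cite{AE2}, so the proof reduces to assembling those computations into the uniform statement above.
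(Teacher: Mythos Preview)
The paper does not actually prove this proposition: it is stated as an observation extracted ``by direct inspection of the results of \cite{AE2}''. Your plan to verify it model by model is therefore more explicit than anything the paper offers, and the identification of the sections and the case-by-case list of degenerations is essentially correct.

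There is, however, a genuine logical gap in your exhaustiveness argument. You write that the $n$ sections ``must lie in distinct irreducible components of any reducible fiber; hence a fiber \ldots\ can have at most $n$ rational components.'' This inference is inverted. Even if the $n$ sections did land on $n$ distinct components, that would give a \emph{lower} bound on the number of components, not an upper bound; and nothing prevents, a priori, two sections from hitting the same component. The number of sections does not bound the number of fiber components from above.

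The correct upper bound comes instead from the degree of the fiber in its ambient projective space. For $E_6$ the fiber is a plane cubic, hence has at most three irreducible components. For $E_7$ the fiber is $y^2=Q_4(x,z)$ in $\mathbb{P}^2_{1,1,2}$; since $y$ has weight $2$, any factorisation of $y^2-Q_4$ into weighted-degree-two pieces must be of the form $(y-A)(y+A)$ with $A^2=Q_4$, so there are at most two components. For $E_8$ the smooth Weierstrass model has only irreducible fibers (a reducible plane cubic through the flex $[0:1:0]$ would force a singularity of the total space). With this degree argument in place, the rest of your outline---checking that each of the $2n$ candidate types actually occurs over a non-empty locus for generic coefficients---goes through as you describe, and indeed amounts to reading off the stratifications already tabulated in \cite{AE2}.
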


\subsection{$D_5$ elliptic fibrations}

$D_5$ elliptic fibrations have not received much attention in the physics literature.
This is mostly because the generic fiber of a $D_5$ fibration is a complete intersection while in the case of $E_8$, $E_7$, $E_6$ , it is simply a  hypersurface in a (weighted) projective plane.
In view of the properties of the $E_{9-n}$ elliptic fibrations for $n=1,2,3$, one might  expect that the $D_5=E_{9-4}$ elliptic fibration has an even  richer geometry. As we will see in this paper, an $D_5$ elliptic fibration  has 4 sections and admits 8 types of singular fibers composed of up to 4 components.
However, only 7 appear on the list of Kodaira. We will see that a general $D_5$ elliptic fibration with 4 sections indeed admits a non-Kodaira fiber composed of four rational curves meeting at a point. We call such a fiber a fiber of  type $I^{*-}_0$ since it looks like a Kodaira fiber $I^{*}_0$ with the central node contracted to a point.
 We study the physical significance of their non-Kodaira fibers by exploring weak coupling limits associated with them.

\begin{table}[htb]\label{En.Sing.Fib.}
\begin{center}
\begin{tabular}{|l|c|l|}
\hline
Type & sections & \multicolumn{1}{|c|}{ Singular fibers}\\
\hline
$E_8$ & 1 & $I_1$, $II$ \\
\hline
$E_7$ & 2 &  $I_1$, $II$, $I_2$, $III$\\
\hline
$E_6$ & 3 & $I_1$, $II$, $I_2$, $III$, $I_3$ , $IV$\\
\hline
$E_5=D_5$ & 4 & $I_1$, $II$, $I_2$,   $III$,  $I_3$, $IV$, $I_4$, $I^{*-}_0$ (non-Kodaira)\\
\hline
\end{tabular}
\end{center}
\caption{Singular fibers of an elliptic fibration of type $E_n$ with $(9-n)$ sections. We denote $E_5$ by $D_5$ as it is familiar with Dynkin diagrams.}
\end{table}

\subsection{Canonical form for a $D_5$ model with four sections}

In this section, we will introduce our canonical form for an elliptic fibration of type $D_5$ with four sections. 
We will ensure that the 4 sections are given by a unique divisor composed of 4 non-intersecting irreducible components. Each of these components is a Weil divisor and they are two by two disjoints so that the 4 sections define 4 distinct points on each fiber.

\subsubsection{Notation and conventions}
We work over the field $\mathbb{C}$ of complex numbers but everything we say is equally valid over an algebraically closed field $k$ of characteristic zero.
We denote by $\mathbb{P}^n$ the projective space of dimension $n$ over the field $\mathbb{C}$.
Given a line bundle $\mathscr{L}$, we denote its dual by $\mathscr{L}^{-1}$, its $n$-th tensorial power by
$\mathscr{L}^{n}$ and the dual of its $n$-th tensorial power by  $\mathscr{L}^{-n}$. Given a coherent sheaf $\mathscr{E}$, we denote by $\mathbb{P}^n[\mathscr{E}]$ the projective bundle of lines of $\mathscr{E}$.

\subsubsection{Canonical form for a $D_5$ elliptic fibration with four sections}
Let $B$ be a non-singular compact complex algebraic variety endowed with a line bundle $\mathscr{L}$. We consider the rank $4$
vector bundle  $$\mathscr{E}=\mathscr{O}_B\oplus \mathscr{L}\oplus \mathscr{L}\oplus \mathscr{L},$$ and its associated
projectivization\footnote{Here we take the projective bundle of \emph{lines} in $\mathscr{E}$.} $\pi: \mathbb{P}(\mathscr{E})\to B$. We denote the tautological line bundle of $\mathbb{P}(\mathscr{E})$ by
$\mathscr{O}(-1)$ and its dual by $\mathscr{O}(1)$. The vertical coordinates of $\mathbb{P}(\mathscr{E})$ are denoted by
$[x:y:z:w]$, where  $x$, $y$, $w$ are all  sections of $\mathscr{O}(1)\otimes \pi^{*}\mathscr{L}$ while  $z$ is a section of
$\mathscr{O}(1)$. We define a $D_5$ \emph{elliptic fibration} $Y$ to be a non-singular complete intersection determined
by the vanishing locus of two sections of $\mathscr{O}(2)\otimes \pi^{*}\mathscr{L}^2$. Such a complete intersection
determines an elliptic fibration $\varphi: Y\rightarrow B$, whose generic fiber is a complete intersection of two quadrics in
$\mathbb{P}^3$.
We also assume that the  elliptic fibration has a (multi) section cut out by  $z=0$.
It follows that the $D_5$ elliptic fibration $Y$ is given by:
\begin{equation}
Y:=
\begin{cases}
A_1(x,y,w)-z L_1(z,x,y,w)=0\\
A_2(x,y,w)-z L_2(z,x,y,w) =0
\end{cases}\nonumber
\end{equation}
where $A_1(x,y,w)$ and $A_2(x,y,w)$ denote two quadratic polynomials in $\mathbb{C}[x,y,w]$, while $L_1(z,x,y,w)$ and
$L_2(z,x,y,w)$ are linear in $x,y,z,w$ with coefficients that are sections of appropriate powers of $\pi^{*}\mathscr{L}$
so that each of $A_i-zL_i$  for $i=1,2$ is a section of $\mathscr{O}(2)\otimes \pi^{*}\mathscr{L}^2$.
We exclude the degenerate case where  $Q_1$ and $Q_2$ are proportional to each other.
It follows that fiberwise, the multisection cut out by $z=0$ defines up to four points on the elliptic fiber, corresponding to the fact each (distinct) solution the system $A_1(x,y,w)=A_2(x,y,w)=0$ determines a section of the elliptic fibration.
If $A_1$ and $A_2$ intersect transversally, we have  exactly  four  sections. We can also consider degenerate cases where the intersection is not transverse and would therefore lead to intersection points with multiplicities. Using non-transverse quadrics, we can have one, two or three sections\footnote{ For example $(A_1,A_2)=(x^2, w^2)$  gives a unique solution of multiplicity 4.
$(x^2-y^2, w^2)$ gives two solutions of multiplicity 2.
$(x^2-y^2, w^2-x^2+xy)$ gives three solutions, one of multiplicity one and the other of multiplicity two.}.

For the remainder of this article, unless otherwise mentioned we only consider the  case where the elliptic fibration admits exactly four sections.
In that case, without loss of generality, the $D_5$ elliptic fibration can be expressed as follows:
\begin{equation}\label{ci}
Y:=
\begin{cases}
x^2-y^2- z( az+cw)=0,\\
w^2-  x^2  -z (d z+e x+f y) =0.
\end{cases}
\end{equation}
This is our canonical form for a $D_5$ elliptic fibration with four rational sections.
So that each equation defines a section of $\mathscr{O}(2)\otimes \pi^{*}\mathscr{L}^2$,
we take $a$ and $d$ to be sections of $\pi^{*}\mathscr{L}^2$ and $c$, $e$ and $f$ to be sections of
$\pi^{*}\mathscr{L}$:
\begin{center}
\begin{tabular}{|c|c|c|c|}
\hline
$x,y, w$ & $z$ & $c, e, f$  & $a ,d$\\
\hline
$\mathscr{O}(1)\otimes \pi^{*}\mathscr{L}$ & $\mathscr{O}(1)$ & $\pi^{*}\mathscr{L}$ & $\pi^{*}\mathscr{L}^2$\\
\hline
\end{tabular}
\end{center}

\subsection{Pencil of quadrics}
To study the complete intersection $Y: Q_1= Q_2=0$ of two quadrics, it is useful to analyze the pencil of quadrics through $Y$.
It is  defined as follows
\begin{equation}
Q_{\lambda_1,\lambda_2}:\lambda_1 Q_1+\lambda_2 Q_2, \quad [\lambda_1:\lambda_2]\in \mathbb{P}^1.
\end{equation}
The variety $Y:Q_1=Q_2=0$ could equivalently be defined as the complete intersection
$\lambda_1 Q_1+ \lambda_2 Q_2=\mu_1 Q_1+\mu_2 Q_2=0$ for any choice of $\lambda_1, \lambda_2,\mu_1,\mu_2$ such that
$\lambda_1 \mu_2-\lambda_2 \mu_1\neq 0$. The curves $Q_1=Q_2=0$ is common to all the quadrics of the pencil. It is called the {\em base locus} of the pencil.
We denote the symmetric matrix corresponding to a quadric polynomial $Q=\sum a_{ij} x^i x^j$ as  $\hat{Q}$.
Singular fibers can be characterized by algebraic properties of the pencil.
In particular, the matrix $(\hat{Q}_1+r \hat{Q}_2)$ associated with the  pencil $Q_1+r Q_2$  has algebraic invariants known as {\em elementary divisors} that can be used to characterize uniquely a certain types of singular fibers. The elementary divisors are obtained from the  roots of the discriminant of the pencil and the common roots of  the  minors of order $1,2,\cdots, n$. We recall that the  minors of order $s$ of a matrix $M$ are obtained by taking the determinant of the matrices cut down  from $M$ by removing $s$ rows and $s$ columns. For a pencil of quadrics in $\mathbb{P}^3$, we will consider the first, second and third  minors. The determinant of the matrix $M$ can be seen as the unique minor of order zero.

\subsection{Discriminant of the elliptic fibration from the pencil of quadrics}
The complete intersection $Q_1=Q_2=0$ gives a regular elliptic curve if and only if the determinant of the quadratic form
$\hat{Q}_{1}+r \hat{Q}_2$  (with $r=\frac{\lambda_2}{\lambda_1}$) is non-identically zero and does not have multiple roots. In other words, we can compute the discriminant of the elliptic fibration $Y$  as the discriminant of the following quartic in $r$:
\begin{equation}
4\mathrm{det}(\hat{Q}_1+r \hat{Q}_2)=q_0 + 4 q_1  r+ 6 q_2 r^2+4 q_3  r^3+ q_4 r^4.
\end{equation}
 One can show that the the $D_5$ elliptic fibration determined by $Q_1=Q_2=0$ has Weierstrass form
 $y^2z=x^3 +F xz^2 +G z^3$ , where
\begin{equation}\label{GF}
F=-(q_0 q_4 -4 q_1 q_3 +3q_2^2), \quad  G=2(q_0 q_2 q_4 + 2q_1 q_2 q_3-q_2^3-q_0 q_3^2 -q_1^2 q_4).
\end{equation}
This Weierstrass model is the Jacobian of the $D_5$ elliptic fibration.
We then simply compute the discriminant and $j$-invariant as
\begin{equation}
\Delta=4 F^3+ 27 G^2,\quad j=1728 \frac{4 F^3}{\Delta}.
\end{equation}

\subsection{Birational equivalent $E_6$ model}
We now obtain a birationally equivalent formulation of the fibration in which the  generic fiber is a plane cubic curve. The plane cubic curve is obtained by projecting the space  curve on a plane from a rational point.
In order to proceed, we need to choose a rational point on every fiber of $Y$. For example, we can take the rational point $P=[1,1,1,0]$ which is one of the sections. We perform a translation $y\mapsto y+x , w\mapsto w+x$ so that in the new coordinate system, the point $P$ is $[1:0:0:0]$. It follows that there should be no terms  in $x^2$ in the defining equations. Indeed, after the substitution   $(y\mapsto y+x , w\mapsto w+x)$ in the defining equations of $Y$, we can eliminate $x$.
Geometrically, this is equivalent to projecting $Y$ to the plane $x=0$ from the point $P=[1:0:0:0]$. The result is the  following cubic:
\begin{align}
&(y^2+a z^2 + c w z)(2w+ez+fz)+(w^2-d z^2- f z y)(2 y + c z) =0.
\end{align}
where $[y,w,z]$ are the projective coordinates of the $\mathbb{P}^2$ defined by $x=0$. This cubic is a section of $\mathscr{O}(3)\otimes \mathscr{L}^3$ and  $z=0$ admits a  multisection $z=0$  of degree  3. Indeed,  $z=0$ cuts the cubic  along the following loci $$2  y w (y + w) =0,$$ of the $\mathbb{P}^1$ with projective coordinates $[y:w]$. This  corresponds to the points $[y,w,z]=[0:1:0]$, $[1:0:0]$ and $[1:-1:0]$ on the cubic curve. These points correspond to the  sections of the original $D_5$ elliptic fibration with the exception of the point $P$ used to define the projection. As the new elliptic fibration is defined by a divisor of class $\mathscr{O}(3)\otimes \mathscr{L}^3$ in the $\mathbb{P}^2$ bundle $\mathbb{P}[\mathscr{O}_B \oplus \mathscr{L}\oplus\mathscr{L}]$ and admits three sections, we recognize it as an  $E_6$ elliptic fibration.

We still have the same $j$-invariant and the same discriminant locus. However, the fiber structure has changed. For example (i)the non-Kodaira fiber $I^{*-}_0$ located at $a=c=d=e=f=0$ is now a Kodaira fiber of type $IV$. (ii) The  $I_4$ fiber at $a=c=e=4d-f^2=0$ is now a $I_2$ fiber composed of a conic and a secant.
\subsection{Birationally equivalent Jacobi quartic model}

An elliptic curve can also be modeled by the double cover of a $\mathbb{P}^1$ branched at four distinct points.
For that purpose, we can use a weighted projective plane $\mathbb{P}^2_{2,1,1}$ and write the equation  as
$$
y^2=Q_4(u,v),
$$
where $[y:u:v]$ are the projective coordinates of $\mathbb{P}^2_{2,1,1}$ with $y$ of weight $2$ and $u$ and $v$ of weight $1$ and  $P_4(u,v)$ is homogeneous of degree $4$ in $[u:v]$.
The quartic $Q_4$ is simply given by the binary quartic polynomial determined by the polynomial  of the pencil of quadrics defining the $D_5$ elliptic fibration, so the expression is
\begin{equation}
y^2 =det(u \hat{Q}_1+ v \hat{Q}_2),
\end{equation}
which gives
\begin{equation}
y^2 =q_0 u^4+ 4 q_1 u^3 v+6 q_2 u^2 v^2+ 4 q_3 u v^3+ q_4 v^4.
\end{equation}
This elliptic fibration is given by a section of 
 $\mathscr{O}(4)\otimes\mathscr{L}^2$ written in the projective bundle $\mathbb{P}[\mathscr{O}_B\oplus \mathscr{O}_B\oplus\mathscr{L}]$. The projective variable  $y$ is a section of $\mathscr{O}(2)\otimes\mathscr{L}$ while  $u$ and $v$ are sections of $\mathscr{O}(1)$.
Since the generic fiber is modeled by a quartic in $\mathbb{P}^2_{2,1,1}$, we have a $E_7$ model.  However, compare to the $E_7$ fibrations that only had fibers of type $I_1, II, I_2$ and $III$, this variant of the $E_7$ elliptic fibration also admits a non-Kodaira fiber composed of a rational curve of multiplicity 2. This fiber is located at $q_0=q_1=q_2=q_3=q_4=0$.  in the $D_5$ elliptic fibration, as we will see later, this fiber would be the non-Kodaira fiber $I^{*-}_0$ composed of four rational curves meeting at a common point. 
The singular fibers can easily be classified by analyzing the factorization of $Q_4$ as reviewed in table \ref{E7.fib}. 
For another application of quartic elliptic curves in F-theory see \cite{Braun:2011ux}.

\begin{table}
\begin{center}
\begin{tabular}{|c|c|p{4cm}|}
\hline
Type & General condition &  Description \\
\hline
$I_1$ & $\Delta=0$ & A nodal curve ($Q_4$ has one double root) \\
\hline
$II$ & $F=G=0$ & A cuspidial curve ($Q_4$ has a triple root) \\
\hline
$I_2$ & $q_4 q_1^2-q_3^2 q_0=2 q_3^3+q_4^2 q_1-3  q_4 q_3 q_2=0$ & A tacnode($Q_4$ has two double root)\\ 
\hline
$III$  & $rank \begin{pmatrix} q_0 & q_1 & q_2 & q_3 \\ q_1 & q_2 & q_3 & q_4\end{pmatrix}\leq 1$ & Two conics tangent at a point ($Q_4$ has a quadruple root)\\
\hline
$2T_1$ & $q_0=q_1=q_2=q_3=q_4=0$ & A rational curve (in this case a projective line) of multiplicity 2\\
\hline
\end{tabular}
\end{center}
\caption{ Singular fibers of the elliptic fibration $y^2=q_0 u^4+ 4 q_1 u^3 v+6 q_2 u^2 v^2+ 4 q_3 u v^3+ q_4 v^4$ birationally equivalent to a $D_5$ elliptic fibration \label{E7.fib}}
\end{table}

Interestingly, if we introduce $[X_0:X_1:X_2:X_3]$ as projective coordinates of a $\mathbb{P}^3$, the  weighted projective space $\mathbb{P}^2_{1,1,2}$ is isomorphic  to the cone $X_1 X_2=X_0^2$  in $\mathbb{P}^3$. The explicit isomorphism is the following: 
\begin{equation}
[u:v:y]\mapsto [X_0:X_1:X_2:X_3]=[uv:u^2: v^2  : y].
\end{equation}
If we use this map starting from the  projective bundle $\mathbb{P}_{1,1,2}[\mathscr{O}_B\oplus \mathscr{O}_B\oplus \mathscr{L}^2]$, we get the following projective bundle $\mathbb{P}[\mathscr{O}_B\oplus \mathscr{O}_B\oplus \mathscr{O}_B\oplus \mathscr{L}^2]$.
 We can write it again as a $D_5$ elliptic fibration with one constant quadric as a section of $\mathscr{O}(2)$ and $\mathscr{O}(2)\otimes \mathscr{L}^4$: 
\begin{equation}
\begin{cases}
X_1 X_2- X_0^2=0,\\
X_3^2- q_0 X_1^2  - 4 q_1 X_0 X_1 - 6 q_2  X_0^2 -4 q_3 X_0 X_2    - q_4  X_2^2=0.
\end{cases}
\end{equation}
In this expression, we have a $\mathbb{Z}_2$ involution $\sigma:Y\rightarrow Y: X_3\mapsto -X_3$.

\subsection{Classification of singular fibers by Segre symbols}
We classify the singular fibers of a smooth $D_5$ elliptic fibration by using the classification of pencils of quadrics by Segre symbols.
When the discriminant $\mathrm{det}(\hat{Q}_1+r \hat{Q}_2)$ is not identically zero, we have a non-degenerate pencil of quadrics in $\mathbb{P}^3$. There are  14 different Segre symbols: one corresponds to a smooth elliptic curve,  nine correspond to seven singular fibers of Kodaira type  and four correspond to non-Kodaira fibers.
When the discriminant is identically zero, we have a pencil of quadrics in $\mathbb{P}^2$, which admit six different cases all corresponding to non-Kodaira fibers given by four lines meeting at a common point. We have described this case in table \ref{Class.Pencil.P3}. When the discriminant is identically zero as well as all the first order minors, we have a pencil in $\mathbb{P}^1$. This gives  3 additional singular fibers in higher dimension. Once we have a fibration with a certain number of sections, we have more constraints on the type of singular fibers that can occur. In particular, a smooth  $D_5$ elliptic fibration with four sections  admits eight different types of singular fibers, including one which doesn't appear on Kodaira's list and consists of four lines meeting at a common point. We will denote such a fiber by $I^{*-}_0$ since it looks like  a Kodaira fiber of type $I^*_0$ with the central node contracted to a point.  We will give a detailed classification of all possible fibers of our canonical $D_5$ model (with four sections) as  general conditions on the sections  $a,c,d,e,f$.  The classification by Segre symbols is based on the property of the matrix associated with the pencil of quadrics, in particular, the multiplicity of the zeros of its  consecutive minors. This will be the subject of section \ref{Section.Segre}. For the case with four sections, the list of fibers is given by the following:

\begin{prop}[Fiber geometry of  a $D_5$ elliptic fibration]
A general $D_5$ elliptic fibration whose generic fiber is defined by the  complete intersection of two quadrics in $\mathbb{P}^3$
admits four sections and 8 singular fibers. Namely the Kodaira fibers $I_1,I_2$, $III$, $I_3$, $IV$, $I_4$ and a non-Kodaira fiber  $I^{*-}_0$ composed of 4 lines meeting at a point.
\end{prop}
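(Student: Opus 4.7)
The plan is to classify the singular fibers via the Segre symbols of the pencil of quadrics that defines the generic fiber. For our canonical form \eqref{ci} the fiber over $b\in B$ is the base locus of the pencil $\lambda_1 \hat{Q}_1+\lambda_2 \hat{Q}_2$ of symmetric $4\times 4$ matrices obtained by evaluating the coefficients $a,c,d,e,f$ at $b$. The fiber is singular precisely when this pencil fails to be a generic pencil of quadrics in $\mathbb{P}^3$, so the entire proof reduces to the classical classification of pencils of quadric surfaces (which is to be recalled in Section \ref{Section.Segre}).

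First I would invoke the Weierstrass--Segre theory: when $\det(\hat{Q}_1+r\hat{Q}_2)$ is not identically zero, the pencil is projectively determined by its Segre symbol, which records the elementary divisor structure of the associated matrix pencil. There are exactly $14$ such symbols for non-degenerate pencils in $\mathbb{P}^3$. For each symbol I would write the pencil in normal form and read off the base locus as a union of lines, conics, and twisted cubics together with their multiplicities and intersection pattern. Separately, degenerate pencils (those with $\det\equiv 0$, hence effectively living in $\mathbb{P}^2$ or $\mathbb{P}^1$) must be handled; these produce only configurations of four concurrent lines and other non-Kodaira fibers.

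Next I would translate each base-locus configuration into a Kodaira symbol using the standard dictionary (two transverse lines $\Rightarrow I_2$, two tangent lines $\Rightarrow III$, three concurrent lines $\Rightarrow IV$, a conic plus a tangent line giving a cycle of length $n$ $\Rightarrow I_n$, and so on). Nine of the fourteen Segre symbols yield Kodaira fibers (collapsing onto the seven Kodaira types $I_1,II,I_2,III,I_3,IV,I_4$), one yields the smooth fiber $I_0$, and four yield non-Kodaira configurations. Multiplicities of elementary divisors distinguish the multiplicities of the components, which is exactly what separates e.g.\ $I_2$ from $III$ and identifies the four-lines-through-a-point fiber that we name $I^{*-}_0$.

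Now I would impose the four-section hypothesis. The four sections of the fibration are cut out by $z=0$ and correspond to the four points $A_1(x,y,w)=A_2(x,y,w)=0$ in the plane $\{z=0\}\subset\mathbb{P}^3$; requiring them to be distinct is a transversality condition on the pair $(A_1,A_2)$, which in terms of Segre symbols rules out precisely the configurations whose base locus either contains a non-reduced component in the $z=0$ plane or forces the pencil to degenerate to one in $\mathbb{P}^2$. A case-by-case check eliminates the fiber types $II$, $I_0^*$ and other enhancements that would require coincidences among the four sections, leaving exactly the eight types $I_0, I_1, I_2, III, I_3, IV, I_4, I^{*-}_0$. Finally, to prove that each singular fiber actually occurs, I would exhibit explicit strata in the parameter space of $(a,c,d,e,f)$; for instance $a=c=d=e=f=0$ gives $x^2=y^2$ and $w^2=x^2$, whose base locus is the four lines $\{x=\pm y,\;w=\pm x,\;\text{arbitrary }z\}$ through $[0{:}0{:}1{:}0]$, realising $I^{*-}_0$. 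The main technical burden is the Segre-symbol bookkeeping, in particular identifying the correct Kodaira type from elementary-divisor multiplicities and showing that the four-section genericity hypothesis selects exactly the stated eight types while excluding every $I^*_n$, $II$, $II^*$, $III^*$, $IV^*$ configuration; working throughout with the explicit canonical form \eqref{ci} keeps the verifications elementary.
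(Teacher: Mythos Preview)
Your overall strategy---classify fibers via Segre symbols of the pencil and then prune the list using the four-section hypothesis---is exactly the paper's, and the realisation of $I^{*-}_0$ at $a=c=d=e=f=0$ is correct. However, the pruning step is where your argument diverges from the paper and contains a genuine error.

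First, you claim the four-section hypothesis eliminates the fiber $II$. It does not: the cuspidal quartic (Segre symbol $[13]$) survives and is one of the eight singular fibers (the proposition's enumeration accidentally omits it, but the paper's tables and Figure~\ref{figure.enhancement} include it, and the statement says ``8''). So your case-by-case check would fail here. Second, and more structurally, the mechanism by which the paper excludes Segre symbols is \emph{not} ``coincidences among the four sections'' or ``non-reduced components in the $z=0$ plane.'' Rather, one shows by direct computation that for the canonical form \eqref{ci} the matrix $\hat Q_1+r\hat Q_2$ has rank $\geq 2$ for every value of $r$ and every $(a,c,d,e,f)$. This single rank bound immediately kills $[1(111)]$, $[(211)]$, $[(1111)]$ (which require rank $\leq 1$), and a separate check of the order-2 minors kills $[(22)]$; these are precisely the four non-Kodaira symbols among the fourteen, leaving nine symbols that collapse to the eight fiber types (with $[22],[11(11)]\to I_2$ and $[4],[1(21)]\to III$). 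Your proposal never mentions this rank computation, which is the actual technical heart of the argument.

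A minor point: your Kodaira dictionary (``two transverse lines $\Rightarrow I_2$'', ``a conic plus a tangent line giving a cycle of length $n\Rightarrow I_n$'') is garbled; the components here are conics, twisted cubics and lines in $\mathbb{P}^3$, and the Kodaira type is read from the intersection graph of the rational components, not their degrees.
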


\begin{table}[hbt]
\begin{center}
\begin{tabular}{|p{1.6cm} |p{4.3cm}      |  p{6.8 cm} |}
\hline
Segre symbol & Roots of $\Delta$ and rank of associated quadric & {\hspace{2cm} Geometric             description} \\
\hline
$[1111]$ & 4 simple roots & smooth quartic ( $I_0$)\\
\hline
$[112]$ & one double root, rank 3 & nodal quartic  ($I_1$) \\
\hline
$[11(11)]$ & one double root, rank 2 & two intersecting conics  ($I_2$) \\
\hline
$[13]$ & triple root, rank 3& cuspidial quartic ($II$) \\
\hline
$[1(21)]$ & triple root, rank 2 & two tangent conics  ($III$)\\
\hline
$[1(111)]$ & triple root, rank 1 &  double conic  \\
\hline
$[4]$ & quadruple root, rank 3 & cubic and tangent line  ($III$)\\
\hline
$[(31)]$ & quadruple root, rank  2 & {\footnotesize conic and 2 lines  meeting on the conic ( $IV$)} \\
\hline
$[(22)]$ & quadruple root, rank 2& two lines and a double line \\
\hline
$[(211)]$ &quadruple root, rank 1  & two double lines \\
\hline
$[(1111)]$ & {\footnotesize quadruple root, rank 0} & The two quadrics coincide  \\
\hline
$[22]$ & 2 {\footnotesize double roots,both rank 3 }  & cubic and secant line ($I_2$)\\
\hline
$[2(11)]$ &{\footnotesize 2  double roots, rank 3 and 2}& {\footnotesize a conic and two lines forming a triangle($I_3$) }\\
\hline
$[(11)(11)]$   &{\footnotesize 2 double roots, both rank 2} &  four lines forming a quadrangle($I_4$) \\
\hline
\end{tabular}
\end{center}
\caption{Classification of  non-degenerate pencils of  quadrics in $\mathbb{P}^3$. In the second column, $\Delta$ is the discriminant of the pencil of quadrics. In the last column,  when the fiber is in Kodaira list, we mention its Kodaira symbol in parenthesis.  \label{Class.Pencil.P3}}
\end{table}

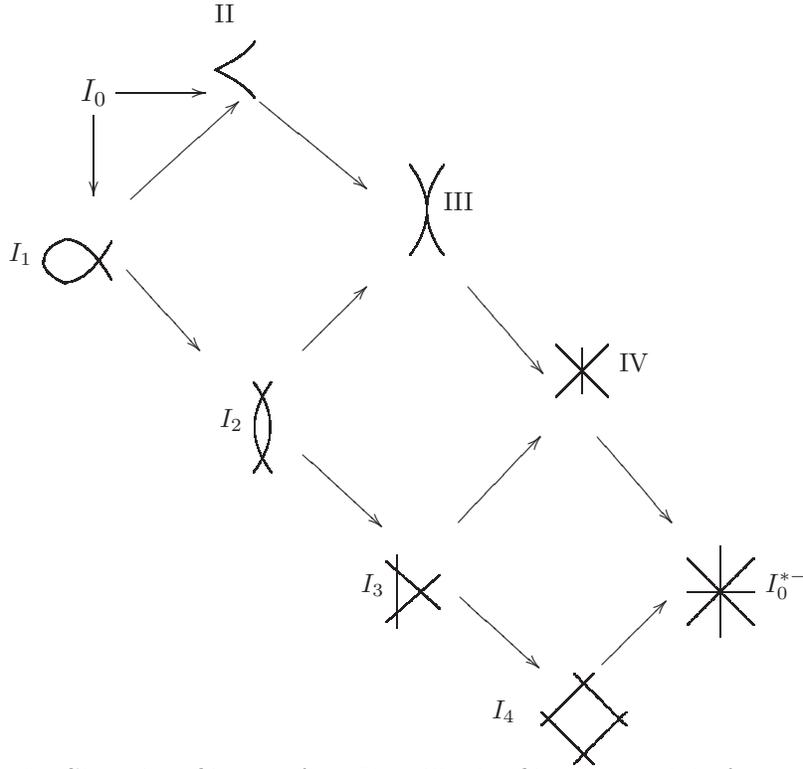
\begin{figure}[htb]
\begin{center}
\vspace{1cm}
\begin{tabular}{c}
\xymatrix{
I_0 \ar[r]\ar[d]& {
\setlength{\unitlength}{.15cm}
\begin{picture}(6, 0)(0,-2)
\qbezier(0,0)(2.8,1.3)(3.5,2.5)
\qbezier(0,0)(2.8,-1.3)(3.5,-2.5)
\put(0,4.2){{\footnotesize II}}
\end{picture}
}
 \ar[rd] & & & \\
 {
\setlength{\unitlength}{.15cm}
\begin{picture}(5,5)
\qbezier(0,1.9)(-1.9,1.2)(-1.9,0)
\qbezier(0,-2)(-1.9,-1.2)(-2,0)
\qbezier(0,1.9)(2,1.9)(4.1,-1.7)
\qbezier(0,-1.9)(2,-1.9)(4.1,1.7)
\put(-5,0){\footnotesize $I_1$}
\end{picture}
}\ar[dr]\ar[ur]& & {
 \setlength{\unitlength}{.3cm}
\begin{picture}(4, 4)(-3,-3)
\qbezier(0,2)(-1.5,0)(0,-2)
\qbezier(-1.5,2)(0,0)(-1.5,-2)
\put(0,0){\footnotesize III}
\end{picture}
}\ar[dr] &  & \\
&  {
\setlength{\unitlength}{.3cm}
\begin{picture}(4,4)(-3,-1)
\qbezier(0,2)(-1.5,0)(0,-2)
\put(-.8,0){\qbezier(0,2)(1.5,0)(0,-2)}
\put(-2.3,0){\footnotesize  $I_2$}
\end{picture}
} \ar[ur] \ar[dr]&  & {
\setlength{\unitlength}{.15cm}
\begin{picture}(4, 4)(-3,-5)
\qbezier(0,-2)(0,0)(0,2)
\qbezier(2.3,2.3)(0,0)(-2.3,-2.3)
\qbezier(2.3,-2.3)(0,0)(-2.3,2.3)
\put(3.3,0){\footnotesize IV}
\end{picture}
}\ar[dr] & \\
& & {
\setlength{\unitlength}{.2cm}
\begin{picture}(4, 4)(-2,0)
{\qbezier(1.2,1.1)(0,0)(-2.3,-2)}
{\qbezier(1.2,-1.1)(0,0)(-2.3,2)}
\qbezier(-1.6,-2.4)(-1.6,0)(-1.6,2.4)
\put(-4,0){\footnotesize  $I_3$}
\end{picture}
}\ar[ur]\ar[dr] &  & {
\setlength{\unitlength}{.2cm}
\begin{picture}(4, 4)(-3,0)
\qbezier(0,-3)(0,0)(0,3)
\qbezier(-2.2,0)(0,0)(2.2,0)
\qbezier(2.2,2.2)(0,0)(-2.2,-2.2)
\qbezier(2.2,-2.2)(0,0)(-2.2,2.2)
\put(3,0){\footnotesize  $I_0^{*-}$}
\end{picture}
} \\
& & &
{
\setlength{\unitlength}{.15cm}
\begin{picture}(4, 4)
\qbezier(-.6,-.6)(2,2)(4,4)\qbezier(-.6,.6)(2,-2)(4,-4)
\put(6.3,0){\qbezier(.6,-.6)(-2,2)(-4,4)\qbezier(.6,.6)(-2,-2)(-4,-4)}
\put(-5,0){\footnotesize
  $I_4$
}
\end{picture}
}
\ar[ur] &
}
\end{tabular}
\end{center}
\caption{Singular fibers of a $D_5$ elliptic fibration with four sections. There are a total of 8 singular fibers. This includes all the Kodaira fibers with at most 4 components and the   fiber $I^{*-}_0$ which is not  on Kodaira's list.
Down  arrows  represent an increase in the number of components while  up arrows indicate
a specialization  from  a semi-stable to an unstable fiber  while preserving the number of components of the fiber.
\label{figure.enhancement}
}
\end{figure}

\begin{table}[htb]
\begin{center}
\begin{tabular}{|c|p{7cm}| p{6cm}|}
\hline
& & \\
Type & \multicolumn{1}{|c|}{  \large General conditions   } & \multicolumn{1}{|c|}{\large Descriptions} \\
& &\\
\hline
$I_1 $ & $\quad \quad \quad \Delta=0$ & Nodal quartic $[211]$\\
\hline
 $II$ & $\quad \quad \quad  F=G=0$& Cuspidial quartic $[13]$\\
 \hline
\begin{tabular}{c}  \\ \\   $I_2$ \\ \end{tabular} & \begin{tabular}{ll}  & $ a=c=0$\\ or &  $f=4d-e^2=0$\\ or &   $e=4(a+d)+c^2-f^2=0$ \end{tabular}&
\begin{tabular}{p{4.9cm}}Two conics intersecting at two distinct points $[11(11)]$\end{tabular}  \\
 \cline{2-3}
 & \begin{tabular}{c} \\ $q_4 q_1^2-q_3^2 q_0 =2 q_3^3+q_4^2 q_1 - 3 q_4 q_3 q_2=0$\\  \\  \end{tabular}  & A twisted cubic and a secant $[22]$\\
 \hline
\begin{tabular}{c}   \\  \\   $III$ \end{tabular} & \begin{tabular}{p{.1cm}l}  & $ a=c=d=0$\\ or &  $f=4d-e^2=4a-e^2=0$\\ or &  {\small $e=4a +2c^2+f^2=4d-c^2-2f^2=0$} \end{tabular}&
\begin{tabular}{p{5cm}}Two tangent conics   $[1(21)]$\end{tabular}  \\
 \cline{2-3}
  &  $rank {\begin{pmatrix} q_0 & q_1 & q_2 & q_3 \\ q_1 & q_2 & q_3 & q_4  \end{pmatrix} } \leq 1$   & A twisted cubic and a tangent $[4]$\\
 \hline
\begin{tabular}{c}   $I_3$ \\ \end{tabular} & \begin{tabular}{ll}  & $ a=c=2q_3^2-3 q_2 q_4=0$\\ or &  $f=4d-e^2=2 q_1^2-3 q_0 q_2=0$\\ or &
$\begin{cases} e=4(a+d)+c^2-f^2=0\\
  (f^2-4d)^2-4 c^2 f^2=0 \end{cases}$ \end{tabular}&
\begin{tabular}{p{5.3cm}} A conic and two lines meeting as a triangle    $[2(11)]$\end{tabular}  \\
  \hline
\begin{tabular}{c}    $IV$ \\ \end{tabular} & \begin{tabular}{ll}  & $ a=c=d=e^2-f^2=0$\\ or &  $f=d-a=4d-e^2=4a-c^2=0$\\ or &   $\begin{cases}e=0\\ 4d=-4a=3c^2=3f^2\end{cases}$ \end{tabular}&
\begin{tabular}{p{4.9cm}} A conic meeting two lines at the same point $[(31)]$\end{tabular}  \\
\hline
\begin{tabular}{c}    $I_4$ \\ \end{tabular} & \begin{tabular}{ll}  & $ a=c=ef=4d-e^2-f^2=0$\\ or  &   $4a +c^2=d=e=f=0$ \end{tabular}&
\begin{tabular}{p{4.9cm}} four lines forming a quadrangle  $[(11)(11)]$\end{tabular}  \\
\hline
\begin{tabular}{c}    $I_0^{*-}$ \\ \end{tabular} & \begin{tabular}{ll}  & $ a=c=d=e=f=0$ \end{tabular}&
\begin{tabular}{p{4.9cm}} four lines meeting at a point  \end{tabular}  \\
\hline
\end{tabular}
\end{center}
\caption{Singular fibers of a $D_5$ elliptic fibration with four sections. We use the canonical form given in equation \eqref{ci} and $q_0,q_1, q_2, q_3 , q_4$ are defined in equation \eqref{qdef} and computed in equation \eqref{qi} : $q_0=c^2$,  $q_1=\frac{1}{4}(4a-c^2)$,    $q_2= \frac{2}{3}(d-a)$, $q_3=\frac{1}{4}(-4 d -f^2+e^2)$ and $q_4= f^2$.}
\end{table}

\subsection{Non-Kodaira fibers}
An attractive feature of the F-theory picture is that it proposes an  elegant dictionary between singular fibers and physical properties of type-IIB compactifications. The dictionary is well understood in codimension-one in the base where singular fibers determine the gauge group of the gauge theory living on the seven-branes.  More work needs to be done to understand the meaning of the matter representations and Yukawa couplings.
In the road to a better understanding of the physics of F-theory, one cannot hide away from  non-Kodaira singular fibers. As shown in \cite{EY}, non-Kodaira fibers can show up very naturally in important  models like for example the $\SU(5)$ Grand Unified Theory.
The physical meaning of non-Kodaira fibers can be explored in many different ways.
One can ask how they modify the matter content and the Yukawa couplings of the gauge theory associated with a given elliptic fibration.
This is the road explored recently by Morrison-Taylor \cite{MorrisonTaylor} in the context of F-theory on Calabi-Yau threefolds and by Marsano-Schafer-Nameki in the context of the small resolution of the $\SU(5)$ model \cite{Marsano:2011hv}.
It is also worthwhile to investigate weak coupling limits of F-theory in presence of non-Kodaira fibers.
 A general $D_5$ elliptic fibration can admit many possible non-Kodaira fibers. Some are higher dimensional fibers such as two quadric surfaces coinciding. The non-Kodaira fibers that are one dimensional are presented in figure \ref{NKfig}.

\begin{figure}[!htp]
\begin{center}
\begin{tabular}{ccccccc}
\setlength{\unitlength}{.2cm}
\begin{picture}(6, 10)(-3,-5)
\qbezier(-2.2,1)(0,-6.2)(2.2,1)
\put(2.4, -1){\footnotesize $2$}
\put(-2,-5){\footnotesize  $[1(111)]$}
\end{picture}
&
\setlength{\unitlength}{.2cm}
\begin{picture}(6, 10)(-3,-5)
\qbezier(-3,0)(0,0)(3,0)
\qbezier(-2,-2.2)(-2,0)(-2,2.2)
\qbezier(2,-2.2)(2,0)(2,2.2)
\put(-0.2,.8){\footnotesize $2$}
\put(2.4,2){\footnotesize $1$}
\put(-3.1,2){\footnotesize $1$}
\put(-2,-5){\footnotesize  $[(22)]$}
\end{picture}
&
\setlength{\unitlength}{.2cm}
\begin{picture}(6, 10)(-3,-5)
\qbezier(0,-3)(0,0)(0,3)
\qbezier(-2.2,0)(0,0)(2.2,0)
\qbezier(2.2,2.2)(0,0)(-2.2,-2.2)
\qbezier(2.2,-2.2)(0,0)(-2.2,2.2)
\put(0.2,2.3){\footnotesize $1$}
\put(2.4,2.4){\footnotesize $1$}
\put(2.3,0){\footnotesize $1$}
\put(2.4,-2.4){\footnotesize $1$}
\put(-2,-5){\footnotesize  $[111]$}
\end{picture}
&
\setlength{\unitlength}{.2cm}
\begin{picture}(6, 10)(-3,-5)
\qbezier(0,-3)(0,0)(0,3)
\qbezier(2.2,2.2)(0,0)(-2.2,-2.2)
\qbezier(2.2,-2.2)(0,0)(-2.2,2.2)
\put(0,2.3){\footnotesize $1$}
\put(2.4,2.4){\footnotesize $2$}
\put(2.4,-2.4){\footnotesize $1$}
\put(-2,-5){\footnotesize  $[21]$}
\end{picture}
&
\setlength{\unitlength}{.2cm}
\begin{picture}(6, 10)(-3,-5)
\qbezier(0,-3)(0,0)(0,3)
\qbezier(-2.2,0)(0,0)(2.2,0)
\put(.3,2){\footnotesize $2$}
\put(2,-1.5){\footnotesize $2$}
\put(-2,-5){\footnotesize  $[(11)1]$}
\end{picture}
&
\setlength{\unitlength}{.2cm}
\begin{picture}(6, 10)(-3,-5)
\qbezier(0,-3)(0,0)(0,3)
\qbezier(-2.2,0)(0,0)(2.2,0)
\put(.2,2){\footnotesize $1$}
\put(2,-1.5){\footnotesize $3$}
\put(0,-5){\footnotesize  $[3]$}
\end{picture}
&
\setlength{\unitlength}{.2cm}
\begin{picture}(6, 10)(-3,-5)
\qbezier(0,-3)(0,0)(0,3)
\put(.3,2){\footnotesize $4$}
\put(-2,-5){\footnotesize  $[(21)]$}
\end{picture}\\
\end{tabular}
\end{center}
\caption{One dimensional non-Kodaira fibers appearing in $D_5$ elliptic fibrations.\label{NKfig}}
\end{figure}
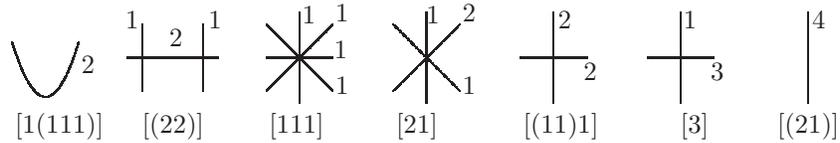

\subsection{Orientifold limits of $D_5$ elliptic fibrations}
The weak coupling limit of F-theory was first introduced by Sen \cite{Sen.Orientifold} in the case of a smooth Weierstrass model. Sen's limit gives a $\mathbb{Z}_2$ type IIB orientifold theory. Weak coupling limits for $E_6$ and $E_7$ elliptic fibrations were obtained in \cite{AE2} where a geometric description of the limit was also presented: a weak coupling limit is simply defined by a transition between a semi-stable fiber and an unstable fiber (semi-stable fibers admit an infinite $j$-invariant while the $j$-invariant of an unstable fiber is ``$\frac{0}{0}$" and so is undefined).

Since $D_5$ elliptic fibrations admit a wide variety of singular fibers, we then have many possible ways in which to explore weak coupling limits. A simple example of a weak coupling limit for $D_5$ elliptic fibrations can be obtained by considering the transition $I_2\rightarrow III$. We realize the weak coupling limit associated with $I_2\rightarrow III$ via the following family:
\begin{align}
Y_{\epsilon} (I_2\rightarrow III):
\begin{cases}
x^2-y^2-\epsilon z( \chi z+ \eta w)=0 \\
w^2-x^2-z\big[hz+(\phi_1+\phi_2)x+(\phi_1-\phi_2)y\Big]=0.
\end{cases}
\end{align}
The discriminant and $j$-invariant then take the following form at leading order in $\epsilon$:
\begin{equation}
\Delta\sim \epsilon^2 h^2(h-\phi_1^2)(h-\phi_2^2)(h\eta^2-\chi^2), \quad j\sim \frac{h^4}{\epsilon^2 (h-\phi_1^2)(h-\phi_2^2)(h\eta^2-\chi^2)}
\end{equation}
It is easy to see that at $\epsilon=0$, the first quadric splits into two planes. Each of these two planes will cut the second quadric along a conic.
The two conics intersects at two distinct points defining in this way a Kodaira fiber of type $I_2$. Such a fiber is semi-stable and admits an infinite value for the $j$-invariant.  At $\epsilon=h=0$, the two conics are tangent to each other and therefore define a Kodaira fiber of type $III$, which is an unstable fiber with an undefined  $j$-invariant of type $``\frac{0}{0}"$. After a glance at the $j$-invariant and discriminant it is immediately  clear that at $h=0$, we have an orientifold \cite{Sen.Orientifold, AE2}.
Taking the double cover $\zeta^2=h$, we see that the other components  $h-\phi_1^2$, $h-\phi_2^2$ and $h\eta^2-\chi^2$ of the discriminant split into brane-image-brane pairs in the double cover wrapping smooth loci mapped to each other by the $\mathbb{Z}_2$ involution $\zeta\rightarrow \zeta$.  All together we have one orientifold and 3 brane-image-brane pairs  wrapping smooth divisors:
$$\text{Brane spectrum at weak coupling}:\quad
\begin{cases}
{O}\phantom{1\pm}    \text{(orientifold)}&:\zeta=0\\
{D}_{1\pm}\  \text{(brane-image-brane)}&:\phi_1\pm \zeta=0\\
{D}_{2\pm}\  \text{(brane-image-brane)}&:\phi_2\pm \zeta=0\\
{D}_{3\pm}\   \text{(brane-image-brane)}&: \chi\pm \zeta\eta=0
\end{cases}
$$
We note that the orientifold $O$ and the brane-image-brane $D_{1\pm}$ and $D_{2\pm}$ are all in the same homology class: $[{O}]=[{D}_{1\pm}]=[{D}_{2\pm}]$.
 The orientifold limit we present corresponds to the transition  $I_2\rightarrow III$ when the brane-image-brane do not coincide with the orientifold.
One can think of each $\phi_i$ ($i=1,2$) as a modulus controlling the separation between the brane $D_{i+}$ and its image $D_{i-}$.
When $\phi_i=0$, $D_{i\pm}$ coincides with the orientifold. If we specialize to the case $\phi_1=\phi_2=0$ we obtain the following family:
\begin{align}
Y_{\epsilon} (I_2\rightarrow I^{*-}_0):
\begin{cases}
x^2-y^2-\epsilon z( \chi z+ \eta w)=0, \\
w^2-x^2-h z^2=0.
\end{cases}
\end{align}
The discriminant and $j$-invariant then take the following form at leading order in $\epsilon$:
\begin{equation}
\Delta\sim \epsilon^2 h^4(h\eta^2-\chi^2), \quad j\sim \frac{h^2}{\epsilon^2 (h\eta^2-\chi^2)}
\end{equation}
Both brane-image-brane pairs $D_{i\pm}$ coincide with the orientifold. Interestingly, in that case, the fiber above $h=0$ when $\epsilon=0$ is not of type $III$ (two rational curves meeting at a double point) but  become the non-Kodaira fiber $I^{*-}_0$ (four lines meeting at a point).

In both cases $I_2\rightarrow III$ and $I_2\rightarrow I^{*-}_0$, since $[O]=[D_{1\pm}]=[D_{2\pm}]=L$ and $[D_{3+}]=[D_{3-}]=L^2$ we expect a universal tadpole relation of the form \cite{AE2}
$$
\varphi_*c(Y)=\rho_* \big[4 c(O)+c(D_{3+})\big].
$$
This relation is  proven in section \ref{tadpole}. Taking the integral of both sides of the Chern class identity above immediately yields the numerical relation predicted by tadpole matching between type IIB and F-theory:
$$
\chi(Y)=4\chi(O)+\chi(D_{3+}).
$$
When $Y$ is a Calabi-Yau fourfold, this relation ensures that the D3 brane tadpole has the same curvature contribution in F-theory and in the type IIB weak coupling limit.

\subsection{Euler characteristic}
In F-theory, a Sethi-Vafa-Witten formula is an expression of the Euler characteristic of an elliptic fibration in terms of the topological numbers of its base.
Such formulas are particularly useful in the context of F-theory compactified on a Calabi-Yau elliptic fourfolds since the Euler characteristic of the fourfold enters the formula for the $D_3$ tadpole.
The first example of such a formula was actually obtained by  Kodaira for an elliptic surface. Sethi, Vafa and Witten computed the Euler characteristic of a  Calabi-Yau fourfold in the cases of an $E_8$ elliptic fibration over a smooth base\cite{SVW}:
$$
\text{Sethi-Vafa-Witten}: \quad \chi(Y)=12 c_1(B)c_2(B) + 360 c_1^3(B).
$$
Klemm-Lian-Roan-Yau have obtained general results for Calabi-Yau elliptic fibrations of type $E_n$ ($n=8,7,6$) over a base of dimension $d$ \cite{KLRY}.
Aluffi and Esole have obtained more general relations  without assuming the Calabi-Yau conditions for $E_n$ ($n=8,7,6$) elliptic fibrations of arbitrary dimension \cite{AE2}. These relations  express the simple geometric fact that the Euler characteristic of the elliptic fibration is a simple multiple of the Euler characteristic of a hypersurface in the base. The following theorem is proven in \cite{AE2}:
\begin{theorem}
Let $\varphi:Y\rightarrow B$ be an elliptic fibration of type $E_n$ ($n=8,7,6$). Such an  elliptic fibration is the zero section of a line bundle $\mathscr{O}(m)\otimes \pi^{*}\mathscr{S}$ where $\mathscr{S}$ is line bundle on the base and $\phi:\mathbb{P}(\mathscr{E})\rightarrow B$ is the (weighted) projective bundle in which the defining equation is written. Then
$$
\varphi_* c(Y)= (10-n) \iota_* c(Z),
$$
where $Z$ is  a smooth hypersurface in the base defined as the zero locus of a section of the line bundle $\mathscr{S}$ and $\iota:Z\rightarrow B$ is the embedding of $Z$ in $B$.
\end{theorem}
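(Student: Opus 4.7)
The plan is to compute $\varphi_* c(Y)$ by viewing $Y$ as a smooth hypersurface in the (possibly weighted) projective bundle $\pi\colon \mathbb{P}(\mathscr{E}) \to B$, cut out by a section of $\mathscr{O}(m) \otimes \pi^* \mathscr{S}$, and then pushing forward first to $\mathbb{P}(\mathscr{E})$ and then to $B$. Let $H = c_1(\mathscr{O}(1))$ and abbreviate $S = c_1(\pi^* \mathscr{S})$. Since $Y$ is a smooth hypersurface with normal bundle $\mathscr{N}_{Y/\mathbb{P}(\mathscr{E})} = \mathscr{O}(m)|_Y \otimes \pi^*\mathscr{S}|_Y$, the adjunction formula gives $c(TY) = \iota^*\bigl[c(T\mathbb{P}(\mathscr{E}))/(1+mH+S)\bigr]$. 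Applying $\iota_*$ and the projection formula (using $[Y] = mH + S$) yields
\begin{equation*}
\iota_* c(TY) \;=\; c(T\mathbb{P}(\mathscr{E})) \cdot \frac{mH+S}{1+mH+S}
\end{equation*}
inside the Chow ring of $\mathbb{P}(\mathscr{E})$.

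Next, the relative Euler sequence $0 \to \mathscr{O} \to \pi^*\mathscr{E}\otimes \mathscr{O}(1) \to T_\pi \to 0$ expresses $c(T\mathbb{P}(\mathscr{E}))$ as $\pi^* c(TB)$ times an explicit polynomial in $H$ and pullbacks of classes from $B$, determined by the splitting type of $\mathscr{E}$ (with the weighted variant in the $E_7$ case where the fiber is $\mathbb{P}_{1,1,2}$). One then pushes forward by $\pi$, using the standard identity $\pi_*(H^i \cdot \pi^* \alpha) = s_{i-r}(\mathscr{E}) \cdot \alpha$, where $r$ is the relative dimension. Since $\iota_{Z,*} c(TZ) = c(TB) \cdot c_1(\mathscr{S})/(1 + c_1(\mathscr{S}))$ by the same adjunction argument applied to $Z \hookrightarrow B$, the theorem reduces to the universal identity
\begin{equation*}
\pi_*\!\left[\, c(T_\pi) \cdot \frac{mH + S}{1 + mH + S}\,\right] \;=\; (10-n) \cdot \frac{c_1(\mathscr{S})}{1 + c_1(\mathscr{S})}
\end{equation*}
in the Chow ring of $B$, which must then be multiplied by $c(TB)$ via the projection formula.

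The main obstacle is the direct verification of the displayed identity. I would first handle $n = 8$, the Weierstrass case, where $\mathscr{E} = \mathscr{O} \oplus \mathscr{L}^2 \oplus \mathscr{L}^3$ and the Euler sequence is cleanest; then $n = 6$, whose $\mathbb{P}^2$-bundle structure is structurally similar; and finally $n = 7$, where the weighted projective bundle $\mathbb{P}_{1,1,2}$ contributes extra factors. In each case the left-hand side expands as a rational expression in $H$ over a polynomial in pulled-back classes; pushing forward collapses all powers of $H$ beyond the relative dimension and leaves an expression in $c_1(\mathscr{L})$ and $c_1(\mathscr{S})$. The striking appearance of the universal constant $(10-n)$ is explained by the identity $12\, c_1(\mathscr{L}) = (10-n)\, c_1(\mathscr{S})$ (valid in all three cases since $\mathscr{S} = \mathscr{L}^6, \mathscr{L}^4, \mathscr{L}^3$ for $n = 8, 7, 6$ respectively), i.e.\ the fact that the discriminant class $12\, c_1(\mathscr{L})$ of the elliptic fibration decomposes as $(10-n)$ copies of the divisor class $[Z] = c_1(\mathscr{S})$. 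This numerical coincidence is essentially what forces the pushforward of the Chern class to collapse into a single hypersurface contribution of multiplicity $(10-n)$, and is the key structural phenomenon behind the theorem.
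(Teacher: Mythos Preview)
Your approach is correct and is precisely the method the paper (and the cited reference \cite{AE2}) uses: adjunction for $Y\hookrightarrow\mathbb{P}(\mathscr{E})$, the relative Euler sequence, the projection formula, and an explicit pushforward to $B$; the paper's own proof of the analogous $D_5$ pushforward formula follows exactly this template, invoking the pushforward identity of \cite{James} for the final step. One small caveat: your closing heuristic $12\,c_1(\mathscr{L})=(10-n)\,c_1(\mathscr{S})$ only pins down the degree-one term of the displayed identity and does not by itself force the full rational-function equality, so the case-by-case verification you describe is genuinely required and not merely cosmetic.
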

Sethi-Vafa-Witten formulas are then immediately obtained by computing the push-forward of the total Chern class of $Y$. A similar formula can be written in great generality for a fibration with generic fiber a plane curve of degree $d$ where the total space of the fibration is a hypersurface in a $\mathbb{P}^2$ bundle \cite{James}, see also \cite{CCVG}.
For $D_5$ elliptic fibrations we prove the following result:
\begin{theorem} Let  $\varphi:Y\rightarrow B$ be a $D5$ elliptic fibration, then
\begin{align}
&\varphi_*c(Y)=\frac{4L(3+5L)}{(1+2L)^2}c(B).\nonumber\\
&\chi(Y)= -\sum_{k=1}^{d} (-2)^k (5+k)L^k c_{d-k}(B),\quad d=dim\ B\nonumber
\end{align}
where $L=c_1(\mathscr{L})$ and $\chi(Y)$ denotes the  topological Euler characteristic of $Y$.
\end{theorem}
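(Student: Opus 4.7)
The strategy is to combine the adjunction formula for complete intersections with the relative Euler sequence for $\mathbb{P}(\mathscr{E})$, then push forward to $B$. First I would observe that since $\iota:Y\hookrightarrow\mathbb{P}(\mathscr{E})$ is the complete intersection of two sections of $\mathscr{O}(2)\otimes\pi^*\mathscr{L}^2$, the class $[Y]$ equals $(2H+2L)^2=4M^2$ with $H=c_1(\mathscr{O}(1))$, $L=c_1(\pi^*\mathscr{L})$, and $M:=H+L$; the projection formula therefore gives
$$\iota_*c(TY)=\frac{c(T\mathbb{P}(\mathscr{E}))\cdot 4M^2}{(1+2M)^2}.$$
The relative Euler sequence $0\to\mathscr{O}\to\mathscr{O}(1)\otimes\pi^*\mathscr{E}\to T_{\mathbb{P}(\mathscr{E})/B}\to 0$ together with $\mathscr{E}=\mathscr{O}_B\oplus\mathscr{L}^{\oplus 3}$ yields $c(T_{\mathbb{P}(\mathscr{E})/B})=(1+H)(1+H+L)^3=(1+M-L)(1+M)^3$, so a second application of the projection formula gives
$$\varphi_*c(Y)=c(B)\cdot\pi_*\!\left[\frac{4M^2(1+M-L)(1+M)^3}{(1+2M)^2}\right].$$

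The key simplification comes from the Chow-ring relation on $\mathbb{P}(\mathscr{E})$. Since $\mathscr{E}=\mathscr{O}\oplus\mathscr{L}^{\oplus 3}$, the standard relation $\sum_{i=0}^{4}c_i(\mathscr{E})H^{4-i}=0$ reduces to $H(H+L)^3=0$, i.e.\ $M^4=LM^3$. Combined with $\pi_*H^k=0$ for $k<3$ and $\pi_*H^3=1$, this forces
$$\pi_*(M^k)=\begin{cases}0 & \text{for }k<3,\\ L^{k-3} & \text{for }k\geq 3.\end{cases}$$
Thus, for any formal expression $f(M)=\sum_{k}a_k(L)M^k$, one obtains the substitution identity
$$L^3\pi_*f(M)=f(L)-a_0-a_1L-a_2L^2,$$
where $f(L)$ denotes the formal substitution $M\mapsto L$.

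Applying this to $f(M)=4M^2(1+M-L)(1+M)^3/(1+2M)^2$, which has $a_0=a_1=0$, $a_2=4(1-L)$, and $f(L)=4L^2(1+L)^3/(1+2L)^2$, the elementary expansion $(1+L)^3-(1-L)(1+2L)^2=L^2(3+5L)$ yields
$$\pi_*f=\frac{4}{L}\cdot\frac{L^2(3+5L)}{(1+2L)^2}=\frac{4L(3+5L)}{(1+2L)^2},$$
proving the first identity. For the Euler characteristic I would use the partial-fraction decomposition $4L(3+5L)/(1+2L)^2=5-4/(1+2L)-1/(1+2L)^2$ and expand each summand geometrically: the coefficient of $L^k$ for $k\geq 1$ equals $-(k+5)(-2)^k$, so extracting the degree-$d$ component of $\varphi_*c(Y)=c(B)\cdot 4L(3+5L)/(1+2L)^2$ yields $\chi(Y)=-\sum_{k=1}^{d}(-2)^k(5+k)L^kc_{d-k}(B)$.

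The only real obstacle is the bookkeeping for $\pi_*$. A brute-force expansion in $H$ using $\pi_*H^{3+j}=s_j(\mathscr{E})=(-1)^j\binom{j+2}{2}L^j$ term by term is workable but tedious, and the resulting sums obscure the closed form. The change of variable $M=H+L$ is the step that collapses the Segre-class computation to the single polynomial substitution $M\mapsto L$, after which both the Chern-class identity and the Euler-characteristic formula follow from elementary algebra.
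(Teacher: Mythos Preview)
Your proof is correct and follows the same overall strategy as the paper: adjunction for the complete intersection, the relative Euler sequence for $c(T\mathbb{P}(\mathscr{E}))$, and then a pushforward along $\pi$. The difference lies only in how the pushforward is executed. The paper simply invokes an external pushforward formula (the reference \cite{James}) to obtain $\pi_*\bigl((1+H)(1+H+L)^3(2H+2L)^2/(1+2H+2L)^2\bigr)=4L(3+5L)/(1+2L)^2$, whereas you supply a self-contained argument: the change of variable $M=H+L$ turns the Chow-ring relation $H(H+L)^3=0$ into $M^4=LM^3$, so that $\pi_*(M^k)=L^{k-3}$ for $k\ge 3$ and the whole pushforward collapses to the formal substitution $M\mapsto L$ after stripping off the terms of $M$-degree at most two. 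This is a genuine simplification over a brute-force Segre-class expansion and makes the proof independent of the cited formula. Your partial-fraction derivation of the Euler-characteristic series is likewise more explicit than the paper, which records the formula as a corollary without spelling out the expansion.
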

In particular, if the $D_5$ elliptic fibration is a Calabi-Yau fourfold, we recover the result of Klemm-Lian-Roan-Yau \cite{KLRY} for the Euler characteristic of a $D_5$ elliptically fibered Calabi-Yau fourfold:
$$
\chi(Y)=12 c_1(B) c_2(B)+36 c_1^3(B).
$$

\section{Geometry of quadric surfaces}

In this section,  we review some basic facts about the geometry of quadric surfaces. We will also describe the irreducible curves in such surfaces.  We will pay a special attention to the degeneration of an elliptic curve in a quadric surface.
Some important transitions that we want to describe are the degenerations of an elliptic curve into two conics or into a twisted cubic and a generator.
Such transitions provide a good geometric insight to understand the systematic classification by Segre symbols as presented in  table \ref{Class.Pencil.P3}.

\begin{definition}
 A quadric is a projective variety defined as the vanishing locus in $\mathbb{P}^n$ of a degree two homogeneous polynomial $Q$ (a quadratic form).
The polynomial $Q$ is given in terms of a
$(n+1)\times (n+1)$ symmetric matrix $\hat{Q}$ as  $Q= x^T \hat{Q} x$, where $x^T=[x_0:x_1:\cdots: x_n]$ is the transpose of $x$  (the projective coordinates of $\mathbb{P}^n$). In this notation, we consider $x$ to be a column vector.
\end{definition}

\subsubsection*{Degeneration of conics and quadric surfaces}
The  quadric hypersurface   $Q= x^T \hat{Q} x$ is non-singular iff the matrix $\hat{Q}$ is non-singular.
The determinant and the minors of the defining  matrix $\hat{Q}$ can be used to describe the degenerations of the quadric $Q$. For example, a quadric in $\mathbb{P}^2$ is usually referred to as a conic. It degenerates into a pair of lines if the determinant of its defining matrix is zero. Furthermore, these two lines coincide if all the first minors\footnote{ The first minors are the determinants of sub-matrices of $\hat{Q}$ obtained by removing one  row and one column. See definition \ref{minors} on page \pageref{minors}. } of the defining matrix vanish.
In the same way, a non-singular quadric surface in $\mathbb{P}^3$ is isomorphic to the Hirzebruch surface $\mathbb{F}_0=\mathbb{P}^1\times \mathbb{P}^1$.
A quadric surface degenerates into a cone if the determinant of its defining matrix is zero. The quadric surface degenerates into a pair of planes if all the first minors of its defining matrix are zero and the two planes coincide if all the second minors vanish as well.

\subsubsection*{Segre embedding and double ruling }
A smooth quadric surface in $\mathbb{P}^3$ is isomorphic to the Hirzebruch surface $\mathbb{F}_0=\mathbb{P}^1\times\mathbb{P}^1$. It can always be expressed as $$x_1 x_4- x_2 x_3=0,$$ where $[x_1:x_2:x_3:x_4]$ are projective coordinates of $\mathbb{P}^3$.
The isomorphism between a quadric surface and the Hirzebruch surface $\mathbb{F}_0$ is  given explicitly by the {\em Segre embedding}.
Let us denote the projective coordinates of $\mathbb{F}_0=\mathbb{P}^1\times\mathbb{P}^1$ as $[s:t]\times[u:v]$. The Segre embedding  is then
$$\mathbb{F}_0\rightarrow \mathbb{P}^3:[s:t]\times[u:v]\mapsto [x_1:x_2:x_3:x_4]=[s u: s v: t  u : t v].$$
A quadric surface admits two different rulings given by each of the two $\mathbb{P}^1$ factors in $\mathbb{F}_0$.
Generators of one  these rulings is called a  line of the quadric surface.
A generator for the first (resp. second) ruling is given by a linear equation in $[u:v]$ (resp. $[s:t]$) and is parametrized by $[s:t]$ (resp.$[u:v]$).  Two distinct generators in the same ruling do not intersect while two distinct generators in different rulings intersect at a unique point.

\subsubsection*{Picard group and bidegree}
The Picard group of a nonsingular quadric surface is $\mathbb{Z}\oplus \mathbb{Z}$ and each of its two generators corresponds to a fiber of one of its two rulings. These two classes intersect at a point and have zero self-intersection.
It follows that curves lying on a nonsingular quadric surface are classified by their bidegree. A curve of bidegree $(p,q)$  is given by a bi-homogeneous polynomial of  degree $p$  in $[s:t]$ and $q$ in $[u:v]$.
\subsubsection*{Intersection numbers and genus} A curve of bidegree $(p,q)$ meets a generator of the first (resp. second) ruling at  $p$ (resp. $q$) points. A smooth curve of bidegree $(p_1,q_1)$ intersects a smooth curve of bidegree $(p_2, q_2)$ at $p_1 q_2+p_2 q_1$ points.
A smooth curve of bidegree $(p,q)$ has  genus $g=(p-1)(q-1)$. We see immediately, that rational curves (curves of genus 0) are those with $p=1$ or $q=1$.
All the curves of bidegreee $(p,q)$ with $p>2$ or $q>2$) are hyperelliptic (genus 2 or higher) while the curves of bidegree $(2,2)$ are elliptic (genus 1).

\subsubsection*{Special curves.}
Certain curves play a central role in our analysis.
A line of $\mathbb{P}^3$ contained in the quadric surface is a  rational curve of bidegree  $(1,0)$ or $(0,1)$. It is called a  {\em generator} of the quadric surface since it is a fiber of one of the two rulings of the quadric surface.   A rational curve of bidegree $(1,1)$ is a conic.
 A rational curve of  bidegree $(1,2)$ or $(2,1)$ is a {\em space cubic} also called a {\em twisted cubic}.
 A curve of bidegree $(2,2)$ is an elliptic curve.

\begin{table}[htb]\label{table.Curves}
\begin{center}
\begin{tabular}{|c|c|c|c|}
\hline
Irreducible curves in a quadric surface & Bidegree & Genus & Degree \\
\hline
\hline
Generator (line) & $(1,0)$ or $(1,0)$& $0$ & $1$\\
\hline
 Conic & $(1,1)$ & $0$& $2$ \\
\hline
Twisted  cubic (=space cubic)& $(1,2)$ or  $(2,1)$ & $0$& $3$\\
\hline
Elliptic curve & $(2,2)$ & $1$ & $4$ \\
\hline
General rational curve & $(1,p)$ or $(p,1)$ & $0$ & $p+1$\\ \hline
Hyper-elliptic curve & $p>2$ and $q>2$ & $g>1$ & $d>4$\\
\hline
\end{tabular}
\end{center}
\caption{Curves in a smooth quadric surface.}
\end{table}

An elliptic curve in a quadric surface has bidegree $(2,2)$. We want to analyze the possible degeneration of a regular elliptic curve within its homology class.
\subsubsection*{Degeneration  into Kodaira fibers}
If the elliptic curve degenerates without splitting into several components, it can be a quartic nodal curve (Kodaira fiber $I_1$) or a quartic cuspidial curve (Kodaira type $II$). When  the elliptic curve degenerates by splitting into multiple curves, we can use the bidegree to explore the different options. We recall that a curve of bidegree $(1,1)$ is a conic, a curve of bidegree $(2,1)$ or $(1,2)$ is a twisted cubic and a curve of bidegree $(1,0)$ or $(0,1)$ is a generator.
We can see from the relations
 $$(2,2)=(1,0)+(1,2),\quad (2,2)=(1,1)+(1,1),$$  that an elliptic curve can degenerate into a generator and a twisted cubic  or into two conics.  In both cases, the configuration consists of two rational curves meeting at two points (Kodaira fiber $I_2$) or at a double point when the two rational curves are tangent to each other (Kodaira fiber $III$).
Since the twisted cubic could split into a conic and a line and a conic can  split into two lines, the previous system can degenerate further into a triangle composed of a conic and two generators
$$(2,2)=(1,1)+(1,0)+(0,1).$$
This corresponds to a  Kodaira fiber of type $I_3$. 
If the three curves intersect at a common point we have a Kodaira fiber of type $IV$. Since a conic can split into two lines, an elliptic curve can also degenerate into a quadrangle (Kodaira fiber of type $I_4$) composed of four generators, two of each rulings:$$(2,2)=(1,0)+(0,1)+(1,0)+(0,1)$$.

\subsubsection*{Non-Kodaira fibers}
Using the intersection of two quadrics in $\mathbb{P}^3$ to model an elliptic curve, there are also several non-Kodaira fibers that can naturally occur. When the elliptic curve degenerates into two conics, the two conic can coincide giving a double conic. Two generators of the same ruling in the $I_4$ fiber can coincide giving a chain  of rational curves with multiplicity $1-2-1$. Such a configuration can specialize further into a multiple fiber of type $2-2$. Finally if both quadric surfaces degenerates into  cones sharing the same vertex, we can have a fiber composed of four lines meeting at a point. For example, the configuration $I_4$ composed of four lines forming a quadrangle can degenerate into four lines meeting at a point ( a 4-star), which we denote by $I^{*-}_0$. If some of these four lines coincide we can have a bouquet of rational curves with multiplicity $1-1-2$,  $2-2$, $1-3$ or $4$. The bouquet  $2-2$ could also be obtained in a smooth quadric surface, by taking the intersection with a double plane tangent to the quadric surface.

\subsubsection*{Non-equidimensional degeneration}
When an elliptic curve is modelled by the intersection of two quadrics in $\mathbb{P}^3$, the two quadrics could coincide given a double quadric surface as a singular fiber. A further degeneration would give two intersecting double planes. Two double planes could also coincide to give a quadruple plane.  If the two quadrics identically vanish, the fiber is the full $\mathbb{P}^3$.

\section{Segre's classification of pencil of quadrics}\label{Section.Segre}

The classification of pencils of quadrics follows the work of Segre \cite{Segre} and relies on  algebraic methods developed by Weierstrass in his studies of quadratic forms.
We refer to \cite{Quadrics} and chapter XI of \cite{analyticgeometry} for a pedagogic and geometric introduction. A purely algebraic approach is presented in  chapter XIII of the second volume of the classical book by Hodge and Pedoe\cite{HodgePedoe}.

\begin{definition}[Pencil of quadrics]
Given two quadrics $Q_1$ and $Q_2$ in $\mathbb{P}^n$, we can consider the pencil $Q:=\mu Q_1+\lambda Q_2$ where $[\mu:\lambda]\in \mathbb{P}^1$. We will often express it in terms of  $r=\lambda/ \mu$ as  $Q:= Q_1+rQ_2$. In that case, $r=\infty$ just means that $[\mu:\lambda]=[0:1]$.
\end{definition}
The vanishing of the minors of the defining matrix of the pencil $Q$ also have a nice geometric interpretation given by the following lemmas:
\begin{lemma}[Characterization of singularities of the complete intersection of two quadrics] If the intersection of two distinct quadrics $Q_1$ and $Q_2$ has a singular point $p$, then either
\begin{itemize}
\item  the determinant of their pencil is identically zero and both quadrics are singular at $p$
\item or the determinant  of their pencil is identically zero and there is a unique quadric of the pencil that is singular at $p$
\item or the determinant of their pencil is not identically zero and there is a unique quadric $(Q_1+r_0 Q_2)$ that is
 singular at $p$ and $r_0$ is a multiple root of the determinant $det (\hat{Q}_1+r \hat{Q}_2)$.
\end{itemize}
\end{lemma}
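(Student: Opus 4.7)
The starting point is the Jacobian criterion: a point $p \in Y := \{Q_1 = Q_2 = 0\}$ is singular if and only if the gradients $\nabla Q_i(p) = 2\hat{Q}_i p$, $i=1,2$, are linearly dependent column vectors in $\mathbb{C}^{n+1}$. Equivalently, there is some non-trivial combination with $(\alpha\hat{Q}_1 + \beta\hat{Q}_2)p = 0$, i.e.\ at least one member of the pencil is singular at $p$. All three alternatives of the lemma correspond to ways this can occur.

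\textbf{Case split.} I would then split on whether $\hat{Q}_1 p$ and $\hat{Q}_2 p$ both vanish. If they do, every matrix $M(r) := \hat{Q}_1 + r\hat{Q}_2$ annihilates $p$, so both $Q_1$ and $Q_2$ are singular at $p$ and the pencil's determinant vanishes identically---this is the first bullet. Otherwise the pair $(\hat{Q}_1 p, \hat{Q}_2 p)$ spans a line in $\mathbb{C}^{n+1}$, the coefficients $[\alpha : \beta]$ are pinned down uniquely, and there is a single member $Q_{r_0} := Q_1 + r_0 Q_2$ of the pencil singular at $p$ (with the obvious convention that $r_0 = \infty$ means $Q_2$ itself is the singular quadric). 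If $\det M(r)$ happens to be identically zero we land in the second bullet; otherwise we are in the third and it remains to show that $r_0$ is a \emph{multiple} root of $f(r) := \det M(r)$.

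\textbf{Multiplicity argument.} For this I would apply Jacobi's formula $f'(r) = \mathrm{tr}\bigl(\mathrm{adj}(M(r))\,\hat{Q}_2\bigr)$ and evaluate at $r_0$. When $\mathrm{rank}\,M(r_0) \leq n-1$, all $n\times n$ minors vanish and $\mathrm{adj}(M(r_0)) = 0$, so $f'(r_0) = 0$ trivially. When $\mathrm{rank}\,M(r_0) = n$, symmetry of $M(r_0)$ together with its corank-one structure forces $\mathrm{adj}(M(r_0)) = \lambda\, p p^T$ for some scalar $\lambda$: it is a symmetric rank-one matrix whose image lies in $\ker M(r_0) = \langle p \rangle$. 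Consequently
\[
f'(r_0) = \lambda\, \mathrm{tr}(p p^T \hat{Q}_2) = \lambda\, p^T \hat{Q}_2 p = 0,
\]
the final equality being the defining relation $p \in Q_2$, i.e.\ $p^T\hat{Q}_2 p = 0$. Hence $r_0$ is a root of $f$ of multiplicity at least two. The case $r_0 = \infty$ is disposed of by the symmetric role of $\hat{Q}_1$ and $\hat{Q}_2$ in the pencil.

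\textbf{Main obstacle.} The only delicate step is the identification $\mathrm{adj}(M(r_0)) = \lambda\, p p^T$, which packages together the symmetry of $\hat{Q}_i$, the corank-one structure of $M(r_0)$, and the vanishing $p^T\hat{Q}_2 p = 0$ coming from $p \in Q_2$. The possibly higher-corank case might look pathological at first but turns out to be the easiest, since the simultaneous vanishing of all $n\times n$ minors of $M(r_0)$ is itself enough to force $r_0$ to be at least a double root of the determinant polynomial $f$.
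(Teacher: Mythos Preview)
Your argument is correct. The Jacobian criterion reduces the question to linear dependence of $\hat{Q}_1 p$ and $\hat{Q}_2 p$, your case split cleanly separates the three bullets, and the multiplicity computation via Jacobi's formula together with the identification $\mathrm{adj}(M(r_0)) = \lambda\, p p^T$ in the corank-one case is the crux and is handled properly. The final vanishing $p^T \hat{Q}_2 p = 0$ is exactly the condition $p \in Q_2$, which is where the hypothesis that $p$ lies on the intersection (and not merely in the kernel of some pencil member) gets used.

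One remark on the comparison: the paper does not actually supply a proof of this lemma. It is stated as a classical fact at the start of \S\ref{Section.Segre}, with references to Bromwich, Snyder--Sisam, and Hodge--Pedoe for the underlying theory of pencils of quadrics. Your proof is essentially the standard classical argument one would extract from those sources, so there is nothing to contrast; you have simply filled in what the paper leaves to the literature.
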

In order to describe the singularity of a pencil of quadrics, it is useful to introduce the following definitions.
\begin{definition}[$s$-Cones]
A variety $C$ in $\mathbb{P}^n$ is said to be a {\em cone with vertex $O$} if for any point $o$ in $O$ and any point  $x$ in $C$, the line $ox$ joining the two is contained in $C$.  When a cone admits a vertex which is a linear space of dimension $s$, the cone is said to be an {\em  $s$-cone}.
It is common to abuse the expression by simply calling a $0$-cone a cone.
\end{definition}
\begin{definition}[$s$-minors]\label{minors} A {\em $s$-minor} (or a minor of order $s$) of a matrix $M$ is the determinant of a matrix  obtained by removing $s$ rows and $s$ columns from $M$.
\end{definition}
When the determinant of the pencil is not identically zero, the pencil is said to be {\em non-degenerate}. The singular fibers defined by non-degenerate pencils can be characterized using the following lemma:
\begin{lemma}[$s$-cones in a pencil of quadrics]\label{lemma.cone}
The discriminant of a non-degenerate pencil of quadrics  in $\mathbb{P}^n$ has in general $(n+1)$ distinct roots, each corresponding to a $0$-cone. Assume that a root $r_i$ of the determinant of the pencil is also a root of all its minors up to order  $s_i$  (where $s_i\geq 0$) but does not vanish for at least one minor of order $(s_i+1)$. In such a case, the quadric is a $s_i$-cone  with vertex a  $s_i$-dimensional linear space and directrix a smooth quadric in a linear subspace of dimension $(n-2-s_i)$.
\end{lemma}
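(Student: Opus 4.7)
The first sentence is a degree and genericity count. The polynomial $P(r):=\det(\hat{Q}_1+r\hat{Q}_2)$ has degree at most $n+1$ in $r$, and it is not identically zero because the pencil is non-degenerate. For a generic pencil, the leading coefficient $\det\hat{Q}_2$ is nonzero so $\deg P = n+1$, and $P$ has $n+1$ simple roots. I would then verify that at each simple root the rank of $\hat{Q}_1+r_i\hat{Q}_2$ drops by exactly one; this is the $s_i=0$ case of the main statement, yielding a quadric whose singular locus is a single point, i.e.\ a $0$-cone.

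For the main claim, the plan is to reduce everything to a single rank computation. I would first establish the rank-versus-minors dictionary: for an $(n+1)\times(n+1)$ matrix $M$, the vanishing of all $s$-minors (determinants of $(n+1-s)\times(n+1-s)$ submatrices) is equivalent to $\mathrm{rank}(M)\le n-s$, while the non-vanishing of at least one $(s+1)$-minor is equivalent to $\mathrm{rank}(M)\ge n-s$. Applied to $M_i:=\hat{Q}_1+r_i\hat{Q}_2$, the hypothesis that all $s$-minors with $s\le s_i$ vanish at $r_i$ while some $(s_i+1)$-minor does not, translates precisely to $\mathrm{rank}(M_i)=n-s_i$, and hence $\dim\ker M_i = s_i+1$.

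The cone structure then follows from a standard splitting argument in a projective-geometric language. Choose a direct-sum decomposition $\mathbb{C}^{n+1}=\ker M_i \oplus W$. Symmetry of $M_i$ gives $u^T M_i w = (M_i u)^T w = 0$ for every $u\in \ker M_i$ and $w\in W$, so for $v = u+w$ we obtain $v^T M_i v = w^T M_i w$. This identifies the quadric $Q_i=\{x^T M_i x=0\}$ as the join (i.e.\ the cone) with vertex $V_i := \mathbb{P}(\ker M_i)$ — a projective linear space of dimension $(s_i+1)-1 = s_i$ — over the quadric cut out by $M_i|_W$ in $\mathbb{P}(W)$. The restriction $M_i|_W$ is non-degenerate by the choice of $W$, so the directrix is a smooth quadric; since $\dim_{\mathbb{C}}W = (n+1)-(s_i+1)=n-s_i$, the ambient $\mathbb{P}(W)$ has projective dimension $n-s_i-1$ and the directrix itself is a smooth quadric hypersurface of dimension $n-s_i-2$, matching the statement.

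The only point requiring care is the rank-minors dictionary invoked in the second paragraph — one must be attentive to the shift between ``size of vanishing minors'' and ``bound on the rank'' — but this is entirely classical and poses no real obstacle. Everything else is formal once the cone $=$ join-of-vertex-with-non-degenerate-directrix identification is in hand.
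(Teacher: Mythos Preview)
Your argument is correct. The paper does not actually prove this lemma: it is stated as a classical fact, with the reader referred to the books of Bromwich, Snyder--Sisam, and Hodge--Pedoe for the details. So there is no ``paper's own proof'' to compare against.

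Your route via the rank--minors dictionary followed by the kernel--complement splitting is exactly the standard one found in those references. One small remark: in the first paragraph you say you ``would then verify'' that at a simple root of $\det(\hat Q_1 + r\hat Q_2)$ the rank drops by exactly one. This is true but deserves a word of justification, since a priori the rank could drop further. The clean way to see it is Jacobi's formula: the derivative of the determinant is a linear combination of the first minors (cofactors), so if all first minors vanished at $r_i$ the root would be at least double. Hence a simple root forces some first minor to survive, i.e.\ $s_i=0$. Everything else in your write-up is solid, including the check that $M_i|_W$ is non-degenerate via $\ker M_i \cap W = 0$.
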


Lemma \ref{lemma.cone} is central to the classification of pencils of non-degenerate quadrics in $\mathbb{P}^n$. In order to describe the classification of non-degenerate pencils, we first introduce some notations that organize the essential data contained in the previous lemma.
Given a pencil of quadrics determined by a matrix $\hat{Q}_1+r \hat{Q}_2$, we denote by  $\ell_{ij}$  the minimal  multiplicity of a common root $r_i$ of the determinant and all  the minors of $$\hat{Q}_1+r \hat{Q}_2$$ up to order $j\leq (n+1)$ .
We denote by $s_i\geq 0$ the smallest integer such that $\ell_{i, s_i}=0$.
Following Weierstrass, it is more efficient to introduce the differences $e_{ij}$ of successive $\ell_{ij}$: $$e_{ij}=\ell_{i,j-1}-\ell_{i,j}\geq 0, \quad j=1,\cdots, s_j.$$ We have $m_i=\sum_{j=1}^{j=s_j} e_{ij}$ and
\begin{equation}
\Delta_{r}:=det(\hat{Q}_1+r \hat{Q}_2)=\prod_{i=1}^{p} (r-r_i)^{m_i}=\prod_{i=1}^{p}\prod_{j=1}^{ s_i}(r-r_i)^{e_{ij}}.
\end{equation}
In order to classify pencils of quadrics, following Weierstrass,  it is  useful to introduce the concept of {\em elementary divisors} and {\em characteristic numbers}. 
Segre symbols provide an organizational tool for characteristic numbers of a pencil:
\begin{definition}[Elementary divisors, characteristic numbers and Segre symbols]
The factors $(r-r_i)^{e_{ij}}$ are called {\em elementary divisors} and the exponents $e_{ij}$ are called the {\em characteristic numbers}.
They are efficiently organized using {\em Segre symbols}:
\begin{equation}
\sigma_{\hat{Q}_1+r \hat{Q}_2}=[(e_{11}, \cdots ,e_{1,s_1})\cdots (e_{p,1}, \cdots, e_{p,s_p})],
\end{equation}
where $e_{i,1}\leq e_{i,2}\cdots e_{i,s_p}$.
All the characteristic numbers associated with the same root are enclosed in parentheses while the set of all roots is enclosed in  square brackets. The sum of the characteristic number enclosed in the same parentheses gives the multiplicity of the corresponding root.
\end{definition}

The following theorem  provides the classification of non-degenerate pencils of quadrics using Segre symbols. The proof can be found in \cite{HodgePedoe}.

\begin{theorem}[Characterization of pencils of quadrics by Segre symbols]
Two non-degenerate pencils of quadrics in $\mathbb{P}^n$  are projectively equivalent if and only if they have the same Segre symbol and there is an automorphism of $\mathbb{P}^1$ identifying their roots of identical characteristic numbers.
\end{theorem}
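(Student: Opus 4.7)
The statement has two directions. For the ``only if'' direction, suppose the pencils $\{\hat Q_1 + r\hat Q_2\}$ and $\{\hat Q_1' + r\hat Q_2'\}$ are projectively equivalent, i.e.\ there exist $P \in \mathrm{GL}(n+1,\mathbb{C})$ and a Möbius transformation $r \mapsto r'(r) = (\alpha r + \beta)/(\gamma r + \delta)$ such that $P^T(\hat Q_1 + r\hat Q_2)P$ agrees up to a nonzero scalar with $\hat Q_1' + r'(r)\hat Q_2'$. The first observation is that the action $\hat Q \mapsto P^T \hat Q P$ preserves, for each $r$, the ranks of all minors of the pencil matrix, and hence preserves the numbers $\ell_{ij}$ (minimal multiplicities of common vanishing of minors up to order $j$). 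The Möbius action on $r$ merely permutes the roots of $\Delta_r$ and rescales; it does not change the multiplicities $m_i$ or the differences $e_{ij} = \ell_{i,j-1} - \ell_{i,j}$. Therefore the two pencils have the same characteristic numbers and their roots are matched by $r'(r)$, which is the stated condition.

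For the nontrivial ``if'' direction, first apply the Möbius transformation that matches the roots of the two pencils (this is induced by a change of basis in the $2$-dimensional space $\mathbb{C}\hat Q_1 \oplus \mathbb{C}\hat Q_2$, and thus preserves projective equivalence). We are then reduced to proving the following reduction theorem: any non-degenerate symmetric pencil $\hat Q_1 + r\hat Q_2$ with prescribed Segre symbol is congruent (in the sense $\hat Q \mapsto P^T\hat Q P$) to a fixed block-diagonal \emph{Weierstrass normal form}
\[
\bigoplus_{i,j} N(r_i, e_{ij}),
\]
where $N(r_i, e)$ is the $e \times e$ symmetric pencil with $\hat Q_2$ given by the anti-diagonal identity $J$ and $\hat Q_1 = r_i J + N'$, with $N'$ the anti-diagonal shifted by one entry (the classical Jordan block turned into a symmetric pair via anti-diagonalization). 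For roots at infinity one uses the analogous dual block. Since this normal form is manifestly determined by the Segre symbol alone, any two pencils with the same Segre symbol and matching roots are both congruent to the same normal form, hence to each other.

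The heart of the argument, and the main obstacle, is establishing the existence of the normal form. The plan is to proceed by induction on $n+1$. Given a root $r_0$ of $\Delta_r$ with characteristic numbers $e_{0,1} \le \dots \le e_{0,s_0}$, one uses Lemma \ref{lemma.cone} to locate the singular quadric $\hat Q_{r_0} := \hat Q_1 + r_0 \hat Q_2$: its kernel has dimension $s_0$, spanned by vectors $v_1,\dots,v_{s_0}$ associated with the largest elementary divisor at $r_0$. One then builds an $e_{0,s_0}$-dimensional ``Jordan chain'' $v_1, v_2, \dots, v_{e_{0,s_0}}$ by successively solving $\hat Q_{r_0} v_{k+1} = \hat Q_2 v_k$; this is the symmetric analog of the classical Jordan chain construction. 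The span $V$ of such a chain carries the block $N(r_0, e_{0,s_0})$ after a suitable rescaling, and the key technical point is that the $\hat Q_2$-orthogonal complement $V^{\perp_{\hat Q_2}}$ is actually a \emph{common} orthogonal complement for the entire pencil, so that the pencil splits as an orthogonal direct sum
\[
\hat Q_1 + r\hat Q_2 \;\cong\; N(r_0, e_{0,s_0}) \;\oplus\; (\hat Q_1'' + r \hat Q_2'')
\]
and the residual pencil on $V^{\perp_{\hat Q_2}}$ is again non-degenerate with Segre symbol obtained by removing one characteristic number from the original. Induction then completes the construction. The main subtlety, which absorbs the real work, is verifying that this orthogonal splitting is well defined precisely when the Jordan chain is chosen correctly inside the generalized eigenspace, which requires comparing the bilinear pairings induced by $\hat Q_2$ and $\hat Q_{r_0}$ on the full chain. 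Once this is carried out, uniqueness of the normal form up to the order of summands is automatic, and the theorem follows. A fully detailed treatment along these lines is given in Hodge--Pedoe \cite{HodgePedoe}, Chapter XIII.
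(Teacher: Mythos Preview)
The paper does not actually prove this theorem; it simply states it and refers the reader to Hodge--Pedoe \cite{HodgePedoe}, Chapter~XIII, for the proof. Your proposal outlines precisely the classical Weierstrass normal form argument found there: invariance of the characteristic numbers under congruence and reparametrization for the easy direction, and reduction to a block-diagonal normal form via Jordan-type chains for the hard direction. So your approach is not merely ``essentially the same as the paper's'' --- it is in fact a genuine sketch of a proof where the paper offers none, and it tracks the cited reference. The sketch is correct in outline; the only caveat is that the orthogonal splitting step (showing that $V^{\perp_{\hat Q_2}}$ is invariant for the whole pencil and that the residual Segre symbol is obtained by deleting exactly one characteristic number) is where the real care is needed, as you yourself flag.
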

We review in table \ref{Class.Pencil.P3}
the classification of non-degenerate pencils of quadrics in $\mathbb{P}^3$.

In order to analyze the singular fibers of $D_5$ elliptic fibrations, we need to determine when the  determinant of the pencil of quadrics has multiple roots and we need to know what the rank of the matrix associated with the pencil.
The determinant of the pencil is quartic. It admits a double root if and only if its  discriminant $\Delta$ vanishes. It  admits a triple root if and only
if  $F$ and $G$ both vanish.  Finally, the quartic admits a  quadruple root if and only if $(q_4, q_3, q_2, q_1)$ is proportional to $(q_3,q_2,q_1,q_0)$.
The quartic has two double roots if and only if it is the square of a quadric, which implies that $q_4 q_1^2-q_3^2 q_0=2q_3^3+q_4^2 q_1 -3 q_4 q_3 q_2=0$.
These classical results are proven for example in chapter 1 of \cite{Quartic}, which are summarized in table \ref{table.quartic}.

\begin{table}[hbt]
\begin{center}
\begin{tabular}{|c|c|}
\hline
Multiple roots & General conditions\\
\hline
One double root & $\Delta=0$\\
\hline
One triple root & $F=G=0$\\
\hline
Two double roots & $q_4 q_1^2-q_3^2 q_0=2q_3^3+q_4^2 q_1 -3 q_4 q_3 q_2=0$\\
\hline
One quadruple root &     $rank \begin{pmatrix}q_4 & q_3 & q_2 & q_1\\ q_3 & q_2 & q_1 & q_0  \end{pmatrix}=1$\\
\hline
\end{tabular}
\caption{Multiple roots for the quartic  $q_0 + 4 q_1  r+ 6 q_2 r^2+4 q_3  r^3+ q_4 r^4$.\label{table.quartic}}
\end{center}
\end{table}

\section{Analysis of the $D_5$ elliptic fibrations with four sections}

The matrix of the pencil describing  our canonical choice for a  $D_5$ elliptic fibration with four sections is
\begin{equation}
\hat{Q}_1+ r \hat{Q}_2=
\begin{pmatrix}
1-r & 0& 0& -\frac{r}{2} e \\
0 & -1 & 0 & -\frac{r}{2} f \\
0 & 0 & r & - c \\
-\frac{r}{2} e & -\frac{r}{2} f & -c & -a-r d
\end{pmatrix}.\label{def.matrix}
\end{equation}
Computing the discriminant, we get
\begin{equation}\label{qdef}
4\mathrm{det}(\hat{Q}_1+r \hat{Q}_2)=q_0 +4 q_1 r + 6 q_2 r^2 + 4q_3 r^3+ q_4 r^4,
\end{equation}
where
\begin{equation}\label{qi}
q_0=c^2, \quad q_1=\frac{1}{4}(4a-c^2),    \quad q_2= \frac{2}{3}(d-a), \quad q_3=\frac{1}{4}(-4 d -f^2+e^2), \quad q_4= f^2,
\end{equation}
The rank of the matrix $(\hat{Q}_1+r \hat{Q}_2)$ of the pencil will be useful to determine the singular fibers. It is given by the following  lemma which is also summarized in table \ref{table.rank}. The proof of this lemma is by direct computation of the cofactors of different order.
\begin{lemma}[Rank of the pencil of quadrics]
The rank of the matrix in equation \eqref{def.matrix}  is never less than $2$. The  matrix has rank  $3$ for a general point of $\Delta=0$. The rank is  $2$ when  $e= 4( a+d)+c^2-f^2=0$  or  $a=c=0$ or  $f=4 d-e^2=0$ and the corresponding roots are respectively  $r=1$, $r=0$ and $1/r=0$.
\end{lemma}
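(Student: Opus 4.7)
The plan is to control the rank of $M(r) := \hat Q_1 + r\hat Q_2$ by computing a few strategically chosen minors rather than the full set. For the lower bound, the principal $2\times 2$ minors on rows and columns $\{1,2\}$ and $\{2,3\}$ evaluate to $r-1$ and $-r$ respectively, and these cannot vanish simultaneously, so $\mathrm{rank}\,M(r)\geq 2$ for every $r$.

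The decisive observation is that the principal $3\times 3$ minor on rows and columns $\{1,2,3\}$ equals $-r(1-r)$. This is nonzero away from $r\in\{0,1\}$ (and the behavior at $r=\infty$ is governed by the corresponding $3\times 3$ minor of $\hat Q_2$). Consequently the rank can drop below $3$ only at the three special values $r\in\{0,1,\infty\}$, and at any other root of $\Delta$ the rank is exactly $3$. Since each of the three rank-$2$ conditions stated below imposes at least two independent constraints on the parameters $(a,c,d,e,f)$, they cut out codimension-$\geq 2$ loci in parameter space; hence for parameters generic subject to $\Delta=0$ (a codimension-$1$ condition), the rank is exactly $3$, establishing the second assertion of the lemma.

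It remains to examine the three exceptional values. At $r=0$ we have $4\det M(0) = q_0 = c^2$, so $r=0$ is a root iff $c=0$; substituting yields the diagonal matrix $\mathrm{diag}(1,-1,0,-a)$, whose rank is $2$ exactly when $a=0$. At $r=1$, the identity $q_0+4q_1+6q_2+4q_3+q_4 = e^2$ (immediate from \eqref{qi}) forces $e=0$; the first row and column of $M(1)$ then vanish, so the rank equals that of the residual $3\times 3$ block on $\{2,3,4\}$, whose determinant a direct cofactor expansion identifies as a nonzero scalar multiple of $4(a+d)+c^2-f^2$. At $r=\infty$ one works with $\hat Q_2$: the leading coefficient $q_4=f^2$ forces $f=0$, after which the second row and column of $\hat Q_2$ vanish and the residual $3\times 3$ block on $\{1,3,4\}$ has determinant proportional to $4d-e^2$. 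The main source of tedium is the cofactor expansion at $r=1$, which is elementary but requires careful bookkeeping of the off-diagonal entries to produce the stated expression $4(a+d)+c^2-f^2$ exactly.
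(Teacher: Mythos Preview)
Your proof is correct and follows essentially the same approach as the paper, which simply states that the lemma follows ``by direct computation of the cofactors of different order.'' Your use of the principal $3\times 3$ minor on $\{1,2,3\}$, equal to $-r(1-r)$, to localize the rank drop to $r\in\{0,1,\infty\}$ is an efficient shortcut that spares you from examining all sixteen $3\times 3$ minors; the remaining checks at each special value are exactly the cofactor computations the paper alludes to.
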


\begin{table}[hbt]
\begin{center}
\begin{tabular}{|c|c|}
\hline
Rank  of $\hat{Q}_r$ & General conditions\\
\hline
 3 & $\Delta=0$\\
\hline
 2 &
{\begin{tabular}{clc}& $e= 4( a+d)+c^2-f^2=0$ &($r=1$)\\ or & $a=c=0$ & ($r=0$) \\ or & $f=4 d-e^2=0$&($r=\infty$)\end{tabular} }\\
\hline
 0 or  1& never \\
\hline
\end{tabular}
\caption{Rank of the pencil of quadrics.\label{table.rank}}
\end{center}
\end{table}

\begin{remark}[Absence of fibers with components of higher  multiplicity or  dimension] With our choice of canonical model for a $D_5$ elliptic fibration with four sections, we have seen that the  rank is never lower than $2$. It follows from a direct inspection of table \ref{Class.Pencil.P3} that no singular fibers of our model have the Segre symbols $[1(111)]$, $[(211)]$ or $[(1111)]$. They correspond respectively to a double conic, two double lines and a double quadric and they  all have  rank 2 or lower. Moreover, we cannot have type $[(22)]$ since it never happens that all the minors of order 2 have a double root. The symbol $[(22)]$ corresponds to two lines and a double line.  All these fibers ($[1(111)]$, $[(211)]$, $[(1111)]$ and $[(22)]$) are those that are of higher dimension or that have components with multiplicities. We see that our choice of fibration has eliminated them from the spectrum of singular fibers.
\end{remark}

\subsection{Kodaira symbols vs Segre symbols}
Following the previous remark,  we are then left with the 9 symbols  $[112]$, $[11(11)]$, $[13]$, $[1(21)]$, $[22]$, $[(11)(11)]$ , $[4]$, $[(31)]$ and  $[(211)]$. Some of these symbols lead to the same type of Kodaira fibers.  This is because in $\mathbb{P}^3$ a line, a plane conic and a twisted cubic are all rational curves (birationally equivalent to a $\mathbb{P}^1$):
$$
\text{a line}\simeq \text{conic} \simeq \text{twisted cubic}\simeq \mathbb{P}^1.
$$
For example type $[22]$ and type $[11(11)]$ both give Kodaira type $I_2$ (two rational  curves intersecting at two distinct points):
$$
[22],\ [11(11)]\  \Rightarrow I_2
$$
For $[22]$ the two rational curves consist of a twisted cubic and a line and for $[11(11)]$, the two rational curves are both conics. In both cases, when the two rational curves become tangent to each other, we have a fiber of Kodaira type $III$. In terms of Segre symbols, it corresponds to type  $[1(21)]$ and $[4]$, respectively for two tangent conics and the twisted cubic and its tangent line.
$$
[1(21)],\ [4]\   \Rightarrow III.
$$
All the remaining  fibers can be simply understood by further degenerations of the two conics.
If one of the conic degenerates into two lines crossing away from the other conic, we have type Kodaira type $I_3$ (three rational curves intersecting as a triangle). If the two lines  intersect on the conic, we have Kodaira type $IV$ (three rational curves meeting at a point). If both conics degenerate into two lines, we obtain a fiber of Kodaira type $I_4$ (four rational curves intersecting as a quadrangle).

\subsection{Pencils of rank 3}
There are four types of pencils of rank 3. Two of them correspond to irreducible fibers:  the  nodal quartic (Segre symbol $[112]$) and  the cuspidial quartic (Segre symbol $[13]$). The two others are composed of two irreducible components (Segre symbol $[22]$ and $[4]$): a twisted cubic and a projective line. The different between the two reducible fibers of rank 3 is the way the two components intersect: when they intersect at two points, we have the Segre symbol $[22]$  and  when the line is tangent to the twisted cubic we have the Segre symbol $[4]$.

\subsection{Pencils of rank 2}
When the quadric $Q_1+r Q_2$ has rank 2, it means that all the  first minors are zero but at least one second order minor is non-zero.  When the rank is two for $r=r_0$, it is useful to use $Q_1+r_0 Q_2$ as one generator of the quadric and $Q_1$ or $Q_2$ as the other. Since for $r\neq 0$ we have $Q_r= 1/r Q_1+Q_2=Q_1+r Q_2$, therefore we define  $Q_\infty$ as $Q_2$.
Using our choice of elliptic fibration, we have seen that $rank (Q_r)=2$ if and only if  $r=0$ or $r=\infty$ or $r=1$. In these three cases, we can take the defining equation of the elliptic fibration to be $Q_1=Q_2=0$ (for $r=0$ or $r=\infty$) and $Q_1+Q_2=Q_1=0$ for $r=1$.
In all these cases, each of the  two planes will cut  the second quadric along a conic and the two conics will intersect at two points. That is type $I_2$ on Kodaira's list while the Segre symbol is $[11(11)]$. If the two intersecting points coincide, it means that the line defined by the intersecting of the two planes intersects the second conic at double points. This is only possible if it is tangent to the conic. The corresponding Segre symbol is $[1(21)]$. The two conics are then also tangent to each other and we have  Kodaira type $III$. If one of the conic splits into two lines, it means that one of the plane is defined by two directrices passing by the same point of the second quadric.  This corresponds to the Segre symbol $[2(11)]$ and Kodaira type $I_3$ since the second conic and the two directrices form a triangle. When the second conic and the two directrices intersect at the same point, we have the Segre symbol $[(31)]$ and Kodaira type $IV$ (a conic and two lines meeting at a point). When the two quadrics split into planes, we have the Segre symbol $[(11)(11)]$: four screw lines forming a quadrangle. This corresponds to Kodaira type $I_4$. We could consider cases, where some of these lines coincide, but it does not happen in our case. The ultimate case, is the singular case where all of the lines intersect at the same point. This is not degenerate pencil and it is not in Kodaira list. We denote it by $I^*_0$.

\section{Sethi-Vafa-Witten formulas}
In F-theory, the Euler characteristic of an elliptic fibration $\varphi:Y\rightarrow B$ plays an important role in the cancellation of the D3 tadpole in the case of compactification with Calabi-Yau fourfolds \cite{SVW}. It also appears in the condition for the cancellation of anomalies of six dimensional theories resulting from a compactification of  F-theory on a Calabi-Yau threefold \cite{GrassiMorrison2}.  Using the Poincar\'e-Hopf theorem,    the  Euler characteristic of a regular variety can be computed as the  degree of its  total Chern class $\chi(Y)=\int c(Y)$. As such integrals are invariant under proper pushforward of the integrand, we can compute the Euler characteristic $Y$ solely in terms of Chern classes on the base $B$ once a proper pushforward $\varphi_*c(Y)$ is computed, i.e.,
\begin{equation}
\chi(Y)=\int_{Y} c(Y)=\int_{B} \varphi_*c(Y).\nonumber
\end{equation}
An expression of the Euler characteristic of the fibration in terms of topological numbers of the base is commomly referred to as a {\em Sethi-Vafa-Witten formula} in the F-theory literature since these three authors produced the first example of such a formula in their analysis of elliptically fibered Calabi-Yau fourfolds of type $E_8$ \cite{SVW}.  Klemm-Lian-Roan-Yau have obtained general results for Calabi-Yau elliptic fibrations of type $E_n$ ($n=8,7,6$) over a base of arbitrary dimension $d$ \cite{KLRY}. In the case of elliptic fourfolds, we have the following theorem:
\begin{theorem}\label{svw 4fold}[\cite{SVW, KLRY}]
Let $Y\rightarrow B$ an elliptically  fibered Calabi-Yau fourfold respectively of type $E_8$, $E_7$, $E_6$ and $D_5$,  then
$$
\begin{cases}
E_8: \quad \chi(Y)=12 c_1(B) c_2(B) +360 c_1^3(B),\\
E_7: \quad \chi(Y)=12 c_1(B) c_2(B) +144 c_1^3(B),\\
E_6: \quad \chi(Y)=12 c_1(B) c_2(B) +\ 72 c_1^3(B),\\
D_5: \quad \chi(Y)=12 c_1(B) c_2(B) +\ 36 c_1^3(B).
\end{cases}
$$
\end{theorem}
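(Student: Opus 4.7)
My strategy is to extract the theorem from the Chern class pushforward formulas already established earlier in the paper, together with the Calabi-Yau condition and a short adjunction computation. First, I will assemble the two inputs: for the $E_n$ models with $n\in\{8,7,6\}$, the earlier theorem (attributed to \cite{AE2}) gives $\varphi_* c(Y) = (10-n)\iota_* c(Z)$, where $Z\subset B$ is the smooth hypersurface cut by a section of a line bundle $\mathscr{S}$; identifying $\mathscr{S}$ from the class of the defining equation in each ambient projective bundle yields $\mathscr{S}=\mathscr{L}^{6},\mathscr{L}^{4},\mathscr{L}^{3}$ respectively, so $[Z]=s_n L$ with $(s_8,s_7,s_6)=(6,4,3)$. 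Using $c(Z)=\iota^*c(B)/(1+[Z])$ and the projection formula, I obtain the uniform expression
$$\varphi_* c(Y) \;=\; \frac{(10-n)s_n L}{1+s_n L}\,c(B) \;=\; \frac{12L}{1+s_n L}\,c(B),$$
since $(10-n)s_n = 12$ in all three cases. For the $D_5$ fibration, the proposition stated just above already gives $\varphi_* c(Y)=\frac{4L(3+5L)}{(1+2L)^2}c(B)$.

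Next I will impose the Calabi-Yau condition. A brief adjunction for each model, using the relative dualizing sheaf of the ambient projective bundle together with the class of the defining equation(s), shows that $K_Y=\pi^*(K_B\otimes\mathscr{L})\big|_Y$; since $\varphi$ has a section, $\pi^*$ is injective on divisor classes, so requiring $K_Y=0$ forces $\mathscr{L}\simeq K_B^{-1}$, i.e. $L=c_1(B)$. With $\dim B = 3$, the Euler characteristic $\chi(Y)=\int_B\varphi_* c(Y)$ is obtained by integrating the degree-$3$ component of the formulas above.

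Extracting that component is elementary. Writing $c(B)=1+c_1+c_2+c_3$ and expanding $(1+s_n L)^{-1}$ as a geometric series, the degree-$3$ term of $\frac{12L}{1+s_n L}c(B)$ is $12\bigl(L c_2 - s_n L^2 c_1 + s_n^2 L^3\bigr)$; substituting $L=c_1$ gives $12c_1 c_2 + 12 s_n(s_n-1)\,c_1^3$, producing the coefficients $360,\,144,\,72$ of $c_1^3$ for $s_n=6,4,3$ respectively. For $D_5$, expanding $4L(3+5L)(1+2L)^{-2}$ through order $L^3$ yields $12L - 28L^2 + 64L^3$, and the degree-$3$ term of the product with $c(B)$ becomes $12 c_1 c_2 + 36\, c_1^3$ after substituting $L=c_1$.

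There is no serious obstacle: the heavy lifting has already been done in the preceding pushforward results, and what remains is one adjunction check per model and a routine series expansion. The step most prone to bookkeeping errors is matching the Calabi-Yau condition to $L=c_1(B)$ in the $D_5$ case, where the complete intersection in $\mathbb{P}(\mathscr{O}_B\oplus\mathscr{L}^{\oplus 3})$ contributes two copies of $\mathscr{O}(2)\otimes\pi^*\mathscr{L}^2$ in the adjunction formula; once these are correctly combined with the relative canonical bundle $\mathscr{O}(-4)\otimes\pi^*\mathscr{L}^{-3}$, the two $\mathscr{O}(\cdot)$ contributions cancel and the same relation $L=c_1(B)$ emerges as in the $E_n$ cases.
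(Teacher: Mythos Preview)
Your proof is correct and follows precisely the route the paper takes to recover these formulas: Theorem~\ref{svw 4fold} is stated as a cited result from \cite{SVW,KLRY}, and the paper's own contribution is the pushforward identities $\varphi_*c(Y)=(10-n)\frac{mL}{1+mL}c(B)$ for $E_n$ (quoted from \cite{AE1,AE2}) and $\varphi_*c(Y)=\frac{4L(3+5L)}{(1+2L)^2}c(B)$ for $D_5$ (Theorem~\ref{SVW formula}), together with the Calabi-Yau condition $L=c_1(B)$ established by adjunction exactly as you describe. Your series expansions and the adjunction check for the $D_5$ complete intersection match the paper's computations line for line.
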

It was later emphazised by Aluffi-Esole \cite{AE1,AE2} that it is much more efficient to consider Sethi-Vafa-Witten formulas for the Euler characteristic as numerical avatars of a much more general relation valid at the level of the total homology Chern classes. In that form, the Sethi-Vafa-Witten formula for $E_n$ ($n=6,7,8$) fibrations takes a particular compact form valid over a base of arbitrary dimension and void of any Calabi-Yau hypothesis \cite{AE2}. From these relations one can easily glean the simple geometric fact that the Euler characteristic of $E_n$ ($n=6,7,8$) elliptic fibrations is but a simple multiple of the Euler characteristic of a hypersurface in the base:
\begin{theorem}[\cite{AE1,AE2}]
Let $\varphi:Y\rightarrow B$ be an elliptic fibration of type $E_n$ ($n=6,7,8$). Such an  elliptic fibration is the zero locus of a section of the line bundle $\mathscr{O}(m)\otimes \pi^{*}\mathscr{L}^m$ on the total space of the (weighted) projective bundle $\pi:\mathbb{P}(\mathscr{E})\rightarrow B$, where $m$ is respectively $(3,4,6)$ for $(E_6, E_7, E_8)$. Then
$$
\varphi_* c(Y)=(10-n)\frac{mL}{1+mL}c(B)=  (10-n)  c(Z_m),
$$
where $Z_m$ is  a smooth hypersurface in the base defined as the zero locus of a section of the line bundle $\mathscr{L}^{m}$.
Moreover, the elliptic fibration is Calabi-Yau, if and only if $c_1(\mathscr{L})=c_1(B)$.
\end{theorem}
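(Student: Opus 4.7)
The plan is to combine adjunction for the hypersurface $Y\subset \mathbb{P}(\mathscr{E})$ with an explicit pushforward through the projective bundle $\pi:\mathbb{P}(\mathscr{E})\to B$. Write $\iota:Y\hookrightarrow \mathbb{P}(\mathscr{E})$, so that $\varphi=\pi\circ\iota$, and let $H=c_1(\mathscr{O}(1))$ denote the tautological class. Since $Y$ is smooth and cut out by a section of $\mathscr{M}=\mathscr{O}(m)\otimes \pi^{*}\mathscr{L}^m$, adjunction together with the projection formula gives
\[
\iota_{*}c(Y)=\frac{c(T\mathbb{P}(\mathscr{E}))\cdot m(H+\pi^{*}L)}{1+m(H+\pi^{*}L)}\in A^{*}(\mathbb{P}(\mathscr{E})),
\]
which I would take as the starting point, with the goal of pushing forward to $B$ via $\pi_{*}$.

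First, I would compute $c(T\mathbb{P}(\mathscr{E}))$ using the relative Euler sequence $0\to \mathscr{O}\to \pi^{*}\mathscr{E}\otimes\mathscr{O}(1)\to T_{\mathbb{P}(\mathscr{E})/B}\to 0$ together with the relative tangent sequence for $\pi$. The outcome is
\[
c(T\mathbb{P}(\mathscr{E}))=\pi^{*}c(TB)\cdot \prod_{i}(1+H+\pi^{*}a_{i}),
\]
where $\{a_{i}\}$ are the Chern roots of $\mathscr{E}$: namely $(0,2L,3L)$ for $E_{8}$, $(0,L,2L)$ for the weighted $E_{7}$ case, and $(0,L,L)$ for $E_{6}$. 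A crucial coincidence is that $\sum_i a_{i}=mL$ in all three cases. Next I would push forward through $\pi$ using the standard identity $\pi_{*}H^{k}=s_{k-r}(\mathscr{E})$ for a $\mathbb{P}^{r}$ bundle (with the appropriate modification for the weighted $E_{7}$ case).

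The heart of the argument is then to verify the identity
\[
\pi_{*}\!\left(\frac{\prod_{i}(1+H+\pi^{*}a_{i})\cdot m(H+\pi^{*}L)}{1+m(H+\pi^{*}L)}\right)=(10-n)\cdot \frac{mL}{1+mL}.
\]
Multiplying by $\pi^{*}c(B)$ and pushing forward recovers the claimed formula, since the right-hand side equals $(10-n)\iota_{Z_{m}*}c(Z_{m})$ by the same adjunction argument applied to $Z_{m}\hookrightarrow B$. The Calabi-Yau criterion is a corollary: computing $c_{1}(Y)$ by adjunction gives $\iota^{*}\pi^{*}(c_{1}(B)-c_{1}(\mathscr{L}))$, which vanishes precisely when $c_{1}(\mathscr{L})=c_{1}(B)$.

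The main obstacle is the algebraic simplification of the pushforward identity. The strategy I would pursue is partial fractions in $H$: decompose the rational function as the sum of a polynomial part in $H$ and a simple pole at $H=-\pi^{*}L-1/m$. The residue at this pole yields the factor $\tfrac{mL}{1+mL}$ (after absorbing the $\pi^{*}L$ shift), while the polynomial part is reduced using the Grothendieck relation $\sum_{j}(-1)^{j}\pi^{*}c_{j}(\mathscr{E})H^{r+1-j}=0$ and then pushed forward via Segre classes. The integer $(10-n)$ is expected to drop out from a Segre-class identity whose simple form is a direct consequence of the coincidence $\sum a_{i}=mL$. Since the list of cases is finite ($n=6,7,8$), I would verify the identity case by case and then confirm the uniform pattern, which matches the $E_{n}$ nomenclature.
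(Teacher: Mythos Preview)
The paper does not actually prove this statement here---it is quoted from \cite{AE1,AE2}---but it does prove the $D_5$ analogue (Theorem~\ref{SVW formula}) by exactly the method you outline: adjunction plus the relative Euler and tangent sequences to express $\iota_*c(Y)$ as a rational expression in $H$ and $\pi^*L$ times $\pi^*c(B)$, followed by a pushforward through $\pi$. The only real difference is in how the pushforward is executed: where you propose to carry it out by hand via partial fractions in $H$ and Segre-class identities, the paper simply invokes the closed-form pushforward formula of \cite{James}. These are two packagings of the same computation, and your case-by-case check would reproduce what that formula gives in one stroke. Your derivation of the Calabi--Yau criterion via adjunction is likewise the same as the paper's.

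One slip to flag: your ``crucial coincidence'' $\sum_i a_i = mL$ is not correct with the data you list. With Chern roots $(0,2L,3L)$, $(0,L,2L)$, $(0,L,L)$ and $m=6,4,3$ one finds $\sum_i a_i = 5L,\,3L,\,2L$, i.e.\ $(m-1)L$ rather than $mL$. (It is precisely the identity $m-a_1-a_2=1$ that forces $\omega_{Y/B}\cong \varphi^*\mathscr{L}$ and hence the Calabi--Yau criterion $c_1(\mathscr{L})=c_1(B)$; see the footnote in the paper's appendix.) There is also some conflation of the weighted and unweighted pictures: in the ordinary $\mathbb{P}^2$-bundle presentation the hypersurface class is $\mathscr{O}(3)\otimes\pi^*\mathscr{L}^m$ in all three cases, not $\mathscr{O}(m)\otimes\pi^*\mathscr{L}^m$, so the factor $m(H+\pi^*L)$ in your displayed formula should be $3H+m\pi^*L$ if you use those Chern roots. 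Since you plan to verify the pushforward identity case by case anyway, these slips are not fatal, but the partial-fractions heuristic you sketch (pole at $H=-\pi^*L-1/m$) is built on the wrong numerology and would need to be redone.
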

We then immediately arrive at the following

\begin{corol}
Let $\varphi:Y\rightarrow B$ be an elliptic fibration of type $E_n$ ($n=6,7,8$) over a base of dimension $d$. Then

$$
\chi(Y)=(10-n)\sum_{k=1}^d (-1)^{k+1} (mL)^{k} c_{d-k}(B),
$$
where $m=3,4,6$ respectively for the $E_6$, $E_7$ and $E_8$ cases.
\end{corol}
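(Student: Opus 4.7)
The plan is to deduce the corollary directly from the preceding theorem by expanding the rational expression $\tfrac{mL}{1+mL}$ as a finite geometric-type series in the cohomology ring of $B$ and then extracting the top-dimensional component before integrating.

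First I would invoke the Poincar\'e--Hopf theorem together with the projection formula to write
\[
\chi(Y)=\int_Y c(Y)=\int_B \varphi_* c(Y),
\]
using the fact that $\varphi$ is proper. The previous theorem gives $\varphi_* c(Y) = (10-n)\,\tfrac{mL}{1+mL}\, c(B)$, so everything reduces to computing the degree-$d$ part of $\tfrac{mL}{1+mL}\, c(B)$ in $A^{*}(B)$.

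Next, since $L\in A^1(B)$ and $\dim B = d$, the class $L^k$ vanishes for $k>d$, so the formal expansion
\[
\frac{mL}{1+mL} \;=\; \sum_{k\geq 1} (-1)^{k+1}(mL)^k
\]
is actually a finite sum modulo classes of degree $>d$. Multiplying by $c(B)=\sum_{j\geq 0} c_j(B)$ and picking out the codimension-$d$ component yields
\[
\Big(\tfrac{mL}{1+mL}\,c(B)\Big)_d \;=\; \sum_{k=1}^{d} (-1)^{k+1}(mL)^k\, c_{d-k}(B).
\]
Integrating over $B$ (only the top-dimensional component contributes) and pulling out the constant $(10-n)$ then gives the claimed formula.

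The only point to be careful about is the geometric-series expansion: one must justify that the identity $\tfrac{mL}{1+mL} = \sum_{k\geq 1}(-1)^{k+1}(mL)^k$, which is normally a formal power-series identity, is a genuine equality in $A^{*}(B)_{\mathbb{Q}}$ truncated at codimension $d$. This is immediate because $(1+mL)\cdot \sum_{k=0}^{d}(-mL)^k = 1 - (-mL)^{d+1} = 1$ in $A^{\leq d}(B)$, so $\tfrac{1}{1+mL} = \sum_{k=0}^{d}(-mL)^k$ modulo higher codimension classes, and multiplication by $mL$ shifts the index. No further obstacle arises, and the corollary follows.
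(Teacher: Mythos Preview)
Your argument is correct and is exactly the approach the paper intends: the corollary is stated immediately after the theorem with the phrase ``We then immediately arrive at the following,'' and your expansion of $\tfrac{mL}{1+mL}$ as a finite geometric series in $A^*(B)$ followed by extracting the top-degree component is precisely what underlies that remark. You have simply made explicit the standard details the paper leaves to the reader.
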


\subsection{Sethi-Vafa-Witten for $D_5$ elliptic fibrations}
In this subsection, we obtain a  Sethi-Vafa-Witten formula at the level of the total Chern class for a smooth  $D_5$ elliptic fibration without any Calabi-Yau hypothesis and over a base of arbitrary dimension. We start by computing the pushforward of the total Chern class of the $D_5$ elliptic fibration:
\begin{theorem}\label{SVW formula} Let $\varphi:Y\rightarrow B$ be a $D_5$ elliptic fibration  and $L=c_1(\mathscr{L})$. Then
\begin{equation}
\varphi_{*}c(Y)=\frac{4L(3+5L)}{(1+2L)^2}c(B)=6c(Z_2)-c(Z_{2,2}),
\end{equation}
where $Z_2$ denotes a divisor in the base of class $2L$ and $Z_{2,2}$ denotes a codimension 2 subvariety of the base of class $(2L)^2$.
\end{theorem}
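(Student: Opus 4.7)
The plan is to realize $Y$ as the complete intersection $D_1\cap D_2$ of two divisors in the projective bundle $\pi:\mathbb{P}(\mathscr{E})\to B$, where $\mathscr{E}=\mathscr{O}_B\oplus\mathscr{L}^{\oplus 3}$ and each $D_i$ has class $2H+2\pi^*L$ with $H:=c_1(\mathscr{O}(1))$. Setting $\varphi=\pi\circ\iota$, I would compute $\varphi_*c(Y)=\pi_*\iota_*c(Y)$ by combining adjunction with the Grothendieck relation on $\mathbb{P}(\mathscr{E})$.

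First, applying the conormal sequence of the complete intersection together with the relative Euler sequence $0\to\mathscr{O}\to\pi^*\mathscr{E}\otimes\mathscr{O}(1)\to T_{\mathbb{P}(\mathscr{E})/B}\to 0$, I obtain $c(T_{\mathbb{P}(\mathscr{E})/B})=(1+H)(1+H+\pi^*L)^3$ and hence
$$\iota_*c(Y)\;=\;4\,\pi^*c(B)\cdot\frac{(1+H)(1+H+\pi^*L)^3\,(H+\pi^*L)^2}{(1+2H+2\pi^*L)^2}.$$
By the projection formula I can pull $\pi^*c(B)$ outside $\pi_*$, so only the bracketed rational expression needs to be pushed forward.

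Second, I would extract the algebraic identity governing that push-forward. The rank-$3$ tautological quotient $Q$ in $0\to\mathscr{O}(-1)\to\pi^*\mathscr{E}\to Q\to 0$ satisfies $c_4(Q)=0$, and expanding this using $c(\mathscr{E})=(1+\pi^*L)^3$ yields the Grothendieck relation
$$H(H+\pi^*L)^3\;=\;0\quad\text{in}\ A^*(\mathbb{P}(\mathscr{E})).$$
The decisive move is the substitution $v:=H+\pi^*L$: the relation becomes $v^4=(\pi^*L)v^3$, which gives $\pi_*v^k=0$ for $k<3$ and $\pi_*v^{3+j}=L^j$ for $j\ge 0$. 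Moreover, the entire numerator and denominator, except the single factor $(1+H)=1+v-\pi^*L$, become polynomial in $v$ alone, so the bracketed fraction has the form $v^2G(v)$ with $G$ a formal power series in $v$ whose coefficients lie in $\mathbb{Z}[L]$.

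The main technical step is then the resulting rational-function push-forward, which collapses via the Segre values above into the clean formula
$$\pi_*\bigl(v^2G(v)\bigr)\;=\;\frac{G(L)-G(0)}{L}.$$
Evaluating with $G(0)=4(1-L)$ and $G(L)=\frac{4(1+L)^3}{(1+2L)^2}$ and simplifying the numerator via $(1+L)^3-(1-L)(1+2L)^2=L^2(3+5L)$ yields the first equality. For the second equality I would compute $c(Z_2)=\frac{2L}{1+2L}c(B)$ and $c(Z_{2,2})=\frac{(2L)^2}{(1+2L)^2}c(B)$ by adjunction for a smooth divisor of class $2L$ and a smooth transverse complete intersection of two such divisors, respectively; a one-line combination then gives $6c(Z_2)-c(Z_{2,2})=\frac{4L(3+5L)}{(1+2L)^2}c(B)$.

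The main obstacle is isolating the substitution $v=H+\pi^*L$ and recognizing the identity $\pi_*(v^2G(v))=(G(L)-G(0))/L$; once these are in hand the computation is purely formal. Verifying the relative Euler formula and the Grothendieck relation in the paper's conventions is routine but must be tracked carefully since $\mathscr{E}$ has the $\mathscr{O}$-summand which contributes the unshifted factor $(1+H)$ in $c(T_{\mathbb{P}(\mathscr{E})/B})$.
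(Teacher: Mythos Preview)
Your argument is correct and follows the same overall architecture as the paper: set up $Y$ as a complete intersection of two divisors of class $2H+2\pi^*L$ in $\mathbb{P}(\mathscr{E})$, compute $c(T_{\mathbb{P}(\mathscr{E})/B})=(1+H)(1+H+\pi^*L)^3$ from the relative Euler sequence, apply adjunction to get $\iota_*c(Y)$, and then push forward by $\pi_*$ using the projection formula. The only substantive difference is in how the push-forward is executed. The paper invokes the general push-forward formula of \cite{James} as a black box, whereas you supply an explicit, self-contained computation: the change of variable $v=H+\pi^*L$ converts the Grothendieck relation $c_4(\pi^*\mathscr{E}\otimes\mathscr{O}(1))=H(H+\pi^*L)^3=0$ into the monomial rule $v^{3+j}=L^jv^3$, so that $\pi_*v^{3+j}=L^j$ and $\pi_*v^k=0$ for $k<3$, from which your identity $\pi_*(v^2G(v))=(G(L)-G(0))/L$ follows immediately. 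Your evaluation $G(0)=4(1-L)$, $G(L)=4(1+L)^3/(1+2L)^2$, and the simplification $(1+L)^3-(1-L)(1+2L)^2=L^2(3+5L)$ are all correct, as is the verification of the second equality via $c(Z_2)=\frac{2L}{1+2L}c(B)$ and $c(Z_{2,2})=\frac{4L^2}{(1+2L)^2}c(B)$. In effect, you have re-derived, in this specific case, the content of the cited push-forward formula; this makes your proof more self-contained than the paper's, at the cost of being slightly longer.
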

\begin{proof}
Let $H=c_1(\mathscr{O}(1))$ and let $L$ denote both $c_1(\mathscr{L})$ and $\pi^{*}c_1(\mathscr{L})$. Using adjunction along with the exact sequences

\begin{eqnarray*}
0\to T_{\mathbb{P}(\mathscr{E})/B}\to T\mathbb{P}(\mathscr{E})\to \pi^{*}TB\to 0 & \\
0\to \mathscr{O}_{\mathbb{P}(\mathscr{E})}\to \pi^{*}\mathscr{E} \otimes \mathscr{O}(1)\to T_{\mathbb{P}(\mathscr{E})/B}\to 0 & \\
\end{eqnarray*}
we get that
\begin{eqnarray*}
i_{*}c(Y)&=&\frac{(1+H)(1+H+L)^3}{(1+2H+2L)^2}\pi^{*}c(TB) \cap [Y] \\
         &=&\frac{(1+H)(1+H+L)^3(2H+2L)^2}{(1+2H+2L)^2}\pi^{*}c(B),
\end{eqnarray*}
where $i:Y\hookrightarrow \mathbb{P}(\mathscr{E})$ is the inclusion. Thus

\begin{equation}
\varphi_{*}c(Y)=\pi_{*}\left(\frac{(1+H)(1+H+L)^3(2H+2L)^2}{(1+2H+2L)^2}\right)c(B)
\end{equation}
by the projection formula. Then by the pushforward formula of \cite{James} we get that
\begin{equation}
\pi_{*}\left(\frac{(1+H)(1+H+L)^3(2H+2L)^2}{(1+2H+2L)^2}\right)=\frac{4L(3+5L)}{(1+2L)^2}=6\cdot \frac{2L}{1+2L}-\frac{4L^2}{(1+2L)^2}
\end{equation}
from which the theorem follows.
\end{proof}
Exploiting the fact that $\int_Y c(Y)= \int_B \varphi_{*}c(Y)$, we obtain the following
\begin{corol}
The Euler characteristic of a smooth $D_5$ elliptic fibration over a base of dimension $d$ is
\begin{equation}
\chi(Y)=6\chi(Z_2)-\chi(Z_{2,2})= -\sum_{k=1}^{d} (-2)^k (5+k)L^k c_{d-k}(B).
\end{equation}
In particular \begin{equation} \begin{cases} dim\ B=1 ,\quad \chi(Y)=12L, \\ dim\ B=2, \quad \chi(Y)=12Lc_1-28L^2, \\ dim\ B=3 , \quad \chi(Y)=12Lc_2-28L^{2}c_1+64L^3.\end{cases}\end{equation}
\end{corol}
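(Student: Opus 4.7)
The plan is to derive the corollary directly from Theorem \ref{SVW formula}, which already provides the pushforward $\varphi_{*}c(Y) = \frac{4L(3+5L)}{(1+2L)^2}c(B) = 6c(Z_2) - c(Z_{2,2})$. Since the Euler characteristic of a smooth variety equals the degree of its total Chern class, and this degree is invariant under proper pushforward, I will use
\begin{equation}
\chi(Y) = \int_Y c(Y) = \int_B \varphi_{*}c(Y),\nonumber
\end{equation}
together with the analogous identities $\chi(Z_2) = \int_B \iota_{*}c(Z_2)$ and $\chi(Z_{2,2}) = \int_B \iota_{*}c(Z_{2,2})$, to reduce the problem to manipulating rational functions in $L$ multiplied by $c(B)$.

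The first equality $\chi(Y) = 6\chi(Z_2) - \chi(Z_{2,2})$ is then an immediate consequence of the pushforward identity in Theorem \ref{SVW formula} after integrating both sides over $B$, since the decomposition $\frac{4L(3+5L)}{(1+2L)^2} = 6\cdot\frac{2L}{1+2L} - \frac{4L^2}{(1+2L)^2}$ is exactly what adjunction yields for the pushforwards of $c(Z_2)$ and $c(Z_{2,2})$ (a smooth divisor in class $2L$ and a smooth complete intersection of two divisors of class $2L$, respectively).

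For the explicit closed form, I would expand $(1+2L)^{-2} = \sum_{n\geq 0}(-2)^n(n+1)L^n$ and multiply by $4L(3+5L) = 12L + 20L^2$. Collecting the coefficient of $L^k$ for $k\geq 1$ yields
\begin{equation}
12(-2)^{k-1}k + 20(-2)^{k-2}(k-1) = -(-2)^k(k+5),\nonumber
\end{equation}
valid also for $k=1$ (where the second term vanishes). Therefore
\begin{equation}
\varphi_{*}c(Y) = -\sum_{k\geq 1}(-2)^k(k+5)L^k \,c(B).\nonumber
\end{equation}
Extracting the component of codimension $d = \dim B$ and integrating gives the stated formula for $\chi(Y)$. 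The three low-dimensional cases follow by direct substitution of $d=1,2,3$ into this sum.

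There is essentially no obstacle here: all the geometric content is already packaged in Theorem \ref{SVW formula}, and what remains is a bookkeeping exercise in power series expansion. The one point that deserves care is the algebraic verification that the coefficient formula $-(-2)^k(k+5)$ extends correctly to $k=1$, so that the sum can be written uniformly from $k=1$ to $d$ without a separate boundary term.
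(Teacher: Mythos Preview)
Your proposal is correct and follows essentially the same approach as the paper: the paper simply states the corollary as an immediate consequence of Theorem~\ref{SVW formula} via $\int_Y c(Y)=\int_B \varphi_*c(Y)$, without spelling out the power-series expansion. Your explicit verification that the coefficient of $L^k$ in $\frac{4L(3+5L)}{(1+2L)^2}$ equals $-(-2)^k(k+5)$ (including the boundary case $k=1$) is exactly the computation the paper leaves to the reader.
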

To recover the formula for the Euler characteristic of a $D_5$ Calabi-Yau fourfold as given in Theorem~\ref{svw 4fold} and more generally consider the physical relevance of $D_5$ fibrations, we need the following

\begin{prop} Let $\varphi:Y\rightarrow B$ be a $D_5$ elliptic fibration. Then $Y$ is Calabi-Yau if and only if
$c_1(\mathscr{L})=c_1(B)$.
\end{prop}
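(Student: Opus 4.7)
The plan is to compute the canonical bundle $K_Y$ via adjunction applied to the complete intersection $Y\subset\mathbb{P}(\mathscr{E})$, reduce the Calabi--Yau condition $K_Y=0$ to an equality of line bundles pulled back from $B$, and then use surjectivity of $\varphi:Y\to B$ to descend the equality to $B$.

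First, I would compute $K_{\mathbb{P}(\mathscr{E})}$. The relative Euler sequence used already in the proof of Theorem~\ref{SVW formula},
\[
0\to \mathscr{O}_{\mathbb{P}(\mathscr{E})}\to \pi^{*}\mathscr{E}\otimes\mathscr{O}(1)\to T_{\mathbb{P}(\mathscr{E})/B}\to 0,
\]
together with $\mathscr{E}=\mathscr{O}_B\oplus\mathscr{L}^{\oplus 3}$, gives $c_1(T_{\mathbb{P}(\mathscr{E})/B})=4H+3L$, where $H=c_1(\mathscr{O}(1))$ and $L=\pi^{*}c_1(\mathscr{L})$. Combined with $0\to T_{\mathbb{P}(\mathscr{E})/B}\to T\mathbb{P}(\mathscr{E})\to \pi^{*}TB\to 0$, this yields
\[
K_{\mathbb{P}(\mathscr{E})}=\pi^{*}K_{B}\otimes\pi^{*}\mathscr{L}^{-3}\otimes\mathscr{O}(-4).
\]

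Next I would apply adjunction to the smooth complete intersection $Y$ cut out by two sections of $\mathscr{O}(2)\otimes\pi^{*}\mathscr{L}^{2}$:
\[
K_Y=\bigl(K_{\mathbb{P}(\mathscr{E})}\otimes\mathscr{O}(2)\otimes\pi^{*}\mathscr{L}^{2}\otimes\mathscr{O}(2)\otimes\pi^{*}\mathscr{L}^{2}\bigr)\big|_{Y}.
\]
Substituting the expression for $K_{\mathbb{P}(\mathscr{E})}$, the $\mathscr{O}(\cdot)$ contributions cancel ($-4+2+2=0$) and the $\mathscr{L}$ contributions combine as $-3+2+2=1$, so
\[
K_Y=\varphi^{*}(K_{B}\otimes\mathscr{L}).
\]

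Finally, $Y$ is Calabi--Yau precisely when $K_Y\cong\mathscr{O}_Y$. Since $\varphi:Y\to B$ is surjective with connected (generically elliptic) fibers, the pullback $\varphi^{*}:\mathrm{Pic}(B)\to\mathrm{Pic}(Y)$ is injective, so $\varphi^{*}(K_{B}\otimes\mathscr{L})$ is trivial if and only if $K_{B}\otimes\mathscr{L}$ is trivial on $B$. Taking first Chern classes and using $c_1(K_B)=-c_1(B)$, this is equivalent to $c_1(\mathscr{L})=c_1(B)$, proving both directions.

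The only step requiring any care is the descent from $Y$ to $B$; one must invoke injectivity of $\varphi^{*}$ on $\mathrm{Pic}$, which follows from surjectivity and connectedness of fibers (and can alternatively be checked at the level of Chern classes since $L=\pi^{*}c_1(\mathscr{L})$ is pulled back from $B$ and $\varphi_*$ of a pullback class involves no kernel in the relevant degree). Everything else is a routine bookkeeping of line bundles.
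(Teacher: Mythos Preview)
Your proof is correct and follows essentially the same route as the paper: use the two short exact sequences for $T\mathbb{P}(\mathscr{E})$ together with adjunction for the complete intersection to obtain $K_Y=\varphi^{*}(K_B\otimes\mathscr{L})$ (the paper records this as $K_Y=\pi^{*}(L-c_1(B))$), and then read off the equivalence. The only difference is that you spell out the line-bundle bookkeeping and the descent step via injectivity of $\varphi^{*}$ on $\mathrm{Pic}$, whereas the paper works directly at the level of first Chern classes and leaves the descent implicit; since the fibration admits a section $s$ with $\varphi\circ s=\mathrm{id}_B$, injectivity of $\varphi^{*}$ is immediate either way.
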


\begin{proof}
Again, using adjunction and the exact sequences listed at the outset of the proof of Proposition~\ref{SVW formula}, we get that

\begin{equation}
K_{Y}=\pi^{*}(L-c_1(B)),
\end{equation}
where $L=c_1(\mathscr{L})$. Thus $K_{Y}=0$ if and only if $L=c_1(B)$.
\end{proof}
Using the well known fact \emph{any} Calabi-Yau fourfold $Y$ has arithmetic genus $\chi_{0}(Y)=2=\frac{1}{12}c_1(B)c_2(B)$ (as we will see more explicitly in the next subsection), we obtain the following simplification of the formula for the topological Euler characteristic of a $D_5$ Calabi-Yau fourfold:
\begin{equation}
\chi(Y)=288+36c_1(B)^{3}.
\end{equation}
Thus $\chi(Y)$ only depends on the first Chern class of the anti-canonical bundle of $B$.
Moreover, if $c_1^3(B)$ is odd, $\chi(Y)$ is divisible by $12$ but not by  $24$.

\subsection{Todd class of a $D_5$ elliptic fibration}
In the case $Y$ is a projective variety, the following proposition provides a simple expression for the Todd class of an elliptic fibration of type $D_5$:
\begin{prop}\label{arithmetic genus}Let $\varphi:Y\rightarrow B$ be a $D_5$ elliptic fibration. Then denoting by $Z:=Z_1$ an hypersurface of $B$ such that $\mathscr{O}_B(Z)\cong\mathscr{L}$, then :
\begin{equation}\label{Tod1}
\varphi_{*}Td(Y)=(1-e^{-L})Td(B)=\chi(Z,\mathscr{O}_{Z}).
\end{equation}
\end{prop}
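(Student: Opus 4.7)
The strategy parallels Theorem \ref{SVW formula}, but now applied to the Todd class. Let $i : Y \hookrightarrow \mathbb{P}(\mathscr{E})$ be the inclusion of the transverse complete intersection of two divisors each of class $2H + 2L$, where $H = c_1(\mathscr{O}(1))$. Adjunction yields
\begin{equation*}
i_* Td(Y) = Td(\mathbb{P}(\mathscr{E})) \cdot \left(1 - e^{-(2H+2L)}\right)^2,
\end{equation*}
and the factorization $1 - e^{-2u} = (1-e^{-u})(1+e^{-u})$ with $u = H+L$ turns the last factor into $(1-e^{-(H+L)})^2 (1+e^{-(H+L)})^2$. From the relative Euler sequence together with $\mathscr{E} = \mathscr{O}_B \oplus \mathscr{L}^{\oplus 3}$ one obtains
\begin{equation*}
Td(\mathbb{P}(\mathscr{E})) = \pi^* Td(B) \cdot \frac{H}{1-e^{-H}} \cdot \left(\frac{H+L}{1-e^{-(H+L)}}\right)^3,
\end{equation*}
so that combining these gives
\begin{equation*}
i_* Td(Y) \;=\; \pi^* Td(B) \cdot \frac{H(H+L)^3 (1+e^{-(H+L)})^2}{(1-e^{-H})(1-e^{-(H+L)})}.
\end{equation*}

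Applying the projection formula, the proof reduces to the identity
\begin{equation*}
\pi_* \left[\frac{H(H+L)^3 (1+e^{-(H+L)})^2}{(1-e^{-H})(1-e^{-(H+L)})}\right] = 1 - e^{-L}
\end{equation*}
in $A^*(B) \otimes \mathbb{Q}[[L]]$. This is a formal power-series pushforward along the $\mathbb{P}^3$-bundle $\pi$, governed by the relation $H(H+L)^3 = 0$ in $A^*(\mathbb{P}(\mathscr{E}))$, and can be carried out via the pushforward formula of \cite{James} already used in Theorem \ref{SVW formula}. A partial-fraction decomposition in $H$ isolates residue contributions at $H = 0$ and at the triple root $H = -L$; after the various cancellations between the numerator factors $(1+e^{-(H+L)})^2$ and the denominator $(1-e^{-(H+L)})$, only the announced $1 - e^{-L}$ survives.

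For the second equality, let $\iota : Z \hookrightarrow B$ be a smooth divisor in class $L$. Adjunction on $B$ gives $\iota_* Td(Z) = Td(B)(1 - e^{-L})$, hence $\varphi_* Td(Y) = \iota_* Td(Z)$. Taking degrees and invoking Hirzebruch--Riemann--Roch,
\begin{equation*}
\chi(Y, \mathscr{O}_Y) = \int_Y Td(Y) = \int_B \varphi_* Td(Y) = \int_Z Td(Z) = \chi(Z, \mathscr{O}_Z),
\end{equation*}
which is the numerical content of the displayed equality.

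The main obstacle is the residue-type simplification in the pushforward step: although formally analogous to the rational computation that produced $\frac{4L(3+5L)}{(1+2L)^2}$ in Theorem \ref{SVW formula}, the transcendental numerators $(1 \pm e^{-(H+L)})^2$ demand more careful bookkeeping of the series expansions. As a consistency check, specialising to $L = c_1(B)$ (the Calabi--Yau condition of Proposition \ref{arithmetic genus}'s adjoint statement) and integrating over a fourfold should reproduce the well-known $\chi(\mathscr{O}_Y) = 2$ for a $D_5$ Calabi--Yau fourfold, which in turn is what gives the simplification $\chi(Y) = 288 + 36\,c_1(B)^3$ used in the preceding subsection.
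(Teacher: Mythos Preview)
Your proposal is correct and follows essentially the same route as the paper: adjunction plus the relative Euler sequence to express $i_*Td(Y)$ as $\pi^*Td(B)$ times $\frac{H(H+L)^3(1-e^{-2(H+L)})^2}{(1-e^{-H})(1-e^{-(H+L)})^3}$, then the pushforward formula of \cite{James} to obtain $1-e^{-L}$, and finally the structure-sheaf sequence for $Z\hookrightarrow B$ together with Hirzebruch--Riemann--Roch for the second equality. Your extra factorisation $(1-e^{-2u})=(1-e^{-u})(1+e^{-u})$ is a cosmetic simplification of the same expression, and your observation that the displayed identity with $\chi(Z,\mathscr{O}_Z)$ is really a numerical one (after integration) is a fair reading of the paper's slightly abusive notation.
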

\begin{proof}
As Todd classes are multiplicative with respect to exact sequences just as Chern classes are, we proceed as in the proof of Proposition~\ref{SVW formula}. Similar considerations yield
$$
i_{*}Td(Y)=\frac{H(H+L)^3(1-e^{(-2H-2L)^2})}{(1-e^{-H})(1-e^{-H-L})^{3}}\pi^{*}Td(B)$$
where $i:Y\hookrightarrow \mathbb{P}(\mathscr{E})$ is the inclusion. So again, computing $\varphi_{*}Td(Y)$ amounts to computing
$$
\pi_{*}\left(\frac{H(H+L)^3(1-e^{(-2H-2L)^2})}{(1-e^{-H})(1-e^{-H-L})^{3}}\right)=(1-e^{-L})$$
The first equality of the proposition follows by the pushforward formula of \cite{James}. Keeping in mind that $L=[D]$, the second equality $(1-e^{-L}) Td(B)= \chi(Z,\mathscr{O}_Z)$ follows from the Hirzebruch-Riemann-Roch theorem. More precisely, the structure sheaf sequence $0\rightarrow \mathscr{O}_B(-Z)\rightarrow \mathscr{O}_B\rightarrow \mathscr{O}_Z\rightarrow 0$ gives a locally free resolution of $\mathscr{O}_Y$. The Hirzebruch-Riemann-Roch formula then gives
$$\chi(Z,\mathscr{O}_Z)=\chi(B,\mathscr{O}_B)-\chi(B,\mathscr{O}(-Z))=Td(B)-e^{-[Z]} Td(B)=(1-e^{-L})Td(B).$$
\end{proof}
\begin{remark}
The relation we obtained for the pushforward of the Todd class of a $D_5$ elliptic fibration actually is valid for the $E_n$ ($n=6,7,8$) cases as well, which can be used to check directly that $c_1(B)c_2(B)=24$ for a Calabi-Yau fourfold of type $D_5, E_6, E_7$ and $E_8$. In the appendix \ref{appendix.todd}, we present a more general derivation valid for any flat genus-$g$ curve fibration using the Grothendieck-Riemann-Roch theorem, from which the $D_5$, $E_6$, $E_7$ and $E_8$ cases will be but a corollary. 
 \end{remark}

\subsection{Relations for the Hodge numbers}

Again, using the pushforward formula of \cite{James} and the fact that $c_1(B)c_2(B)=24$ for the base of a Calabi-Yau $E_n$ fourfold, one easily obtains Sethi-Vafa-Witten formulas for the arithmetic genera $\chi_1$ and $\chi_2$ of a Calabi-Yau $E_n$ fourfolds thus giving us linear relations on the non-trivial Hodge numbers of such a fourfold $Y$ by Hirzebruch-Riemann-Roch:
\begin{align}
\begin{cases}
 \chi_1(D_5) = -40 -\  6c_1(B)^3, \quad  \chi_2(D_5) = 204 +\ 24c_1(B)^3, \\
 \chi_1(E_6) = -40 - 12c_1(B)^3,\quad \chi_2(E_6) = 204 +\ 48c_1(B)^3 ,\\
  \chi_1(E_7) = -40 - 24c_1(B)^3,\quad  \chi_2(E_7) = 204 +\  96c_1(B)^3 ,\\
 \chi_1(E_8) = -40 - 60c_1(B)^3,\quad  \chi_2(E_8) = 204 + 240c_1(B)^3,
\end{cases}
\end{align}
where $$\chi_{1}(Y)=h^{1,2}(Y)-h^{1,1}(Y)-h^{1,3}(Y)$$  and $$\chi_{2}(Y)=h^{2,2}(Y)-2h^{1,2}(Y)$$ by Hirzebruch-Riemann-Roch.
We note that since $Y$ is a Calabi-Yau fourfold, we have $h^{1,0}(Y)=h^{2,0}=h^{3,0}=h^{4,0}(Y)-1=0$ and therefore  $h^{1,1}(Y)=b_{2}(Y)$ and $2h^{1,2}(Y)=b_3(Y)$ (where $b_i(Y)$ denotes the ith Betti number). As such, all that is needed to compute the Hodge numbers of such a fibration are its second and third Betti numbers along with the formulas above. So if the second and third Betti numbers can be computed as functions of topological numbers of the base $B$, all non-trivial Hodge numbers would then be dependent solely on the topology of the base.

\section{Weak coupling limits}

The weak coupling limit of F-theory was first introduced by Sen\cite{Sen.Orientifold}, establishing a clear connection between F-theory and type IIB orientifold theories. The procedure involved smoothly deforming the F-theory elliptic fibration until all the fibers become singular. In particular, the fibers consisted only of nodal curves over a dense open subset $U$ of the base $B$, and cuspidal curves on the (closed) complement $B\smallsetminus U$ which was where the type IIB orientifold was to be placed. As nodal curves have $j$-invariant of $\infty$ (which are a special case of \emph{semi-stable} curves in algebro-geometric parlance), and cuspidal curves have an undefined $j$-invariant of $"\frac{0}{0}"$ (which are said to be \emph{unstable} curves), in \cite{AE2} a purely geometric description of a weak coupling limit for an arbitrary elliptic fibration was abstracted from the special case of Sen's limit by choosing a specialization from a \emph{semi-stable} fiber to an \emph{unstable} fiber, and then deforming the elliptic fibration until the stable fiber lies over a dense open subset of the base and the unstable fiber lies over the complement. Thus the more singular fibers an elliptic fibration admits the more possibilities you have to choose from for semi-stable to unstable specializations, and so more potential weak coupling limits to explore (for a detailed description of this program, we again refer the interested reader to\cite{AE2}). $D_5$ fibrations with their rich structure of singular fibers admit a total of ten stable to semi-stable transitions, providing potentially ten avenues in which to pursue weak coupling limits. In particular, in the case of $D_5$ we obtain for the first time a weak coupling limit involving a non-Kodaira fiber, and show that it leads to a type IIB orientifold theory with three (distinct) pairs of brane-image-branes. We also verify the ``\emph{universal tadpole relation}'' corresponding to this type IIB configuration, which is a Chern class identity involving the Chern classes of the elliptic fibration, and Chern classes of divisors in the base corresponding to the orientifold and D-branes. As in \cite{AE1,AE2}, the identity holds without any Calabi-Yau hypothesis and over a base of arbitrary dimension. Furthermore, we show that the type IIB orientifold configuration with three brane-image-brane pairs is the \emph{only} configuration satisfying the universal tadpole relation in the $D_5$ case.

\subsection{Sen's limit}
In the seminal work of Sen\cite{Sen.Orientifold}, the weak coupling limit of F-theory was first introduced as an orientifold limit of a smooth elliptic fibration in Weierstrass form (or an $E_8$ fibration):

\[
Y:y^2=x^3+fxz^2+gz^3
\]
Here, $Y$ sits in a $\mathbb{P}^2$-bundle and $f$ and $g$ are appropriate sections of line bundles over the base. Such a fibration has nodal fibers over a generic point of the discriminant hypersurface $\Delta(:=4f^3+27g^2)=0$ and the nodal curve specializes to a cusp over $f=g=0$. We note that the defining equation of the discriminant locus has  the geometry of a cusp. To obtain a degenerate fibration in which all fibers are singular (and so realize the type IIB scenario), we parameterize the discriminant using the  traditional normalization of the cusp:
$$
h\mapsto (f,g)=(-3h^2,-2h^3),
$$
leading us to define the degenerate fibration
$$
Y_{h}:\quad y^2=x^3-3h^2x-2h^3.
$$
The fibers of $Y_h$ over points $\underline{O}:(h=0)$ are all cuspidal type $II$ fibers (and so unstable), and the fibers over $B\smallsetminus \underline{O}$ are all nodal type $I_1$ fibers (and so semi-stable). To obtain $Y_h$ as a smooth deformation of $Y$, we perturb $f$ and $g$ by adding independent sections multiplied by a (complex) deformation parameter $\epsilon$  to obtain a family of generically smooth fibrations $Y_{h}(\epsilon)$ in such a way that $Y_h$ is the flat limit of $Y_{h}(\epsilon)$ as $\epsilon\rightarrow 0$:
\begin{equation}
\text{Sen's Weak coupling limit}:\quad (I_1\rightarrow II)
\begin{cases}
Y_{h}(\epsilon): y^2z =x^3+ f x z^2 + g z^3\\
f=-3h^2+ \epsilon \eta \\
g=-2 h^3+ \epsilon h \eta+ \epsilon^2\chi.
\end{cases}
\end{equation}
We can associate with this limit a double cover of the base
\begin{equation}
X: \zeta^2-h=0,
\end{equation}
which is branched over the hypersurface $\underline{O}:h=0$.
The discriminant and $j$-invariant take the following form at leading order in $\epsilon$:
\begin{equation}
\Delta\sim \epsilon^2 h^2(\eta^2 +12 h\chi), \quad j\sim \frac{h^4}{\epsilon^2 (\eta^2+12 h\chi )}.
\end{equation}
We then pullback the limiting discriminant $\Delta_{h}:h^2(\eta^2 +12h\chi)=0$ via the projection $\rho:X\rightarrow B$ of the double cover to obtain divisors in $X$ corresponding to the orientifold and the D7-brane:

\begin{equation}
\rho^{*}\Delta_{h}:\zeta^{4}(\eta^2+12\zeta^{2}\chi)=0.
\end{equation}
The orientifold is then located at ${O}:\zeta=0$ and the D7-brane wraps the locus $D:\eta^2+12\zeta^{2}\chi=0$.
Tadpole matching between F-theory and type IIB predicts that
\begin{equation}\label{tadpole.rel}
2\chi(Y)=4\chi(O)+\chi(D),
\end{equation}
where the LHS of equation \eqref{tadpole.rel} corresponds to the F-theory tadpole and the RHS of \eqref{tadpole.rel} corresponds to the type IIB tadpole. As  $D$ has generalized Whitney umbrella singularities (in \cite{AE1} it was descriptively referred to as a \emph{Whitney D7-brane}), its Euler characteristic must be defined in an appropriate manner, as singular varieties admit several generalizations of topological Euler characteristic. Let $\pi:\overline{D}\rightarrow B$ be the normalization of $D$ composed with the projection to $B$ and let $S:\zeta=\eta=\chi=0$ be the pinch locus of $D$ in $X$. Then taking

\begin{equation}\label{pinch}
\chi(D):=\chi(\overline{D})-\chi(S)
\end{equation}
turns out to be a notion of Euler characteristic which satisfies (4.5), as shown in \cite{AE1}. Furthermore, it was also shown in \cite{AE1} that the tadpole relation 
holds at the level of total homology Chern classes (with pinch locus correction as in \ref{pinch}), without any Calabi-Yau hypothesis on $Y$ and over a base
of arbitrary dimension. Indeed, the physical considerations leading to \eqref{pinch} provide a powerful ansatz from a purely geometric perspective, as it is
not at all obvious why such a general Chern class identity should hold. 

\subsection{Geometric generalization}
Weak coupling limits were generalized to other fibrations not in Weierstrass form
such as $E_7$ and $E_6$ fibrations in \cite{AE2}. 
In the weak coupling limit, the discriminant 
factorizes as follows $$\Delta=h^{2+n} \Delta_1 \Delta_2\cdots \Delta_k, \quad J\sim h^{4-n}/(\Delta_1 \Delta_2\cdots \Delta_k), \quad 0 \leq n \leq 4.$$
and $h$ is a section of $\mathscr{L}^{2}$. 
One can also define the double cover the base branched at $h=0$.  This is the variety $\iota:X\rightarrow B$ such that $X:\zeta^2=h$. This is known as the {\em orientifold limit of F-theory}. The orientifold is the invariant locus of $X$ under the involution $\zeta mapsto -\zeta$. This is the divivor $\zeta=0$ in the double cover and it projects to $h=0$ in the base.  If $n=0$ the spectrum is composed of an orientifold and D7-branes wrapping the divisors $\Delta_i$. If $n>0$, we have a bound state of an orientifold and $n$ brane-image-brane pairs wrapping the same divisor $\zeta=0$ in the double cover and there are also branes wrapping the divisors $\Delta_i$. 
The divisors can take some particular shape: 
\begin{enumerate}
\item{\bf An invariant brane} When $\Delta_i$ does not depend on $h$.
\item{\bf A Whitney brane} When $\Delta_i:\eta^2- h \chi=0$, it has the structure of a cone. But in the double cover, its pullback has the structure of a  Whitney umbrella $\rho^*\Delta:\eta^2- \zeta^2 \chi=0$. Such a divisor has double point singularities along the codimension one loci $\eta=\zeta=0$. The singularity enhances to a cuspidial-like singularity at the codimension two loci $\zeta=\eta=\chi=0$. In F-theory, the Euler characteristic of such a singular divisor is defined in \cite{CDE,AE1}. One first normalize the divisor and then takes its stringy Euler characteristic.   
\item{\bf A brane-image-brane } $\Delta_i: \eta^2-h\psi^2=0$. This is a specialization of the  Whitney brane with $\chi=\psi^2$. In such a case, wehen we go to the double cover, we have  a brane-image-brane pair $\varphi^* \Delta_i=D_{i+}+D_{i-}$ with $D_{i\pm}:\eta\pm \zeta \psi=0$. Such a brane-image-brane pair is not in the same homology class as the orientifold. If  $\Delta_i= h-\eta^2$, we obtain in the double cover a  brane-image-brane pair $\rho^*\Delta:D_{i+}+D_{i-}$ with $D_{i\pm }:\eta\pm \xi=0$. Such a brane-image-brane is in the same homology class as the orientifold and coincide with it when $\eta=0$.
\end{enumerate}
Given a weak coupling limit, the physics of D-branes requires that $$8[O]=\sum_k [D_k].$$ 
This condition is naturally satisfied with an elliptic fibration since $\Delta$ is a section of $\mathscr{L}^{12}$.  
Moreover, compairing the contribution of curvature to the D3 tadpole in type IIB and in F-theory, we have the tadpole relation $$2\chi(Y)=4\chi(O)+\sum_k \chi(D_k).$$ 
In the case of $E_n$ ($n=8,7,6$), this physical requirement was shown in \cite{AE1, AE2} to be related to a more general relation true at the level of the total Chern class: $$2\varphi_* c(Y)=4\rho_* c(O)+\sum_k \rho_* c(D_k).$$

In the next section, we will present the first example of a weak coupling limit of a  $D_5$ elliptic
fibration.

\subsection{A $D_5$ limit}

$D5$ elliptic fibrations with four sections have a total of 8 types of singular fibers with a rich  structure of enhancement. It is easy to see (e.g. by glancing at Figure 1) that they naturally lead to $4+3+2+1=10$ different types  transitions of stable to semi-stable fibers:
\begin{align}
& I_1\rightarrow II,  III, IV, I_0^{*-}, \\
& I_2\rightarrow   III, IV, I_0^{*-}, \\
& I_3\rightarrow  IV, I_0^{*-}, \\
& I_4\rightarrow  I_0^{*-}.
\end{align}
As we have expressed the fiber by their Kodaira notation, it is important to keep in mind that some of these Kodaira fibers (namely, $I_2$ and $III$) corresponds to several non-equivalent Segre symbols.

We will present a limit defined by the specializtion $I_{2}\rightarrow III$, which enhances further to an $I_0^{*-}$ fiber, i.e., the non-Kodaira fiber consisting of a bouquet of four $\mathbb{P}^1$s meeting at a point.
A fiber of type $I_2$ can be realized by two conics intersecting at two distinct points  (Segre symbol $[11(11)]$) or by a twisted cubic meeting at secant $[22]$. In the same way, a fiber of type $III$ can  be realized by two conics tangent at a point (Segre symbol $[1(21)]$) or by a twisted cubic and a tangent line (Segre symbol $[4]$). In the case at hand, the fiber $I_2$ is realized by two conics meeting at two points and the fiber $III$ is realized when the two conics become tangent to each other. To be specific, the two conics will be obtained by allowing the quadric $Q_1$ to degenerate into two planes. The intersection of each of these planes with $Q_2$ will give one of the two conics. The intersection of the two planes is a line which generally intersects the second quadric  at two points, which are the points of intersection of the two conics. However, when the line becomes tangent to the second quadric surface, the two conics are tangent to each other and gives a fiber of type $III$ (Segre symbol $[1(21)]$).
The degeneration can be simply expressed by the following conditions
$$
a= \epsilon\chi, \quad c=\epsilon \eta \quad d=h, \quad e=\phi_1+\phi_2, \quad f=\phi_1-\phi_2,
$$
where $\epsilon$ is the deformation parameter. We obtain the following family of fibrations:
\begin{align}\label{D5lim}
Y_{h}(\epsilon):
\begin{cases}
x^2-y^2-z\epsilon(\chi z+\eta w)=0 \\
w^2-x^2-z\Big[hz+(\phi_1+\phi_2)x+(\phi_1-\phi_2)y\Big]=0.
\end{cases}
\end{align}
In the flat limit of $Y_{h}(\epsilon)$ as $\epsilon\rightarrow 0$, we obtain a degenerate fibration $Y_h$, whose fibers over $B\smallsetminus (h=0)$ are of type $I_2$ (realized by the Segre symbol $[11(11)]$: two conics meeting transversally at two points), the fibers above $\underline{O}:(h=0)$ are generically of type $III$ (realized by the Segre symbol $[1(21)]$: two conics tangent at a point), and the fiber enhances further (inside $\underline{O}$) to an $I_0^{*-}$ fiber (i.e., the non-Kodaira fiber consisting of a "bouquet" of four $\mathbb{P}^1$s meeting at a point) when $\phi_1=\phi_2=0$, satisfying necessary conditions for a weak coupling limit as established in \cite{AE2}.
The discriminant and $j$-invariant then take the following form at leading order in $\epsilon$:
\begin{equation}
\Delta\sim \epsilon^2h^2(h-\phi_1^2)(h-\phi_2^2)(h\eta^2-\chi^2), \quad j\sim \frac{h^4}{\epsilon^2 (h-\phi_1^2)(h-\phi_2^2)(h\eta^2-\chi^2)}.
\end{equation}
We see from this expression that as $\epsilon\rightarrow 0$, the $j$-invariant will diverge to infinity, which ensures that the generic fibers will be semi-stable.

This tells us that $h$ is the location of an orientifold in the base and $(h-\phi_1^2)$, $(h-\phi_2^2)$ and $(h\eta^2-\chi^2)$ will lead to brane-image-brane pairs in a double cover of the base.
We then consider the double cover of the base $\rho:X\rightarrow B$, where $X$ is a hypersurface in the total space of $\mathscr{L}$ given by
$$
X:\zeta^2=h,
$$
where $\zeta$ is a section of $\mathscr{L}$.
The fact that $j\sim h^4$ tells us we have a pure orientifold residing at ${O}:(\zeta=0)$. To locate the varieties upon which the D7-branes wrap, we pullback the limiting discriminant $\Delta_{h}:(h^2(h-\phi_1^2)(h-\phi_2^2)(h\eta^2-\chi^2)=0)$ via $\rho$ to obtain the location of the D7-branes:

$$
\rho^{*}\Delta_{h}:\zeta^{4}(\zeta+\phi_1)(\zeta-\phi_1)(\zeta+\phi_2)(\zeta-\phi_2)(\zeta\eta+\chi)(\zeta\eta-\chi)=0.
$$

We then see that we have three pairs of brane-image-branes intersecting the orientifold ${O}:(\zeta=0)$:

$$
D_{1\pm}:\zeta\pm \phi_1=0, \quad D_{2\pm}:\zeta\pm \phi_2=0, \quad D_{3\pm}:\zeta\eta\pm \chi=0.
$$
We note that $D_{1\pm}$ and $D_{2\pm}$ are in the same homological class as the orientifold $O$ while $D_{3\pm}$ are in the class $2[O]$.
Tadpole matching between F-theory and type IIB predicts the following relation
\begin{equation}
2\chi(Y)=4\chi({O})+4\chi(D)+2\chi(D_3)=8\chi({O})+2\chi(D_3),
\end{equation}
where $D$ and $D_3$ are divisors in $X$ of class $[D]=[D_{1\pm}]=[D_{2\pm}]=L$ and $[D_3]=[D_{3\pm}]=2L$. Not only does relation (4.12) indeed hold, we show in the next subsection that relation (4.12) can be obtained by integrating both sides of the following Chern class identity:

\begin{equation}
\varphi_{*}c(Y)=\rho_{*}(4c({O})+c(D_3)).
\end{equation}

\subsection{Universal tadpole relations}\label{tadpole}
As the Chern class identity (4.13) holds without any Calabi-Yau hypothesis on our $D_5$ elliptic fibration $\varphi:Y\rightarrow B$ or any restrictions on the dimension of $B$, in \cite{AE2}, such an identity was coined a ``\emph{universal tadpole relation}". We classify such universal tadpole relations corresponding to configurations of smooth branes arising from the weak coupling limit of a $D_5$ model and find that there is only one such relation, namely (4.13), corresponding to an orientifold and three brane-image-brane pairs. Intersetingly, in \cite{AE2}, it was shown that $E_7$ fibrations admit a unique universal tadpole relation corresponding to an orientifold and \emph{one} brane-image-brane pair and $E_6$ fibrations admit a unique universal tadpole relation correponding to an orientifold and \emph{two} brane-image-brane pairs. The fact that $D_5$ fibrations seem to stand next in line to the $E_7$ and $E_6$ cases respectively as they admit a unique universal tadpole relation corresponding to an orientifold and \emph{three} brane-image-branes is compelling, as $E_7$, $E_6$ and $D_5$ fibrations admit $2=1+1$, $3=2+1$ and $4=3+1$ sections respectively.

A universal tadpole relation for an elliptic fibration is generically of the form:

\begin{equation}
2\varphi_{*}c(Y)=\rho_{*}(\displaystyle\sum_{i}c(D_{i})),
\end{equation}
where the $D_i$s are divisors of class $a_{i}L$ in $X$ corresponding to orientifolds and/or D-branes, and $L=\rho^{*}c_1(\mathscr{L})$. As the (pullback) of the discriminant locus is of class $12L$, we necessarily have $\displaystyle\sum a_i=12$. Now a general divisor $D$ of class $aL$ $(a\in \mathbb{Z})$ has Chern class

\begin{equation}
c(D)=\frac{aL}{1+aL}c(X)=\frac{aL}{1+aL}\left(\frac{1+L}{1+2L}\rho^{*}c(TB)\cap [X]\right),
\end{equation}
thus

\begin{equation}
\rho_{*}c(D)=\frac{aL(1+L)}{(1+aL)(1+2L)}c(TB)\cap 2[B]=\frac{2aL(1+L)}{(1+aL)(1+2L)}c(B)
\end{equation}
by the projection formula. Since we know $\varphi_{*}c(Y)$ by Proposition~\ref{SVW formula}, a universal tadpole relation (after canceling factors of $c(B)$) for a $D_5$ model then takes the following form:

\begin{equation}
2\cdot \frac{4L(3+5L)}{(1+2L)^2}=\frac{1+L}{1+2L}\displaystyle\sum_{i}\frac{2a_{i}L}{1+a_{i}L}.
\end{equation}
To classify all such relations (if they exist), we retrieve the 77 partitions of the number 12 and simply plug them into (4.17) and hope for the best. It turns out that only one partition does the job, namely $1+1+1+1+1+1+1+1+2+2$, which corresponds precisely to the universal tadpole relation arising from the weak coupling limit we found in the previous subsection ($[{O}]=[D]=L$ and $[D_3]=2L$):

\begin{equation}
2\varphi_{*}c(Y)=\rho_{*}(4c({O})+4c(D)+2c(D_3)).
\end{equation}
Integrating both sides of (4.18) yields the numerical relation (4.12) predicted by tadpole mathcing between F-theory and type IIB. We record our findings in the following

\begin{prop}
Let $\varphi:Y\rightarrow B$ be a $D_5$ elliptic fibration. Then $Y$ admits a unique universal tadpole relation corresponding to the Chern class identity (4.18). Futhermore, the universal tadpole relation is realized via the specialization $I_2\rightarrow III$.
\end{prop}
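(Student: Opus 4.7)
The plan is to sidestep the brute-force enumeration of the $77$ partitions of $12$ suggested in the text and instead solve (4.17) in closed form by partial fractions. First I would rewrite (4.17) by dividing out the common factor $2L/(1+2L)$, obtaining
\begin{equation*}
\frac{4(3+5L)}{(1+L)(1+2L)} \;=\; \sum_i \frac{a_i}{1+a_i L},
\end{equation*}
with the constraint $\sum_i a_i = 12$ coming from the total class of the discriminant. A short computation gives the partial-fraction decomposition of the left-hand side as $\tfrac{8}{1+L} + \tfrac{4}{1+2L}$.

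Next I would exploit the fact that the rational functions $\{1/(1+aL)\}_{a \in \mathbb{Z}_{>0}}$ are $\mathbb{Q}$-linearly independent in $L$, since they have distinct simple poles at $L = -1/a$. Grouping the $a_i$ by value and setting $n_a := \#\{i : a_i = a\}$ converts the right-hand side into $\sum_a a\, n_a/(1+aL)$; comparing residues at each pole then forces $n_a = 0$ for $a \geq 3$, together with $n_1 = 8$ and $2n_2 = 4$, i.e.\ $n_2 = 2$. The partition is thus $1^{8}2^{2}$, uniquely, and the corresponding universal tadpole is exactly (4.18):
\begin{equation*}
2\varphi_{*}c(Y) \;=\; \rho_{*}\bigl(4c(O) + 4c(D) + 2c(D_3)\bigr),\qquad [O]=[D]=L,\ [D_3]=2L.
\end{equation*}

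Finally, to connect this algebraic partition to the geometry, I would revisit the pulled-back discriminant of the $I_2 \to III$ limit from (4.11), namely
\begin{equation*}
\rho^{*}\Delta_h \;=\; \zeta^{4}(\zeta-\phi_1)(\zeta+\phi_1)(\zeta-\phi_2)(\zeta+\phi_2)(\zeta\eta-\chi)(\zeta\eta+\chi).
\end{equation*}
This factorisation contributes the orientifold $O$ with multiplicity $4$ and class $L$, the four brane-image-brane components $D_{1\pm}, D_{2\pm}$ of class $L$, and the pair $D_{3\pm}$ of class $2L$, reproducing $1^{8}2^{2}$ exactly. Integrating both sides of (4.18) then recovers the numerical tadpole (4.12).

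The main obstacle is conceptual rather than computational: one must recognise at the outset that uniqueness reduces to linear independence of $\{1/(1+aL)\}$ rather than to a combinatorial sweep over partitions. Once that observation is in hand, the rest is elementary algebra, combined with the already-established Theorem~\ref{SVW formula} for $\varphi_{*}c(Y)$ and the pushforward formula (4.16) for $\rho_{*}c(D)$.
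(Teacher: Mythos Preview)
Your argument is correct and is a cleaner route than the paper's. The paper establishes the same reduction to (4.17) and then simply enumerates all $77$ partitions of $12$ and checks each one by direct substitution, reporting that only $1^{8}2^{2}$ survives. You replace this computer sweep with the observation that, once (4.17) is rewritten as
\[
\frac{4(3+5L)}{(1+L)(1+2L)} \;=\; \sum_i \frac{a_i}{1+a_iL},
\]
both sides are rational functions in a formal variable $L$ (this is exactly what ``universal, over a base of arbitrary dimension'' means), so one may compare residues at the simple poles $L=-1/a$. The partial-fraction decomposition $8/(1+L)+4/(1+2L)$ then forces $n_1=8$, $n_2=2$, $n_a=0$ for $a\ge 3$, and the constraint $\sum a_i=12$ is recovered automatically. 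Your method also makes transparent \emph{why} there is no other solution, whereas the paper's enumeration only certifies this a posteriori. One small expository point: the passage from (4.17) to your displayed equation involves dividing by $2L(1+L)/(1+2L)$, not just $2L/(1+2L)$; the result you wrote is correct, but the description of the step is slightly off. The geometric realisation via the $I_2\to III$ limit and the factorised discriminant is handled identically in both arguments.
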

\begin{remark}
With the exception of the four transitions $I_1\rightarrow (II, III, IV, I^{*-}_0)$, all other transitions are specializations of $I_2\rightarrow III$. So we can expect to find other weak coupling limits satisfying the tadpole condition as well. We present some examples for each case.
\end{remark}

\subsection{A Weak coupling limit with a non-Kodaira fiber}
By specializing the limit $I_2\rightarrow III$, we can define a configuration corresponding to the transition $I_2\rightarrow I_0^{*-}$.
The specialization is $\phi_1=\phi_2=0$, and gives
$$
I_2\rightarrow I_0^{*-}:\Delta\propto h^4 \epsilon^2(h \eta^2-\chi^2), \quad j\sim \frac{h^2}{\epsilon^2(h \eta^2-\chi^2)}.
$$
We see that two of the brane-image-brane pairs that we had in the case $I_2\rightarrow III$ are now wrapping the same divisor as the  orientifold. For that specialization, in the weak coupling limit $\epsilon \rightarrow 0$, the generic fiber is of type $I_2$ and it specializes to a fiber of type $I^{*-}_0$ over the orientifold. All together we have an orientifold and 2 pairs of brane-image-branes on top of it and an additional pair of brane-image-brane on $\xi \eta\pm \chi=0$.

If we let only one of the two brane-image-brane pairs to coincide with the orientifold (say we specialize to  $\phi_1=0$), we have a transition $I_2\rightarrow IV$.
$$
I_2\rightarrow IV:\Delta\propto h^3 \epsilon^ (h-\phi_1^2)(h \eta^2-\chi^2), \quad j\sim \frac{h^3}{\epsilon^2(h \eta^2-\chi^2)}.
$$
At weak coupling we have a brane-image-brane pair on top of the orientifold ($D_{1\pm}= O: \xi=0$)  and two other brane-image-brane pairs, namely $D_{2\pm}:\xi\pm \phi_2=0$ and $D_{3\pm}:\xi \eta\pm \chi=0$

\subsection{Other limits}
\subsubsection{$I_2\rightarrow III$}
We will have the same discussion if we consider the following limits which are also of the type $I_2\rightarrow III$, but involve different choices of what the rational curves that form the fiber $I_2$ are:
{\small
\begin{align}
\begin{cases}  a=\frac{1}{4} e^2-\frac{h}{4}, \\ c=\frac{1}{2}(\phi_1+\phi_2), \\  d = \frac{1}{4}e^2+\epsilon \chi, \\  e=\frac{1}{2}(\phi_1-\phi_2),\ f=2 \epsilon \eta,\end{cases}\
\begin{cases}  a=\frac{1}{4}(f^2-c^2- 4 d)+\frac{\epsilon}{2} \chi, \\  c  =\frac{1}{2}(\phi_1+\phi_2),\\ d=\frac{1}{4}(c^2+2 f^2-h),\\
e=\epsilon \eta , \  f =\frac{1}{2}(\phi_1-\phi_2)\end{cases}
\end{align}}
Both lead to the same limit as before
$$
\Delta\propto h^2 \epsilon^2(h-\phi_1^2)(h-\phi_2^2)(h \eta^2-\chi^2), \quad j\sim \frac{h^4}{\epsilon^2(h-\phi_1^2)(h-\phi_2^2)(h \eta^2-\chi^2)}.
$$

We can then perform specialization to $I_2\rightarrow IV$ and $I_2\rightarrow I_0^{*-}$.

\subsection{Weak coupling limits, an overall look}

The brane configurations at weak coupling limit satisfies the condition $8[O]=\sum [D_i]$ and  the tadpole matching conditions that compare the contribution of curvature in type IIB and in F-theory:   $2\chi(Y)=4\chi(0)+\sum_i D_i $. When this condition hold, the $G$-flux in F-theory should be accounted completely by the D7-brane fluxes in the weak coupling limit. Fluxes are present in the orientifold limit for example in presence of a brane-image-brane away from the orientifold. A brane-image-brane coincide with the  orientifold only if  they are in the same homology class. It is interesting to look at the results of the weak coupling limits of $D_5$ fibrations in the continuity of the weak coupling limits of  $E_n$ ($n=8,7,6$) fibrations. One will see that pattern. For example, a $E_{n}$ ($n=8,7,6,5$ and $E_5=D_5$) elliptic fibration can describe up to $(n-8)$ brane-image-brane pairs, it has $(9-n)$ sections and admits singular fibers with up to  $(9-n)$ components. In particular, one of them is a fiber of type $I_{9-n}$. Their weak coupling limits that satisfies the tadpole condition are:

\begin{table}[!h]
\begin{center}
\begin{tabular}{|c|c|l|l|}
\hline
Model & Type & Whitney brane & Brane-Image-brane  \\
\hline
$E_8$ & $I_1\rightarrow I_2$& $8[O]$ &      \\
\hline
$E_7$& $I_1\rightarrow II$ & $6[O]$& $\  [O]$  \\
\hline
$E_7$& $I_2\rightarrow III$ & & $4[O]$    \\
\hline
$E_6$ & $I_2\rightarrow III$& & $\ [O]$, $3[O]$ \\
\hline
$E_6$ & $I_2\rightarrow IV$ & & $\ [O]$, $3[O]$ \\
\hline
$D_5$ &$I_2\rightarrow III$ & & $\  [O]$, $\ [O]$, $2[O]$ \\
\hline
$D_5$ & $I_2\rightarrow IV$ & &  $\  [O]$, $\  [O]$, $2[O]$  \\
\hline
$D_5$ & $I_2\rightarrow I^{*-}_0$  & &  $\  [O]$, $\  [O]$, $2[O]$  \\
\hline
 \end{tabular}
\end{center}
\caption{Geometric weak coupling limit and spectrum. In this table $[O]$ is the homology class of the orientifold in the double cover of the base of the elliptic fibration. In the two column, the branes are identified by their homology class. The Whitney branes are always singular with an equation of the type $\eta^2-\zeta^2 \chi=0$. The brane-image-brane are obtained as the factors of a Whitney brane with $\chi=\psi^2$. So they are given by $\eta\pm \eta\psi=0$. When $\psi$ is just a constant, the brane-image-brane pair is constituted of branes in the same homology class as the orientifold. \label{wcl}}
\end{table}

\begin{enumerate}
\item[$E_8$ ($I_1\rightarrow II$):] This is the original example of a weak coupling limit obtained by Sen. The configuration satisfying the tadpole condition corresponds to an orientifold and a Whitney brane $D$ (in the homology class $8[O]$). To satisfy the tadpole condition, the singularities of the Whitney brane have to be taken into account. The appropriate way to do it is to introduce new Euler characteristic $\chi_O(D)$ \cite{CDE, AE1}. The tadpole condition is $2\chi(Y)=4\chi(O)+\chi_o(D)$.

\item[$E_7$ ($I_2\rightarrow III$):] An orientifold and a brane-image-brane pair $D_\pm$ with each branes in the homology class $4[O]$. Each brane is smooth and the configuration satisfies the tadpole condition. This is the only configuration that satisfies the tadpole relation with only smooth branes. 
The tadpole condition is $\chi(Y)=2\chi(O)+\chi(D)$.
\item[$E_7$ ($I_1\rightarrow II$):] This is the another  configuration that satisfies the tadpole condition for a $E_7$ elliptic fibration.  It corresponds to the transition of a  nodal curve to a cusp, just like the original Sen's limit. However, with the $E_7$ fibration, it leads to an orientifold $O$, a brane-image-brane pair $D_{1\pm}$ with each brane in the same homology class as the orientifold ($[D_1]=[O]$) and a Whitney brane $D_2$ in the homology class  $[D_2]=6[O]$. 
The tadpole condition is $2\chi(Y)=4\chi(O)+2\chi(D_1)+\chi_o(D_2)$.
\item[$E_6$ ($I_2\rightarrow III$):] The fiber $I_2$ is constituted by  a conic in $\mathbb{P}^2$ and a secant line. The fiber $III$ is the limit in which the secant line becomes tangent to the conic. At the weak coupling limit, we get  an orientifold and two brane-image-branes pairs, one pair is constitutes of two branes $D_{1\pm}$ in the same homology class as the orientifold and the other pairs involving two branes $D_{2\pm}$ in the homology class $[D_2]=3[O]$. 
The tadpole condition is $\chi(Y)=3\chi(O)+\chi_(D_2)$.
\item[$E_6$ ($I_2\rightarrow IV$):] This is the specialization of the previous configuration when over the orientifold, the fiber $III$ is replaced by a fiber $IV$ obtained by a degeneration of the conic into two lines. Physically, this happens when the brane-image-brane pair $D_{1\pm}$ composed of two branes in the same homology class as the orientifold coincide  with the orientifold.  
The tadpole condition is unchanged $\chi(Y)=3\chi(O)+\chi_(D_2)$.
\item[$D_5$ ($I_2\rightarrow III$)]: This configuration was obtained by using two  conics meeting at two points and becoming tangent to each other in the orientifold limit. In view of the many ways we can define the two conics there are at least 3 different ways to obtain this configuration.  It leads to an orientifold and three brane-image-branes pairs $D_{1\pm}, D_{2\pm}$ and $D_{3\pm}$ with $D_{1\pm}$ and $D_{2\pm}$ constituted of branes in the same homology class as the orientifold and the third pair  is constituted of branes $D_{3\pm}$ in the homology class $[D_3]=4[O]$.  The tadpole relation is $\chi(Y)=4\chi(O)+\chi_(D_3)$. 
\item[$D_5$ ($I_2\rightarrow IV$)]: This is a specialization of the previous configuration when one of the brane-image-brane pairs $D_{1\pm}$ or $D_{2\pm}$ coincides with the orientifold. The fiber $IV$ is constituted by a conic and two lines all meeting at a common point. It can be seen as the limit of the previous case when over the orientifold, one of the conic splits into two lines. The tadpole condition is unchanged: $\chi(Y)=4\chi(O)+\chi_(D_3)$.

\item[$D_5$ ($I_2\rightarrow I^{*-}_0$)]: This is the first example of a weak coupling limit involving a non-Kodaira fiber. It is a specialization of the previous configuration when the two brane-image-branes pairs $D_{1\pm}$ and $D_{2\pm}$ which are in the same cohomology class as the orientifold actually coincide with it. The tadpole condition is unchanged: $\chi(Y)=4\chi(O)+\chi_(D_3)$.

\end{enumerate}

\section{Conclusion}

In this paper, we have studied the structure of elliptic fibrations $\varphi:Y\rightarrow B$ of type $D_5$ with a view toward F-theory. The generic fiber of a $D_5$ elliptic fibration is a smooth quartic space curve of genus one modeled by the complete intersection of two quadrics in $\mathbb{P}^3$. In the canonical model we consider, the elliptic fibration is endowed with a divisor intersecting every fiber at four distinct points. These four points defines  naturally  four (non-intersecting) sections of the elliptic fibration. 

A generic smooth $D_5$ elliptic fibration admits a rich spectrum of singular fibers composed at most of four intersecting rational curves as summarized in figure \ref{figure.enhancement}.  The classification of these singular fibers is a well studied problem of classical algebraic geometry that is more efficiently reformulated in terms of  pencils of quadrics in $\mathbb{P}^3$ and their corresponding Segre symbols as reviewed in  section \ref{Section.Segre} and sumarized in table  \ref{Class.Pencil.P3}. A $D_5$ elliptic fibration admits fibers that are not in the list of Kodaira. We have reviewed them in  figure \ref{NKfig}. These  non-Kodaira fibers are always located over loci in codimension two or higher in the base. In our canonical model, there is only one non-Kodaira fiber, namely the fiber that we call $I^{*-}_0$ composed of four lines meeting at a common point.  We  have also computed several  topological invariants of $D_5$ elliptic fibrations like the   Euler characteristic, their total Chern class and the Todd class  over a base of arbitrary dimension void of any Calabi-Yau hypothesis.

We have also analyzed birational equivalent models of the $D_5$ elliptic fibration leading to $E_6$ elliptic fibrations and a modified version of the  $E_7$ elliptic fibration. While the $E_6$ birational equivalent model has only its usual $I_1, II, I_2, III$ and $IV$ Kodaira singular fibers \cite{AE2}, the $E_7$ birational equivalent model admits on top of its usual $I_1, II, I_2, III$ Kodaira fiber (see \cite{AE2}) an additional fiber whih is not in Kodaira list and which is  composed of a double conic. An  $E_7$ model can always be expressed as a $D_5$ elliptic fibration with one of the two quadric surface being rigid.  In that framework, the  non-Kodaira singular fiber corresponds to a  Segre symbol $[1(111)]$ (see figure \ref{NKfig}). The non-Kodaira fiber $I^{*-}_0$ of  our canonical $D_5$ model is mapped through the birational equivalence to a fiber of type $IV$ of the $E_6$ model and the double conic of the new $E_7$ model. This illustrates how birational equivalent models can have different fiber structure.

The classification of the singular fibers of a $D_5$ elliptic model can be used to  define interesting gauge theories. This will require specializing the model in order to have certain singular fibers with multiple nodes appearing over  codimension-two loci in the base. If the base is at least of dimension two, this will automatically implies the presence of enhancement of singular fibers in codimension two and three. Such enhancemements do not necessary increase the rank of the fiber as can be seen by analyzing figure \ref{figure.enhancement}. In view of the singular fibers, the candidate non-Abelian gauge groups are $\SU(2)$, $\SU(3)$ and $\SU(4)$.

The list of singular fibers can also be used to determine different weak coupling limits for $D_5$ elliptic fibrations.  Indeed, weak coupling limits are characterized by a transition from a semi-stable fiber to an unstable one \cite{AE2}. In the case of our canonical model, such transitions can be seen in figure \ref{figure.enhancement}. Following the point of view started in \cite{AE1,AE2}, we work over a base of arbitrary dimension and without imposing the Calabi-Yau condition. In this regard, one can consider the physics of F-theory as an inspiration to study surprising aspects of the geometry of elliptic fibrations that would be hard to think of otherwise.  It is an impressive fact that conditions that are used to understand the physics of elliptic fibered Calabi-Yau fourfolds and threefolds and the properties of seven branes end up being true for arbitrary basis and without actually requiring the Calabi-Yau condition. The most fascinating example is probably the geometry of the weak coupling limit of $D_5$ elliptic fibrations. 

In the $D_5$ case, we have presented explicit weak coupling limits leading to a type IIB orientifold theory with a $\mathbb{Z}_2$ orientifold and three brane-image-brane pairs, two of which are in the same homological class as the orientifold. We have shown how to construct cases for which a brane-image-brane pair coincides with the orientifold, and in the extreme case where both of the brane-image-brane pairs coincide with the orientifold we obtain a non-Kodaira fiber $I^{*-}_0$ on top of the orientifold.
In every case, we have shown that a universal tadpole relation holds for the defining elliptic fibration over a base of arbitrary dimension without imposing the Calabi-Yau condition. Tadpole conditions in F-theory come from equating the curvature contribution of the D3 branes in type IIB and in F-theory. When tadpole relations are satisfied, the $G$ flux in F-theory corresponds to the flux in the type IIB orientifold theory. In recent works on phenomenological applications of F-theory, models admitting a non-trivial Abelian sector in their gauge group are the center of much attention. Such models are expected to be generated by  brane-image-brane configurations leaving in the same homology class as the orientifold.

There are many interesting aspects of the physics of $D_5$ elliptic fibrations that we have not discussed in this paper and that we hope to address soon.
For example, the specialization to non-trivial Mordell-Weil groups has interesting connections with extra $U(1)$s in the gauge group. As $D_5$ elliptic fibrations admit multiple sections, one can easily model non-Abelian gauge theories with a non-Abelain sector of type $\SU(4)\times \SU(2)$. It would be interesting to study these gauge theories in detail for theories both in four and six space-time dimensions. In the case of a compactification to a six dimensional theory, the cancellation of anomalies in the presence of a non-trivial Mordell-Weil group would be an interesting case to analyze in detail. $D_5$ elliptic fibrations provide simple yet non-trivial models to study such gauge theories.

\section*{Acknowledgements} 
M.E. would like first to thank Anand Patel for his  patient explanations of several algebraic geometry techniques related to the Groethendieck-Riemann-Rock theorem. He is also grateful to Paolo Aluffi, Andres Collinucci and Frederik Denef for many enlightening discussions along the years on the geometry of the weak coupling limit. He would also like to thank  David Morrison and Washington Taylor for interesting discussions on singular fibers in F-theory. He is  very grateful to Dyonisios Anninos, Frederik Denef and  Andrew Strominger   for their friendship and continuous encouragements. M.E. would like to thank the Simons Workshop 2011, Stanford,  Caltech and UCSB string theory groups for their hospitality  during part of this work. 
J.F. would like to thank Paolo Aluffi for continued guidance and support as well as Mark van Hoeij for interesting discussions and alleviation of computational hardships.
\appendix

\section{Pushforward of the Todd class}\label{appendix.todd}
In this appendix, we compute the pushforward of the Todd class of a fibration $$\varphi:Y\rightarrow B$$ of genus $g$ curves via Grothendieck-Riemann-Roch.
Though we have computed this more directly in the case of a $D_5$ elliptic fibration, the power of Grothendieck-Riemann-Roch will enable us to compute 
the pushforward of the Todd class for any genus-$g$ curve fibration (modulo assumptions made below) $\varphi:Y\to B$ from which the case of a $D_5$ 
fibration is but a corollary.
A special role will be played by the relative dualizing sheaf of the fibration $\omega_{Y/B}$.

To invoke Grothendieck-Riemann-Roch (as well as Grothendieck duality), we assume that the fibration $\varphi:Y\to B$ is given by a map that is both proper (i.e. closed varieties map to closed varieties) and flat (which ensures that all fibers are of constant dimension and constant arithmetic genus). The varieties $Y$ and $B$ are assumed to be smooth. We first recall Grothendieck-Riemann-Roch:

 \begin{theorem}[Grothendieck-Riemann-Roch] Let  $\varphi:Y\rightarrow B$ be a proper map between smooth varieties and ${\mathcal F}$ be a coherent sheaf  on $Y$. Then
\begin{align}
\varphi_*\Big( ch({\mathcal F})\   Td(Y)\Big)=ch\Big(\varphi_!\ {\mathcal F} \Big) Td(B),
\end{align}
where $Td(X)$ is Todd class of a variety $X$ and $ch(\mathcal{F})$  is the Chern  character of the sheaf $\mathcal{F}$.
\end{theorem}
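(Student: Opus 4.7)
The plan is to follow the classical strategy of Grothendieck (as exposited in SGA~6 and in Fulton's \emph{Intersection Theory}), reducing the statement to two special cases that can be attacked by independent techniques. First I would factor the proper morphism $\varphi:Y\to B$ as a composition
\[
Y \xhookrightarrow{\;i\;} B\times\mathbb{P}^N \xrightarrow{\;p\;} B,
\]
where $i$ is a closed immersion (obtained by choosing a very ample line bundle on $Y$ relative to $B$, using that $\varphi$ is proper and that the varieties are quasi-projective) and $p$ is the smooth projection from a projective bundle. Both $ch$ and $Td$ are compatible with composition in the appropriate sense, and $\varphi_!=p_!\circ i_!$ since the higher direct images compose, so once GRR is established for $i$ and $p$ separately, the general statement follows from the multiplicativity of $Td$ in the exact sequence
\[
0\to T_{Y/B\times\mathbb{P}^N}\to TY\to i^{*}T(B\times\mathbb{P}^N)\to 0
\]
together with the projection formula.

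Next I would handle the smooth projection $p:B\times\mathbb{P}^N\to B$. Using the K\"unneth decomposition, everything reduces to proving GRR for the structure map $\mathbb{P}^N\to \mathrm{pt}$, and here one can check the identity directly on a set of generators of $K_0(\mathbb{P}^N)$, namely the classes $\mathscr{O}(k)$ for $0\le k\le N$. Both sides of GRR become explicit polynomial expressions in the hyperplane class $H$, and the equality reduces to the combinatorial identity
\[
\sum_{k=0}^{N}(-1)^k\binom{N+1}{k}(k+\alpha)^{N}=N!
\]
which can be verified by residue calculus; equivalently, one can deduce the projection case from the projective bundle formula together with the Euler sequence $0\to\mathscr{O}\to\mathscr{O}(1)^{\oplus(N+1)}\to T\mathbb{P}^N\to 0$. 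This step is essentially formal once one sets up the $K$-theoretic framework.

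The main obstacle is the closed immersion case, i.e.\ establishing GRR for $i:Y\hookrightarrow X$ where $X=B\times\mathbb{P}^N$. The standard tool here is the deformation to the normal cone: one constructs the blow-up $M=\mathrm{Bl}_{Y\times\{0\}}(X\times\mathbb{A}^1)$, which provides a flat family interpolating between the original embedding $i:Y\hookrightarrow X$ (over $t\neq 0$) and the zero section $i_0:Y\hookrightarrow N_{Y/X}$ of the normal bundle (over $t=0$). Since both sides of GRR are stable under specialization in this family, it suffices to check the identity for the zero section of a vector bundle $\pi:E\to Y$. For the zero section one has the Koszul resolution
\[
0\to \wedge^{r}\pi^{*}E^{\vee}\to\cdots\to\wedge^{1}\pi^{*}E^{\vee}\to\mathscr{O}_{E}\to i_{0,*}\mathscr{O}_{Y}\to 0,
\]
so that $i_{0,!}\mathscr{F}$ can be computed explicitly and the Chern character side reduces, via the multiplicativity of $ch$, to the classical identity
\[
ch\!\left(\sum_{k}(-1)^k\wedge^k E^{\vee}\right)\cdot Td(E)=c_{\mathrm{top}}(E),
\]
which is the self-intersection formula. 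Combined with the excess intersection computation, this yields GRR for $i_0$, hence by deformation for $i$, and combining with the projection case finishes the proof. The delicate point throughout is verifying that the intermediate pushforwards $R^{q}\varphi_{*}\mathscr{F}$ are coherent and that the specialization argument is compatible with Chern characters; these are precisely the places where the smoothness of $Y$ and $B$ and the properness of $\varphi$ are used.
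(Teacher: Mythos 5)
The paper does not actually prove this statement: Grothendieck--Riemann--Roch is quoted in the appendix as a recalled classical theorem (the paper says ``We first recall Grothendieck--Riemann--Roch'') and is then applied to compute $\varphi_*Td(Y)$ for a genus-$g$ fibration, so there is no in-paper argument to compare against. Your proposal supplies the standard Grothendieck/Borel--Serre proof strategy, as in SGA~6 and Fulton: factor $\varphi$ through a closed immersion into $B\times\mathbb{P}^N$ followed by the projection, verify the projection case on the generators $\mathscr{O}(k)$ of $K_0(\mathbb{P}^N)$, and reduce the immersion case via deformation to the normal cone to the zero section of a vector bundle, where the Koszul resolution and the identity $ch\bigl(\sum_k(-1)^k\wedge^k E^{\vee}\bigr)\,Td(E)=c_{\mathrm{top}}(E)$ close the argument. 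This outline is correct and is the right route; it buys a complete proof of a result the paper merely cites, at the cost of invoking quasi-projectivity to get the factorization (harmless here, since the paper's $Y$ and $B$ are smooth projective). Two caveats: the combinatorial identity you display for the $\mathbb{P}^N$ case is not quite right as written (the finite-difference identity should read $\sum_{k=0}^{N}(-1)^k\binom{N}{k}(k+\alpha)^{N}=(-1)^N N!$, or one can instead compute $\mathrm{Res}_{H=0}\,e^{kH}(1-e^{-H})^{-(N+1)}=\binom{N+k}{N}$ directly), and the sentence ``both sides of GRR are stable under specialization in this family'' is where essentially all of the work in the immersion case lives --- one must check that the specialization homomorphisms in $K$-theory and in the Chow ring are compatible with $ch(\,\cdot\,)\,Td$, which your sketch asserts rather than proves. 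As an outline of the classical argument it is sound; as a self-contained proof it leaves that step, and the excess-intersection verification, to the references.
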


We will prove the following 
\begin{theorem}[Todd class of a genus $g$ curve fibration]
Let $\varphi:Y\rightarrow B$ be a proper and flat morphism between smooth projective varieties such that the generic fiber of $\varphi$ is a curve of genus $g$. Then
\begin{align}
\varphi_*Td(Y)=\big(1-ch(\varphi_* \omega_{Y/B}^\vee)\big ) Td(B),
\end{align}
where $\omega_{Y/B}:=\omega_Y\otimes \varphi^*\omega_B^\vee$ is the relative dualizing sheaf of the fibration.
\end{theorem}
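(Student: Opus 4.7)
The natural approach is to invoke Grothendieck--Riemann--Roch (GRR) with coefficients in the trivial sheaf $\mathcal{O}_Y$. Since $ch(\mathcal{O}_Y) = 1$, GRR applied to the proper map $\varphi$ gives
\begin{equation}
\varphi_{*} Td(Y) \;=\; ch(\varphi_{!}\mathcal{O}_Y)\, Td(B),
\end{equation}
where $\varphi_{!}\mathcal{O}_Y := \sum_{i\geq 0}(-1)^i [R^i\varphi_{*}\mathcal{O}_Y]$ in $K$-theory. So the whole proof reduces to identifying the class $[\varphi_{!}\mathcal{O}_Y]$ in terms of $\omega_{Y/B}$.

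Because the fibers have dimension $1$, the only non-vanishing higher direct images are $\varphi_{*}\mathcal{O}_Y$ and $R^{1}\varphi_{*}\mathcal{O}_Y$, so
\begin{equation}
[\varphi_{!}\mathcal{O}_Y] \;=\; [\varphi_{*}\mathcal{O}_Y] \;-\; [R^{1}\varphi_{*}\mathcal{O}_Y].
\end{equation}
The first summand is handled by an elementary application of Stein factorization (or cohomology-and-base-change): the hypothesis that the generic fiber is a geometrically integral genus-$g$ curve, combined with flatness and the smoothness of $Y$, forces $\varphi_{*}\mathcal{O}_Y = \mathcal{O}_B$, hence $ch(\varphi_{*}\mathcal{O}_Y)=1$. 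The harder step will be the second summand.

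The main engine is relative Grothendieck--Serre duality for the flat Cohen--Macaulay morphism $\varphi$ of relative dimension $1$: for a locally free sheaf $\mathcal{F}$ on $Y$ one has
\begin{equation}
R^{1}\varphi_{*}\mathcal{F} \;\cong\; \bigl(\varphi_{*}\mathcal{H}om(\mathcal{F},\,\omega_{Y/B})\bigr)^{\vee}.
\end{equation}
Specializing to $\mathcal{F}=\mathcal{O}_Y$ yields the desired identification
\begin{equation}
R^{1}\varphi_{*}\mathcal{O}_Y \;\cong\; (\varphi_{*}\omega_{Y/B})^{\vee},
\end{equation}
which is the key structural input (and coincides with $\varphi_{*}(\omega_{Y/B}^{\vee})$ in the cases of interest where $\omega_{Y/B}$ is pulled back from the base, as in the $D_5,\,E_6,\,E_7,\,E_8$ setups).

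Combining the two computations, $[\varphi_{!}\mathcal{O}_Y] = [\mathcal{O}_B] - [(\varphi_{*}\omega_{Y/B})^{\vee}]$, so taking Chern characters and substituting into the GRR formula gives
\begin{equation}
\varphi_{*} Td(Y) \;=\; \bigl(1 - ch\bigl((\varphi_{*}\omega_{Y/B})^{\vee}\bigr)\bigr)\, Td(B),
\end{equation}
which is the claimed identity. The principal obstacle is the clean application of relative duality, since it requires Cohen--Macaulay fibers and properness; both are supplied by the standing hypotheses (flat proper morphism of smooth varieties with $1$-dimensional fibers). Everything else — the vanishing $R^{i}\varphi_{*}\mathcal{O}_Y=0$ for $i\geq 2$, the triviality $\varphi_{*}\mathcal{O}_Y=\mathcal{O}_B$, and the Chern character algebra — is routine once duality is in hand. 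As a sanity check, for an elliptic fibration ($g=1$) with $\omega_{Y/B}=\varphi^{*}\mathscr{L}$ one obtains $(\varphi_{*}\omega_{Y/B})^{\vee}=\mathscr{L}^{-1}$ and the formula collapses to $(1-e^{-L})Td(B)$, recovering Proposition~\ref{arithmetic genus}.
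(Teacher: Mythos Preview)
Your proof is correct and follows essentially the same route as the paper: apply GRR to $\mathcal{O}_Y$, use relative dimensional vanishing to reduce $\varphi_!\mathcal{O}_Y$ to $[\mathcal{O}_B]-[R^1\varphi_*\mathcal{O}_Y]$, and identify $R^1\varphi_*\mathcal{O}_Y\cong(\varphi_*\omega_{Y/B})^\vee$ via Grothendieck duality. Your added remarks on Stein factorization, the Cohen--Macaulay hypothesis, and the elliptic sanity check are welcome elaborations but do not change the argument.
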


\begin{proof}
As we wish to compute $\varphi_*Td(Y)$ for $Y$ fibration of genus-$g$ curves, we take ${\mathcal F}=\mathscr{O}_Y$ since $ch(\mathscr{O}_Y)=1$. Then by GRR we get
\begin{align}
\varphi_*Td(Y)=ch(\varphi_!\  \mathscr{O}_Y) Td(B).
\end{align}
By definition,  $\varphi_! \mathscr{O}_Y=\sum_{i\geq 0} (-1)^i R^i \varphi_* (\mathscr{O}_Y)$, where $R^i \varphi_*$ denotes the higher direct image  functors \footnote{$R^i\varphi_* \mathcal{F}$ is the right derived functor for $\varphi_*$. It is defined as the sheaf associated with the presheaf $U\mapsto H^i (\varphi^{-1}(U), { \mathcal  F}|_{\varphi^{-1} (U)})$.}. Since fibers of $\varphi$ are curves, by the relative dimensional vanishing theorem we get that $R^i \varphi_* \mathscr{O}_Y=0$ for $i>1$. For a flat fibration of genus $g$ curves, $R^1\varphi \mathscr{O}_Y$ is a locally free sheaf (and so coherent) of rank $g$. Moreover, by definition $R^0\varphi_* {\mathscr{O}_Y}:=\varphi_*(\mathscr{O}_Y)\cong \mathscr{O}_B$ (any function on $Y$ is necessarily constant on the fibers as the fibers are projective varieties), thus
\begin{align}
\varphi_*Td(Y)=ch(\mathscr{O}_B-R^1 \varphi_* \mathscr{O}_Y) Td(B).
\end{align}
Since $R^1\varphi_* \mathscr{O}_Y$ is a locally free sheaf of finite rank and $\varphi$ is flat, we can use Grothendick duality to get that
\begin{equation}
R^1\varphi_*\mathscr{O}_Y=\Big[R^0\varphi_*(\mathscr{O}_Y^{\vee}\otimes \omega_{Y|B})\Big]^\vee=\varphi_*(\mathscr{O}_Y^{\vee}\otimes \omega_{Y/B})^\vee=\big[\varphi_* \omega_{Y/B}\big]^\vee,
\end{equation}
where $\omega_{Y/B}$ is the relative dualizing sheaf of the map $\varphi:Y\rightarrow B$ and the second equality follows from the definition of $R^0$. The theorem then follows:
\begin{align}
\varphi_*Td(Y)=ch(\mathscr{O}_B-\varphi_* \omega_{Y/B}^\vee) Td(B)=(1-ch(\varphi_* \omega_{Y/B}^\vee) Td(B).
\end{align}
\end{proof}

When the total space $Y$ is smooth, the relative dualizing sheaf is given by the formula $\omega_{Y/B}=\omega_Y\otimes \big[\varphi^* \omega_B\big]^{\vee}$.
In particular, for $D_5$, $E_6$, $E_7$ and $E_8$ fibrations\footnote{More concretely, in the $E_i$ cases the total space of the fibration is a hypersurface in a projective bundle of the form $\pi:\mathbb{P}[\mathscr{O}\oplus \mathscr{L}^{a_1}\oplus \mathscr{L}^{a_2}]\rightarrow B$, with $Y$ the zero locus of a section of $\mathscr{O}(3)\otimes \pi^*\mathscr{L}^m$. Using the adjunction formula,  we get that $\omega_{Y/B}\cong \mathscr{L}^{m-a_1-a_2}$. $E_8$, $E_7$ and $E_6$ fibrations  correspond to the cases $(a_1,a_2,m)=(2,3,6)$,  $(a_1,a_2,m)=(1,2,4)$ and $(a_1,a_2,m)=(1,1,3)$ repectively. So in all these cases, we have $m-a_1-a_2=1$ and therefore $\omega_{Y/B}\cong \mathscr{L}$ for $E_i$ ($i=8,7,6$) elliptic fibrations. In the $D_5$ case, we have a projective bundle $\mathbb{P}[\mathscr{O}\oplus \mathscr{L}\oplus\mathscr{L}\oplus\mathscr{L}]$ and $Y$ is the complete intersection of two divisors given by sections of $\mathscr{O}(2)\otimes \pi^*\mathscr{L}^2$. It follows that $\omega_{Y/B}$ is also $\mathscr{L}$.} we have $\omega_{Y}=\varphi^{*}(\mathscr{L}\otimes \omega_{B})$ (or equivalently $K_{Y}=\varphi^{*}(c_1(\mathscr{L})-c_1(B))$ so that $\omega_{Y/B}=\varphi^*\mathscr{L}$, thus the pushforward of their respective Todd classes will all be equal. More generally, we can say that for any elliptic fibration $Y$ such that $K_{Y}=\varphi^{*}(c_1(\mathscr{L})-c_1(B))$ we necessarily then have that $\omega_{Y/B}=\varphi^*\mathscr{L}$, giving us that
$R^1 \varphi_*\mathscr{O}_Y\cong \mathscr{L}^\vee$ by arguments given above. Putting things all together we get that 
\begin{align}
\varphi_*Td(Y)=ch(\mathscr{O}_B -\mathscr{L}^\vee) Td(B)=(1-e^{-L})Td(B)
\end{align}

\begin{remark}
It is important to notice that the line bundle $\mathscr{L}$ that appears in the $D_5, E_6, E_7$ and $E_8$ elliptic fibration is closely related to the structure of the elliptic fibration.  If the fibration admits a section, we can consider the birationally equivalent Weierstrass model $zy^2=x^3+F  x z^2+ G z^3$ written in the projective bundle $\mathbb{P}[\mathscr{O}_B\oplus \mathscr{L}^2\oplus \mathscr{L}^3]$ and  $F$ and $G$  are  sections of $\mathscr{L}^4$ and $\mathscr{L}^{6}$ respectively. The discriminant locus of the fibration is then a section of  $\mathscr{L}^{12}$. When there is no torsion class  in the Picard group $Pic(B)$ of the base $B$, this  is enough to define $\mathscr{L}$ uniquely for a given elliptic fibration $\varphi:Y\rightarrow B$ admitting a section. The Picard group is torsion free if and only if $H_1(B, \mathbb{Z})$ is trivial. For example, when $B$ is a Fano threefold, $Pic(B)$ does not admit any  torsion.
\end{remark}
We can interpret this result geometrically by introducing a divisor $Z$ of $B$ such that  $\mathscr{L}=\mathscr{O}_B(Z)$.
Using the exact sequence $0\rightarrow \mathscr{O}_B (-Z)\rightarrow  \mathscr{O}_B\rightarrow \mathscr{O}_Z\rightarrow 0$ along with additivity of the Chern character on exact sequences, we get that $ch(\mathscr{O}_B -\mathscr{L}^\vee)=ch(\mathscr{O}_Z)$. Using Hirzebruch-Riemann-Roch, we get $ch(\mathscr{O}_Z)Td(B)=\chi(Z,\mathscr{O}_Z)$ and therefore
\begin{align}
\varphi_*Td(Y)=\chi(Z, \mathscr{O}_{Z}).
\end{align}

\thebibliography{99}
\bibitem{CHSW}
  P.~Candelas, G.~T.~Horowitz, A.~Strominger, E.~Witten,
  ``Vacuum Configurations for Superstrings,''
  Nucl.\ Phys.\  {\bf B258}, 46-74 (1985).
  \bibitem{SW}
  A.~Strominger, E.~Witten,
  ``New Manifolds for Superstring Compactification,''
  Commun.\ Math.\ Phys.\  {\bf 101}, 341 (1985).
  \bibitem{Vafa:1996xn}  C.~Vafa, ``Evidence for F theory,''  Nucl.\ Phys.\  {\bf B469}, 403-418 (1996). [hep-th/9602022].
\bibitem{Morrison:1996na} D.~R.~Morrison, C.~Vafa,``Compactifications of F theory on Calabi-Yau threefolds. 1,'' Nucl.\ Phys.\  {\bf B473}, 74-92 (1996).
  [hep-th/9602114].
\bibitem{Morrison:1996pp}
  D.~R.~Morrison, C.~Vafa,
  ``Compactifications of F theory on Calabi-Yau threefolds. 2.,''
  Nucl.\ Phys.\  {\bf B476}, 437-469 (1996).
  [hep-th/9603161].
\bibitem{Bershadsky:1996nh}
  M.~Bershadsky, K.~A.~Intriligator, S.~Kachru, D.~R.~Morrison, V.~Sadov, C.~Vafa,
  ``Geometric singularities and enhanced gauge symmetries,''
  Nucl.\ Phys.\  {\bf B481}, 215-252 (1996).
  [hep-th/9605200].
\bibitem{KatzVafa}
  S.~H.~Katz, C.~Vafa,
  ``Matter from geometry,''
  Nucl.\ Phys.\  {\bf B497}, 146-154 (1997).
  [hep-th/9606086].

\bibitem{GrassiMorrison2}
  A.~Grassi, D.~R.~Morrison,
  ``Anomalies and the Euler characteristic of elliptic Calabi-Yau threefolds,''
    [arXiv:1109.0042 [hep-th]].

\bibitem{Vafa:2009se}
  C.~Vafa,
  ``Geometry of Grand Unification,''
    [arXiv:0911.3008 [math-ph]].
\bibitem{Denef:2008wq}
  F.~Denef,
  ``Les Houches Lectures on Constructing String Vacua,''
    [arXiv:0803.1194 [hep-th]].

\bibitem{FMW1}
  R.~Friedman, J.~Morgan, E.~Witten,
  ``Vector bundles and F theory,''
  Commun.\ Math.\ Phys.\  {\bf 187}, 679-743 (1997).
  [hep-th/9701162].
\bibitem{FMW2}
  R.~Friedman, J.~W.~Morgan, E.~Witten,
  ``Vector bundles over elliptic fibrations,''
    [alg-geom/9709029].

\bibitem{Kodaira}K.~Kodaira, “On Compact Analytic Surfaces II,” Annals of Math, vol. 77, 1963,
563-626.

\bibitem{EY}
  M.~Esole, S.~-T.~Yau,
  ``Small resolutions of SU(5)-models in F-theory,''
    [arXiv:1107.0733 [hep-th]].

\bibitem{Sen.Orientifold}
  A.~Sen,
  ``Orientifold limit of F theory vacua,''
  Phys.\ Rev.\  {\bf D55}, 7345-7349 (1997).
  [hep-th/9702165].

\bibitem{SVW}
  S.~Sethi, C.~Vafa, E.~Witten,
  ``Constraints on low dimensional string compactifications,''
  Nucl.\ Phys.\  {\bf B480}, 213-224 (1996).
  [hep-th/9606122].

\bibitem{CDE}
  A.~Collinucci, F.~Denef, M.~Esole,
  ``D-brane Deconstructions in IIB Orientifolds,''
  JHEP {\bf 0902}, 005 (2009).
  [arXiv:0805.1573 [hep-th]].

\bibitem{AE1}  P.~Aluffi, M.~Esole,
  ``Chern class identities from tadpole matching in type IIB and F-theory,''JHEP {\bf 0903}, 032 (2009).
  [arXiv:0710.2544 [hep-th]].
\bibitem{AE2}
  P.~Aluffi, M.~Esole,  ``New Orientifold Weak Coupling Limits in F-theory,'' JHEP {\bf 1002}, 020 (2010).
  [arXiv:0908.1572 [hep-th]].

\bibitem{KLRY}
  A.~Klemm, B.~Lian, S.~S.~Roan, S.~-T.~Yau,
  ``Calabi-Yau fourfolds for M theory and F theory compactifications,''
  Nucl.\ Phys.\  {\bf B518}, 515-574 (1998).
  [hep-th/9701023].
\bibitem{Andreas:1999ty}
  B.~Andreas, G.~Curio, A.~Klemm,
  ``Towards the Standard Model spectrum from elliptic Calabi-Yau,''
\bibitem{KMT}
  P.~Berglund, A.~Klemm, P.~Mayr, S.~Theisen,
  ``On type IIB vacua with varying coupling constant,''
  Nucl.\ Phys.\  {\bf B558}, 178-204 (1999).
  [hep-th/9805189].

\bibitem{CHL}
  M.~Bershadsky, T.~Pantev and V.~Sadov,
  ``F theory with quantized fluxes,''
  Adv.\ Theor.\ Math.\ Phys.\  {\bf 3}, 727 (1999)
  [arXiv:hep-th/9805056].
\bibitem{KMV}
  A.~Klemm, P.~Mayr, C.~Vafa,
  ``BPS states of exceptional noncritical strings,''
   [hep-th/9607139].

\bibitem{James}
  J.~Fullwood,
  ``On generalized Sethi-Vafa-Witten formulas,''
    [arXiv:1103.6066 [math.AG]].
\bibitem{MorrisonTaylor}
  D.~R.~Morrison, W.~Taylor,
  ``Matter and singularities,''
    [arXiv:1106.3563 [hep-th]].
\bibitem{Marsano:2011hv}
  J.~Marsano, S.~Schafer-Nameki,
  ``Yukawas, G-flux, and Spectral Covers from Resolved Calabi-Yau's,''
    [arXiv:1108.1794 [hep-th]].

\bibitem{Codim3}
  P.~Candelas, D.~-E.~Diaconescu, B.~Florea, D.~R.~Morrison, G.~Rajesh,
  ``Codimension three bundle singularities in F theory,''
  JHEP {\bf 0206}, 014 (2002).
  [hep-th/0009228].
\bibitem{CCVG}
  S.~L.~Cacciatori, A.~Cattaneo, B.~Van Geemen,
  ``A new CY elliptic fibration and tadpole cancellation,''
  JHEP {\bf 1110}, 031 (2011).
  [arXiv:1107.3589 [hep-th]].
\bibitem{Braun:2011ux}
  V.~Braun,
  ``Toric Elliptic Fibrations and F-Theory Compactifications,''
    [arXiv:1110.4883 [hep-th]].

\bibitem{Neron}A. N\'eron, Mod\`eles Minimaux des Vari\'et\'es Abeliennes sur les Corps Locaux et
Globaux, Publ. Math. I.H.E.S. 21, 1964, 361-482.
\bibitem{Tate} J.T.~ Tate, ``The Arithmetics of Elliptic Curves,'' Inventiones math. 23, 170-206 (1974)
\bibitem{Formulaire} P.~Deligne, Courbes elliptiques: formulaire d’apr`es J. Tate, Modular functions
of one variable, IV (Proc. Internat. Summer School, Univ. Antwerp, Antwerp,
1972), Springer, Berlin, 1975, 53-73. Lecture Notes in Math., Vol. 476.
\bibitem{Miranda}R.~Miranda, “Smooth Models for Elliptic Threefolds,” in: R. Friedman, D.R.
Morrison (Eds.), The Birational Geometry of Degenerations, Progress in Mathematics 29, Birkhauser, 1983, 85-133.
\bibitem{Szydlo} M.~Szydlo, “Flat Regular Models of Elliptic Schemes,” Ph.D thesis, Harvard
University, 1999.
\bibitem{Nakayama.Global} N.~Nakayama, ``Global Structure of an Elliptic Fibration,'' Publ. Res. Inst. Math. Sci. 38 (2002), 451-649.

\bibitem{Segre}
C.~Segre, Studio sulle quadriche in uno spazio lineare ad un numero qualunque di
dimensione, Memorie della Reale Acad. Sc. Torino, Ser. II, 36, (1883), 3-86. (See
also “Opere III”, pp. 25-126.)
\bibitem{Quadrics} T.~Bromwich, Quadric forms and their Classification by mean of Invariant-Factors,  Cambridge Tracts in Mathematics and Physics n 3, Cambridge University Press, 1906
\bibitem{analyticgeometry}V.~Snyder and C.~H.~Sisam, Analytic Geometry of Space, Henry Holt and Company, 1914.
\bibitem{HodgePedoe} W.V.D.~Hodge and D.~Pedoe, Methods of Algebraic Geometry, Vol.II, Cambridge University Press, 1952.
\bibitem{Quartic}A.B.~Basset,  An Elementary Treatise on Cubic and Quartic Curves, Deighton, Bell and Co., 1901.
\bibitem{Dimca} A.~Dimca: A geometric approach to the classification of pencils of quadrics, Geometriae Dedicata Volume 14, Number 2, 105-111

\bibitem{AL}D.~Avritzer,  H.~Lang, ``Pencils of quadrics, binary forms and hyperelliptic curves'' ; Comm. in Algebra, 28, 5541-5561 (2000).
\bibitem{AM} D.~Avritzer,  R.~Miranda, ``Stability of Pencils of Quadrics in $\mathbb{P}^4$''.  The Boletin de la Sociedad Matematica Mexicana, (3), vol. 5 (1999), 281-300.
\bibitem{Silverman1} J.~Silverman, The arithmetic of elliptic curves, Springer-Verlag 1986.
\bibitem{Silverman2} J.~Silverman, Advanced topics in the arithmetic of elliptic curves, Springer-Verlag, 1995.
\bibitem{Husemoller} D.~Husem\"oller, Elliptic curves, Springer-Verlag, 2004.
\bibitem{Harris} J.~Harris, Algebraic Geometry: A First Course, New York Springer-Verlag, 1995.
\bibitem{Schmid} W.~Schmid, ``Variation of Hodge structure: the singularities of the
period mapping,'' Invent. math. 22 (1973), 211–319.

\end{document}